\keywords{Worst-case input generation, resource analysis, session types, concurrent programming, amortized analysis.}
\newcommand{\uscore}{\mbox{\tt\char`\_}}
\newcommand{\m}[1]{\mathsf{#1}}
\newcommand{\inputs}[1]{\text{inputs} (#1)}
\newcommand{\outputs}[1]{\text{outputs} (#1)}
\newcommand{\proc}[1]{\m{proc} (#1)}
\newcommand{\msg}[1]{\m{msg} (#1)}
\newcommand{\at}{\mathbin{@}}
\newcommand{\true}{\m{true}}
\newcommand{\false}{\m{false}}
\newcommand{\unittype}{\m{unit}}
\newcommand{\bool}{\m{bool}}
\newcommand{\integer}{\m{int}}
\newcommand{\judgment}[1]{\hfill \fbox{#1}}
\newtheoremstyle{cited}%
{.5\baselineskip\@plus.2\baselineskip
	\@minus.2\baselineskip}
{.5\baselineskip\@plus.2\baselineskip
	\@minus.2\baselineskip}
{\itshape}
{\parindent}
{}
{.}
{.5em}
{\textsc{\thmname{#1}} \thmnote{\normalfont#3}}
\theoremstyle{cited}
\newtheorem{citedthm}[thm]{Theorem}
\theoremstyle{plain} 
\begin{document}

\title[Worst-Case Input Generation for Concurrent Programs]{Worst-Case Input Generation for Concurrent Programs under Non-Monotone Resource Metrics}

\author[L.~Pham]{Long Pham\lmcsorcid{0000-0001-5153-8140}}
\author[J.~Hoffmann]{Jan Hoffmann\lmcsorcid{0000-0001-8326-0788}}

\address{Carnegie Mellon University}	
\email{longp@andrew.cmu.edu, janh@andrew.cmu.edu}  







\begin{abstract}
	Worst-case input generation aims to automatically generate inputs that exhibit
	the worst-case performance of programs.
	It has several applications, and can, for example, detect vulnerabilities to
	denial-of-service (DoS) attacks.
	However, it is non-trivial to generate worst-case inputs for concurrent
	programs, particularly for resources like memory where the peak cost depends
	on how processes are scheduled.

	This article presents the first sound worst-case input generation algorithm
	for concurrent programs under non-monotone resource metrics like memory.
	The key insight is to leverage resource-annotated session types and symbolic
	execution.
	Session types describe communication protocols on channels in process calculi.
	Equipped with resource annotations, resource-annotated session types not only
	encode cost bounds but also indicate how many resources can be reused and
	transferred between processes.
	This information is critical for identifying a worst-case execution path
	during symbolic execution.
	The algorithm is sound: if it returns any input, it is guaranteed to be a
	valid worst-case input.
	The algorithm is also relatively complete: as long as resource-annotated
	session types are sufficiently expressive and the background theory for SMT
	solving is decidable, a worst-case input is guaranteed to be returned.
	A simple case study of a web server's memory usage demonstrates the utility of
	the worst-case input generation algorithm.
\end{abstract}

\maketitle


\section{Introduction}

Understanding the worst-case performance of programs and when it is triggered
helps programmers spot performance bugs and take preemptive measures against
algorithmic complexity attacks.
As pioneered by WISE~\cite{Burnim2009}, symbolic execution is a well-studied
technique for worst-case input generation.
In WISE~\cite{Burnim2009} and SPF-WCA~\cite{Luckow2017}, they first symbolically
execute a program on all inputs of small sizes to identify a worst-case
execution path $p_{\text{short}}$.
This path is then generalized to a longer worst-case execution path
$p_{\text{long}}$ for a larger input.
During the symbolic program along the execution path $p_{\text{long}}$, its
corresponding path constraint is collected.
Finally, solving the path constraint yields an inferred worst-case input for a
large input size.
Since these techniques do not explore the entire search space of large inputs,
they are scalable but unsound.

To achieve soundness in worst-case input generation, Wang and
Hoffmann~\cite{Wang2019} propose type-guided worst-case input generation for
functional programming.
In their algorithm, symbolic execution of a functional program under analysis is
guided by a \emph{resource-annotated type} $\tau_{\text{ra}}$, which is
automatically inferred by the type-based resource analysis technique Automatic
Amortized Resource Analysis (AARA)~\cite{Hofmann2003,Hoffmann2010,Hoffmann2012}.
The resource-annotated type $\tau_{\text{ra}}$ encodes a sound (but not
necessarily tight) polynomial worst-case bound.
To identify a worst-case execution path, Wang and Hoffmann's algorithm searches
for an execution path where the cost bound encoded by the resource-annotated
type $\tau_{\text{ra}}$ is tight.
Solving the path constraint of this worst-case execution path, we obtain a valid
worst-case input: it has the same cost as the cost bound captured by the
resource-annotated type $\tau_{\text{ra}}$, which is not only sound but also
tight.

All existing techniques for worst-case input generation, however, cannot handle
the joint setting of
\begin{enumerate*}[label=(\roman*)]
	\item concurrent programming and
	\item non-monotone resource metrics (e.g., memory).
\end{enumerate*}
A resource metric is \emph{non-monotone} if resources can be freed up as well as
consumed.
Worst-case input generation for this joint setting has a practical value.
For example, denial-of-service (DoS) attacks overwhelm the memory of servers,
which are typically concurrent programs.
Hence, the worst-case input generation for concurrent programs under
non-monotone resource metrics will be able to identify vulnerabilities to DoS
attacks.

This article presents the first sound worst-case input generation algorithm for
message-passing concurrent programming under non-monotone resource metrics.
Our work builds on the type-guided worst-case input generation for functional
programming by Wang and Hoffmann~\cite{Wang2019}.
In extending their algorithm from functional programming to message-passing
concurrent programming, the first challenge is to adapt the notion of
\emph{skeletons}, which specify the shapes and sizes of worst-case inputs to be
generated, to message-passing concurrent programming.
Designing skeletons is complicated by the following unique characteristics of
message-passing concurrent programming:
\begin{itemize}
	\item Interaction between channels: the shapes of inputs on different channels
	      may be dependent on one another.
	      This stands in contrast to functional programming where the shapes of
	      inputs, if there are multiple, are independent of one another.
	\item Co-inductive interpretation: inputs to a concurrent program may be
	      infinite.
	\item Intertwining of input and output: input and output are intertwined,
	      unlike in functional programming where all inputs are provided before
	      program execution.
\end{itemize}

Our first contribution is to design a suitable notion of \emph{session
	skeletons} for message-passing concurrent programming.
Session skeletons are built on \emph{session types}~\cite{Honda1993} that
describe communication protocols on channels in process calculi.

In message-passing concurrent programming, monotone costs like work (i.e.,
sequential running time) are independent of how concurrent processes are
scheduled.
Hence, monotone costs in message-passing concurrent programming can be treated
in the same manner as in Wang and Hoffmann's work for functional programming.
Meanwhile, costs like span (i.e., parallel running time) are dependent on
schedules, but they are outside the scope of this article.
Instead, we are interested in work-like, non-monotone costs such as memory.

Non-monotone resource metrics, wherein resources can be consumed as well as
freed up, have two types of costs: \emph{net cost} (i.e., the net quantity of
resources consumed) and \emph{high-water-mark cost} (i.e., the peak net cost
that has been reached).
For example, suppose 3 units of resources (e.g., memory cells) are consumed at
the beginning of computation, but later, 2 units are freed up at the end of the
computation.
In this case, the net cost is $1 = 3 - 2$, while the high-water mark cost is 3
because, at any point during the computation, the maximum net cost is 3.
In message-passing concurrent programming, a worst-case input is defined as an
input with the maximum high-water-mark cost.

The second challenge in worst-case input generation for message-passing
concurrent programming is to identify a worst-case schedule of concurrent
processes, which is crucial for identifying a worst-case input.
While the net cost of a program is independent of the schedules of concurrent
processes, the high-water-mark cost is dependent on the schedules.
Moreover, the operational semantics of concurrent programming languages do not
specify the exact scheduling of processes.
Consequently, haphazardly performing symbolic execution of a concurrent
program does not necessarily reveal the correct high-water-mark cost of a given
execution path.
We must additionally identify a worst-case schedule of concurrent processes.
However, it is non-trivial to learn such a schedule on the fly during symbolic
execution, unless we preprocess the program beforehand.
Thus, Wang and Hoffmann's algorithm cannot directly be extended to
message-passing concurrent programming under non-monotone resource metrics.

To handle the dependency of high-water-mark costs on schedules, we leverage
\emph{resource-annotated session types}~\cite{Das2018}, whose resource
annotations capture
\begin{enumerate*}[label=(\roman*)]
	\item sound bounds on high-water-mark costs and
	\item how many resources can be reused and transferred between processes.
\end{enumerate*}
Thanks to the availability of sound high-water-mark cost bounds in
resource-annotated session types, like Wang and Hoffmann's algorithm, our
worst-case input generation algorithm is sound.
Furthermore, the information about resource transfer enables us to correctly
track high-water-mark costs of execution paths during symbolic execution.

To summarize, in this article, we make the following contributions:
\begin{itemize}
	\item We present the first sound worst-case input generation algorithm for
	      message-passing concurrent programming under non-monotone resource
	      metrics (e.g., memory).
	\item We propose a suitable notion of skeletons. We also address several
	      technical challenges posed by session types in the design of skeletons.
	\item We prove the soundness (i.e., if the algorithm returns anything, it is a
	      valid worst-case input) and relative completeness (i.e., if AARA is
	      sufficiently expressive and the background theory for SMT solving is
	      decidable, a worst-case input is guaranteed to be returned).
	\item We present a case study of worst-case input generation for a web
	      server's memory usage.
\end{itemize}

The article is structured as follows.
\Cref{sec:overview} provides an overview, describing
\begin{enumerate*}[label=(\roman*)]
	\item the challenge of generating worst-case inputs for concurrent programs
	      under non-monotone resource metrics and
	\item how we overcome this challenge.
\end{enumerate*}
\Cref{sec:resource-aware SILL} presents resource-aware SILL, a message-passing
concurrent programming language equipped with resource-annotated session types.
\Cref{sec:session skeletons} defines skeletons and describes the challenges in
their design.
\Cref{sec:worst-case input generation} presents a worst-case input generation
algorithm guided by resource-annotated session types.
\Cref{sec:case study} demonstrates the algorithm through a case study of a web
server.
Finally, \cref{sec:related work} discusses related work, and
\cref{sec:conclusion} concludes the article.



\section{Overview}
\label{sec:overview}

\paragraph{Processes and channels}

This work uses the message-passing concurrent programming language
SILL~\cite{Caires2010,Toninho2013,Pfenning2015}.
Suppose we are given a process $P$ with two channels $c_1$ and $c_2$.
Communication on the channels $c_1$ and $c_2$ can be bidirectional.
The process $P$ uses the channel $c_1$ as a client and provides the channel
$c_2$ as a provider.
\Cref{fig:topology of SILL programs} (a) depicts the process $P$.
The environment that the process $P$ interacts with is called \emph{the external
	world}.
\Cref{fig:topology of SILL programs} (b) depicts a general SILL program
consisting of multiple concurrent processes.
The network of concurrent processes in SILL must be tree-shaped: each channel is
used by exactly one client, instead of being shared by multiple
clients\footnote{SILL's type system was designed to guarantee deadlock freedom.
	Because cyclic dependency among channels may cause a deadlock, SILL disallows
	channels from being shared.
	This restriction results in a tree-shaped network of concurrent processes.
	Although some cycles of channels are benign, SILL's type system is not
	sophisticated enough to handle them.
	\cite{Balzer2017,Balzer2019} present more fine-grained type systems that use
	the acquire-release primitives to achieve deadlock freedom while permitting
	the sharing of channels.}.

\begin{figure}[t]
	\centering
	\includegraphics[width=0.9\columnwidth]{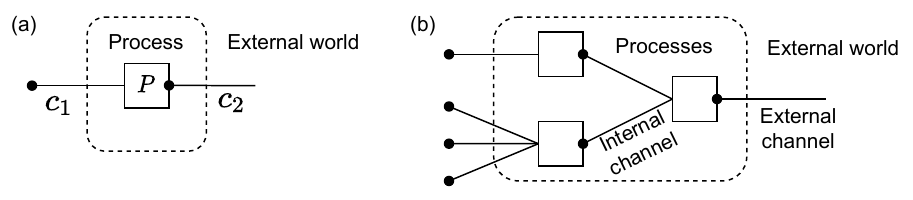}
	\caption{(a) Process $P$ uses channel $c_1$ as a client and provides channel
		$c_2$.
		A dot on a channel denotes the channel's provider.
		(b) General SILL program consisting of multiple concurrent processes
		An internal channel connects two processes; an external channel connects a
		process with the external world.}
	\label{fig:topology of SILL programs}
\end{figure}

We use the following scenario as a running example.
On the channel $c_1$, IP addresses of packets are sequentially sent from the
external world to the process $P$.
The stream of IP addresses may be finite or infinite.
Given the input stream on the channel $c_1$, the process $P$ counts occurrences
of each IP address.
Once the input stream on the channel $c_1$ terminates, $P$ outputs on the
channel $c_2$ the number of IP addresses with at least four occurrences.

Suppose the process $P$ is implemented as follows.
The process maintains a key-value store where keys are IP addresses and values
are the numbers of occurrences.
When a new IP address is encountered, it is added as a key to the key-value
store, incurring the memory cost of 2.
It is because we need one memory cell for the key and another for the value.
When the input stream on the channel $c_1$ terminates, all memory in the
key-value store is released.

\paragraph{Session types}

Session types~\cite{Honda1993} describe communication protocols on channels.
The communication on the channel $c_1$ is described by the session type
\begin{equation}
	\label{eq:session type of input channels in the toy example in the overview}
	\mu X. \oplus \{\m{cons}: \textcolor{red}{\triangleright^{2}} \integer \land X, \m{nil}: \mathbf{1} \}.
\end{equation}
Here, $\mu X$ denotes a recursive session type with a type variable $X$.
The type constructor $\oplus$ means an internal choice, that is, the channel
$c_1$'s provider (i.e., the external world) chooses between labels $\m{cons}$
and $\m{nil}$ and sends the choice.
If the label $\m{cons}$ is chosen, a value of type $\integer$ is sent by the
external world, and we recurse back to $X$.
Conversely, if the label $\m{nil}$ is chosen, the channel $c_1$ is closed.
The resource annotation $\textcolor{red}{\triangleright^{2}}$ will be explained
shortly.

The session type of the channel $c_2$ is $\integer \land \mathbf{1}$.
It means the provider of the channel $c_2$ sends an integer and then closes the
channel by sending the $\m{end}$ message.

\paragraph{Resource annotations}

Resource-aware SILL~\cite{Das2018} incorporates resource annotations into
session types, resulting in resource-annotated session types.
These resource annotations indicate the amount of \emph{potential} necessary to
pay for the computational cost, where the idea of potential comes from the
potential method of amortized analysis for algorithms and data
structures~\cite{Tarjan1985}.
Although potential and resources are similar to each other and hence can used
interchangeably, they are subtly different: potential is an \emph{abstract}
resource used in the potential method, while resources typically refer to
\emph{concrete} computational resources such as time and memory.
In our example, when a previously unseen IP address is encountered, two memory
cells are allocated.
Therefore, in the worst case, we need 2 units of potential to process each IP
address on the channel $c_1$.
This is why the resource-annotated session type of $c_1$ in \cref{eq:session
	type of input channels in the toy example in the overview} contains
$\textcolor{red}{\triangleright^{2}}$.
It denotes that 2 units of potential are transferred from the channel client
(i.e., the external world) to the channel provider (i.e., the process $P$).

Resource-annotated session types are inferred automatically.
Because all numerical constraints generated during type inference are linear,
they can be solved by an off-the-shelf linear-program (LP)
solver~\cite{Das2019}.
Furthermore, the type inference is sound: the cost bounds represented by the
inferred resource-annotated session types are guaranteed to be valid upper
bounds on high-water-mark costs.
Resource-annotated session types in resource-aware SILL can only encode linear
cost bounds, but not polynomial ones\footnotemark.
\footnotetext{While resource-aware SILL~\cite{Das2018} only supports linear
	bounds, Automatic Amortized Resource Analysis
	(AARA)~\cite{Hofmann2003,Hoffmann2010,Hoffmann2012}, which is an analogous
	type-based resource analysis method for functional programs, can infer
	multivariate polynomial bounds.
	Furthermore, all numerical constraints generated during the type inference in
	AARA are linear, even though it can infer multivariate polynomial cost
	bounds.}

\paragraph{Session skeletons}

A SILL program is a network of processes that interact with the external world.
Therefore, an input to a SILL program is a collection of incoming messages from
the external world.
The incoming messages may be intertwined with outgoing messages produced by the
program.

We use the high-water-mark cost, instead of net costs, to define worst-case
inputs.
This definition of worst-case inputs for non-monotone resource metrics (e.g.,
memory), where resources can be freed up as well as consumed, subsumes the
definition of worst-case inputs for monotone resource metrics (e.g., running
time).
In monotone resource metrics, the net cost monotonically increases (without ever
decreasing).
Hence, in monotone resource metrics, the high-water-mark cost (i.e., the maximum
net cost ever reached) is always equal to the net cost.

The first step in worst-case input generation is to provide a \emph{skeleton}
for each external channel.
A skeleton is a symbolic input containing variables, whose concrete values are
to be determined later.
Skeletons specify the shape of worst-case inputs to be generated.

For the channel $c_1$ in the example, a possible skeleton is
\begin{equation}
	\label{eq:skeleton of the input channel in the toy example in the overview}
	\oplus \{\m{cons} : \textcolor{red}{\triangleright^{2}} x_{1} \land \cdots \oplus \{ \m{cons} : \textcolor{red}{\triangleright^{2}} x_{10} \land \oplus \{\m{end}: \mathbf{1} \} \}\cdots \},
\end{equation}
where $x_1, \ldots, x_{10} \in \bbZ$ are integer-typed variables.
This skeleton specifies that the external world should send ten $\m{cons}$'s,
followed by the label $\m{nil}$.
As the channel $c_2$ does not take in any input from the external world, the
channel $c_2$ does not need a skeleton.

Skeletons must satisfy the following requirements:
\begin{itemize}
	\item A skeleton must be compatible with its associated session type. This
	      compatibility relation coincides with the subtyping
	      relation~\cite{Gay2005}.
	\item The input portion of a skeleton must be finite.
\end{itemize}

Because the process $P$ allocates two memory cells whenever a new IP address is
encountered, a worst-case input should have mutually distinct IP addresses on
the channel $c_1$'s input stream.
An example worst-case input that conforms to the skeleton
\labelcref{eq:skeleton of the input channel in the toy example in the overview}
is
\begin{equation}
	\label{eq:example worst-case input of the toy example in the overview}
	\forall 1 \leq i \leq 10. x_{i} = i.
\end{equation}

\paragraph{Symbolic execution}

The goal of worst-case input generation is find an input whose high-water-mark
cost achieves the cost bound inferred by resource-aware SILL.
As resource-aware SILL guarantees the soundness of inferred cost bounds, once we
find an input that has the same cost as an inferred cost bound, the input
immediately qualifies as a worst-case input.

To find a worst-case input, we symbolically execute a program on a skeleton,
searching for an execution path where the program's potential reaches zero since
the last time potential was supplied by the external world.
This strategy correctly identifies a worst-case input.
Suppose, for simplicity, that all necessary potential is supplied at the start
of execution (\cref{fig:tight cost bounds} (a)).
Then the cost bound for memory is tight if and only if the potential reaches
zero at some point.
If the potential never reaches zero, we can lower the cost bound while covering
all computational cost (and without plunging into negative potential), implying
that the cost bound is not tight.

\begin{figure}[t]
	\centering
	\includegraphics[width=0.9\columnwidth]{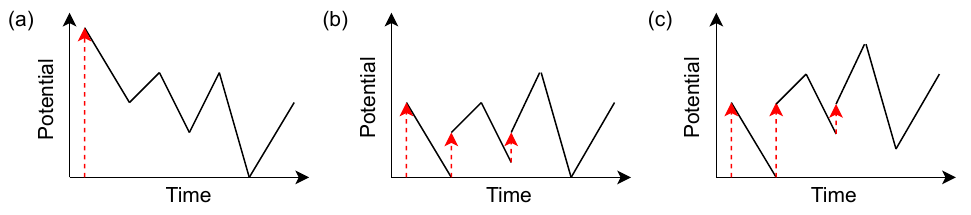}
	\caption{(a) Tight cost bound when all potential is supplied at once.
		The red dashed arrow indicates the potential supplied by the external world.
		(b) Tight cost bound when potential is supplied gradually.
		(c) Loose cost bound.
		Potential zero is never reached after the last injection of potential.
		So we can lower the cost bound (i.e., shorten the second and third red
		dashed arrows) without plunging into negative potential.}
	\label{fig:tight cost bounds}
\end{figure}

In SILL, because potential is supplied to a program gradually (rather than all
at once), we must ensure that the program will eventually reach potential zero
whenever the external world supplies potential to the program (\cref{fig:tight
	cost bounds} (b)).
Otherwise, we could lower the cost bound (\cref{fig:tight cost bounds} (c))
without plunging into negative potential.

\paragraph{High-water-mark costs under concurrency}

In the presence of multiple concurrent processes, their concurrency poses a
challenge: different schedules for symbolic execution may result in different
high-water marks of non-monotone resources.
Generally, in concurrent programming, monotone resources also have dependency on
schedules.
An example is a race condition where two processes compete for a single message
and their monotone cost depends on which process wins.
However, in session-typed concurrent programming like SILL, session types make
the communication more rigid.
As a result, the above example never arises in SILL, making monotone resources
independent of schedules.

To illustrate the dependency of non-monotone resources on schedules, consider
two processes, $P_1$ and $P_2$.
Initially, these two processes run independently.
The process $P_1$ has the high-water mark $h = 4$ and net cost $w = 0$.
Also, the process $P_2$ has $(h, w) = (1, 0)$, i.e., it has the high-water mark
$h = 1$ and the next cost $w = 0$.
Next, the process $P_1$ sends the message to the process $P_2$, thereby
synchronizing them.
We assume that the communication between the processes $P_1$ and $P_2$ is
\emph{asynchronous}.
That is, once it sends a message, the process $P_1$ does not need to wait for
the process $P_2$ to receive the message.
After sending the message, the process $P_1$ incurs $(h, w) = (2, 0)$.
After receiving the message, the process $P_2$ incurs $(h, w) = (3, 0)$.
This situation is depicted in \cref{fig:high-water-mark cost in the presence of
	concurrency} (a).
In the figure, the annotation $\m{tick} \; q$, where $q \in \bbQ$ is a rational
number, means $q$ units of resources are consumed.
As a special case, if $q < 0$, then the annotation $\m{tick} \; q$ means
$\abs{q}$ units of resources are freed up.
The arrows indicate the happened-before relation~\cite{Lamport1978}.

\begin{figure}[t]
	\centering
	\includegraphics[width=0.85\columnwidth]{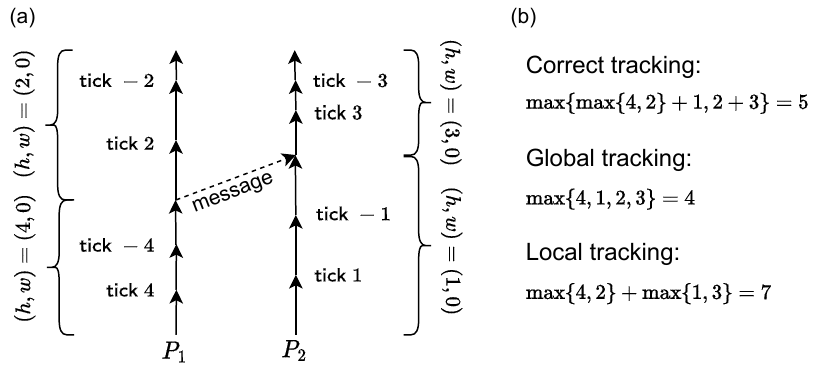}
	\caption{(a) Concurrent processes $P_1$ and $P_2$.
		Time passes by in the direction of arrows.
		(b) Results of global tracking and local tracking.}
	\label{fig:high-water-mark cost in the presence of concurrency}
\end{figure}

The worst-case combined high-water-mark cost of the processes $P_1$ and $P_2$ is
5, and it can be derived as follows.
Firstly, before the process $P_2$ receives the message, their worst-case combined
high-water mark is $\max \{4, 2\} + 1 = 5$.
Due to the asynchrony of communication, by the time the process $P_2$ receives
the message, $P_1$ may have just finished the first phase where $(h, w) = (4,
	0)$ or may already be in the second phase where $(h, w) = (2, 0)$.
Therefore, the high-water mark of $P_1$ before $P_2$ receives the message is
given by $\max \{4, 2\}$.
The high-water mark of $P_2$ before it receives the message is $h = 1$.
If the peak net costs of the processes $P_1$ and $P_2$ happen at the same time,
their worst-case combined high-water mark is $\max \{4, 2\} + 1 = 5$.
Secondly, after the process $P_2$ receives the message, the combined high-water
mark is $2 + 3 = 5$.
This is because, in the worst case, the high-water marks of the processes $P_1$
and $P_2$ in their second phases, namely 2 and 3, happen at the same, resulting
in the combined high-water mark of $2 + 3 = 5$.
Therefore, overall, the worst-case combined high-water mark throughout the
execution is $\max \{5, 5\} = 5$.

\paragraph{Global and local tracking of high-water marks}

To identify a worst-case execution path, we must calculate the tight worst-case
high-water mark of any execution path.
Two naive ways to track costs are global and local tracking.
In global tracking, we have a global cost counter shared by all processes.
In local tracking, each process tracks its own cost.
The combined cost is the sum of all local costs after the program terminates.

Neither global tracking nor local tracking returns the correct worst-case
high-water mark (\cref{fig:high-water-mark cost in the presence of concurrency}
(b)).
On the one hand, global tracking may underestimate it.
Before synchronization, if the peak costs of the processes $P_1$ and $P_2$
happen at different times, the global counter registers a high-water mark below
5.
On the other hand, local tracking may overestimate the worst-case combined
high-water mark.
When the program terminates, the process $P_1$'s local counter registers $\max
	\{4, 2\} = 4$, and $P_2$'s counter registers $\max \{1, 3\} = 3$.
Their sum is 7, which overestimates the tight worst-case combined high-water
mark of 5.

The root cause of overestimation in local tracking is that it does not account
for the possibility that, due to message-passing synchronization, the high-water
marks of two concurrent processes cannot happen at the same time.
For example, in \cref{fig:high-water-mark cost in the presence of concurrency}
(a), the process $P_1$'s local cost counter registers $(h, w) = (4, 0)$
\emph{before} $P_1$ sends the message (and hence also \emph{before} the
process $P_2$ receives the message).
Meanwhile, the process $P_2$'s local cost counter registers $(h, w) = (3, 0)$
\emph{after} the process $P_2$ receives the message.
Thus, the high-water marks 4 (of the process $P_1$) and 3 (of the process $P_2$)
cannot happen at the same time---these two high-water marks must respectively
happen before and after the synchronization of the two processes.
Nonetheless, because local tracking does not account for this fact, it naively
sums the respective high-water marks of the two local counters, resulting in an
overestimate of $3 + 4 = 7$ for the worst-case combined high-water mark.

\paragraph{Transfer of potential}

To resolve this issue of local tracking, our key insight is to \emph{transfer
	potential between processes}.
Suppose the processes $P_1$ and $P_2$ initially store 4 units and 1 unit of
potential, respectively.
In the process $P_1$, after it runs $\m{tick} \; 4$, all potential stored in the
process $P_1$ is consumed.
But the expression $\m{tick} \; {-4}$ in the process $P_1$ frees up 4 units of
potential.
Likewise, the process $P_2$'s potential is used up by the expression $\m{tick}
	\; 1$, but is later freed up by the expression $\m{tick} \; {-1}$.
When the process $P_1$ sends the message, the message also carries 2 units of
potential.
This potential is paid by the process $P_1$, and in turn, it is used to pay for
the process $P_2$'s cost.
\Cref{fig:tight cost bound of the example with the help of potential} (a)
depicts this situation.

\begin{figure}[t]
	\centering
	\includegraphics[width=0.83\columnwidth]{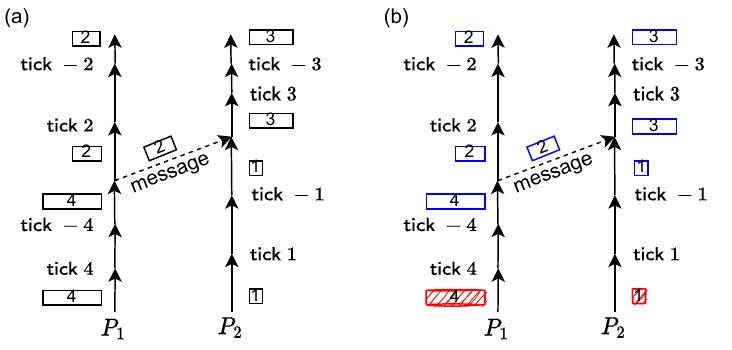}
	\caption{(a) The potential helps us derive the correct high-water mark.
		A rectangle next to an arrow represents the potential during the arrow's
		time period.
		The number inside a rectangle and its length indicate the amount of
		potential.
		If no rectangle exists, it indicates zero potential.
		(b) Same diagram as (a), but the potential is colored red and blue.}
	\label{fig:tight cost bound of the example with the help of potential}
\end{figure}

The combined high-water mark of the processes $P_1$ and $P_2$ is bounded above
by the total potential supplied at the beginning, namely $4 + 1 = 5$, which is
indeed a tight bound.

What makes the potential method more powerful than local tracking is the ability
to transfer potential between processes, allowing the potential to be reused
(and hence shared) by multiple processes.
By reusing and sharing the potential between processes, the local cost counters
of concurrent processes are calibrated such that their sum yields a tighter
combined high-water mark.
To illustrate it, recall that, in \cref{fig:high-water-mark cost in the presence
	of concurrency}, local tracking overestimates the worst-case high-water mark
because local tracking does not account for the possibility that the high-water
marks of two concurrent processes cannot happen at the same time.
To fix this issue, we have the process $P_1$ send 2 units of potential to the
process $P_2$, along the message, such that this potential can be reused by the
process $P_2$.
After the process $P_2$ receives the message, which carries the 2 units of
potential, we use this potential to partially pay for the high-water mark of the
process $P_2$ after the synchronization.
This partial payment allows us to decrease the post-synchronization high-water
mark of the process $P_2$ from 3 to $1 = 3 - 2$.

The key innovation of our worst-case input generation algorithm for concurrent
programs is to leverage \emph{resource-annotated session types}~\cite{Das2018},
which capture information about potential transfer, to guide symbolic execution.
We first automatically infer the resource-annotated session $A$ of a given
concurrent program expressed in SILL.
We then run symbolic execution of the program to systematically and exhaustively
search for a finite execution path (according to user-specified session skeleton
like \cref{eq:skeleton of the input channel in the toy example in the overview})
where the bound on the high-water mark cost encoded by the resource-annotated
session type $A$ is tight.
In addition to the high-water-mark bound, the type $A$ encodes information about
how resources are transferred between processes.
Hence, by exploiting this information during the symbolic execution, for any
execution path, we can tightly track the worst-case high-water mark (among all
possible schedules of concurrent processes), thereby checking the tightness of
the high-water mark bound encoded inside the type $A$.

\paragraph{Check tightness of cost bounds}

Resource-annotated session types already capture sound bounds on high-water
marks.
In order for these bounds to translate into true high-water marks, it remains to
ascertain that the bounds are tight.
This is achieved by tracking individual units of potential during symbolic
execution.

Let us call
\begin{enumerate*}[label=(\roman*)]
	\item the potential supplied by the external world \emph{\textcolor{red}{red}
		      potential} and
	\item the potential freed up in a process \emph{\textcolor{blue}{blue}
		      potential}.
\end{enumerate*}
If a process executes $\m{tick} \; q$ for $q > 0$ and stores no potential, the
external world must supply at least $q$ units of (\textcolor{red}{red})
potential to the process.
Conversely, if a process executes $\m{tick} \; q$ for $q < 0$, then $\abs{q}$
units of (\textcolor{blue}{blue}) potential are freed up and become available to
the process.
A cost bound of an entire concurrent program is given by the total red potential
supplied by the external world to the program.

A cost bound is tight if it satisfies two conditions.
Firstly, red potential must be consumed completely.
Otherwise, we could lower the cost bound while paying for all computational
costs.
Using the same $P_1$ and $P_2$ from \cref{fig:high-water-mark cost in the
	presence of concurrency}, \cref{fig:blue potential and red potential} (a)
illustrates a situation where red potential is not consumed entirely.
The processes $P_1$ and $P_2$ are initially given a total of $4 + 2 = 6$ units
of red potential.
But 0.5 units of red potential are left unconsumed in the process $P_2$,
suggesting that the cost bound of 6 is not tight.

\begin{figure}[t]
	\centering
	\includegraphics[width=0.83\columnwidth]{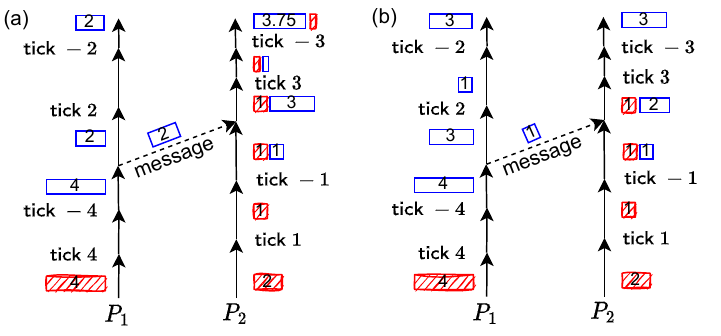}
	\caption{(a) Red potential in the process $P_2$ is not entirely consumed.
		Red hatched rectangles and blue blank rectangles represent red and blue
		potential, respectively.
		(b) Blue potential is generated in the process $P_1$ before red potential is
		consumed in the process $P_2$.
		But some blue potential is left unconsumed in the process $P_1$.}
	\label{fig:blue potential and red potential}
\end{figure}

Secondly, every unit of blue potential must be consumed if its generation
precedes the consumption of red potential.
Assume otherwise: blue potential is not consumed entirely, while red potential
is consumed after that blue potential was generated.
An example is illustrated in \cref{fig:blue potential and red potential} (b).
As in part (a), the processes $P_1$ and $P_2$ are initially given 4 units and
2 units of red potential, respectively.
However, this time, the process $P_1$ only sends 1 unit, instead of 2 units,
of (blue) potential to the process $P_2$.
Consequently, 1 unit of the blue potential generated by $\m{tick} \; {-4}$ on
the process $P_1$ remains unconsumed in the rest of $P_1$'s lifetime.
Furthermore, $\m{tick} \; {-4}$ happens before red potential is consumed by
$\m{tick} \; 3$ on the process $P_2$.
Hence, we could send more blue potential from the process $P_1$ to the process
$P_2$, thereby substituting the blue potential for red potential supplied to the
process $P_2$ by the external world.
This means the cost bound of 6 is, again, not tight.

Finally, \cref{fig:tight cost bound of the example with the help of potential}
(b) illustrates the case where the above two conditions are met.
The cost bound is indeed tight.

\paragraph{Solve path constraints}

Once a worst-case execution path is identified, its path constraint is fed to an
SMT solver to generate a concrete worst-case input.
In our example \cref{eq:session type of input channels in the toy example in the
	overview}, the worst-case execution path is where each incoming IP address
requires 2 units of potential.
Solving the path constraint of this path, we obtain a set of mutually distinct
IP addresses such as \cref{eq:example worst-case input of the toy example in the
	overview}.

Current SMT technologies cannot solve constraints over an infinite data
structure.
Consequently, it is necessary to require the input portion of a session skeleton
to be finite; otherwise, we would not be able to find an infinitely large
worst-case input by solving its path constraint using an SMT solver.
An example of a path constraint over an infinite worst-case input that modern
SMT solvers cannot handle is given in \cref{eq:constraint on function f that
	encodes an infinite stream of Booleans} (\cref{sec:finite input portion of
	skeletons}).


\section{Resource-Aware SILL}
\label{sec:resource-aware SILL}

Resource-aware Simple Intuitionistic Linear Logic (SILL)~\cite{Das2018} has two
constructs: processes and channels.
Processes send and receive messages, including channels, on channels.
A channel in SILL connects two processes: a provider and a client.
They can communicate in both directions\footnote{Even though a provider and a
	client can communicate in both directions, we still assign different roles to
	the two endpoints of a channel due to the correspondence between SILL and
	intuitionistic linear logic~\cite{Caires2010}.}.
While each process must provide exactly one channel, the process can be a client
of multiple (possibly zero) channels.

Channels are typed with (resource-annotated) session types, which describe the
communication protocols on the channels.
Well-typedness of channels in SILL guarantees deadlock freedom (i.e., progress)
and session fidelity (i.e., preservation)~\cite{Caires2010}.

\subsection{Session Types}
\label{sec:session types}

Resource-aware SILL has two layers: functional layer and process layer.
Types $\tau$ in the functional layer and resource-annotated session types $A$ in
the process layer are formed by the following grammar:
\begin{align*}
	b \Coloneq {}    & \unittype \mid \bool \mid \integer \mid b_1 \times b_2 \mid b_1 + b_2 &  & \text{primitive and polynomial types}                                                    \\
	\tau \Coloneq {} & \{c: A \leftarrow \overline{c_i: A_i} \at q \}                        &  & \text{process type}                                                                      \\
	                 & \mid b \mid \tau_1 \to \tau_2                                         &  & \text{functional types}                                                                  \\
	A \Coloneq {}    & X \mid b \supset A \mid \cdots \mid \triangleright^{q} A              &  & \text{resource-annotated session types (\cref{table:session types and process terms})}.
\end{align*}

\paragraph{Functional types}

In the functional layer, if an expression $e$ has a process type \linebreak[5] $\{c: A
	\leftarrow \overline{c_i: A_i} \at q \}$, the expression $e$ represents a
process that
\begin{enumerate*}[label=(\roman*)]
	\item provides a channel $c$ of resource-annotated session type $A$ and
	\item uses channels $c_1, \ldots, c_n$ of resource-annotated session types
	      $A_1, \ldots, A_n$.
\end{enumerate*}
The annotation $q \in \bbQ_{\geq 0}$ denotes the constant potential initially
stored in the process.
The rest of the types in the functional layer are standard.

\paragraph{Session types}


In \cref{table:session types and process terms}, the first column lists all
resource-annotated session types.
The second column lists the continuations of the session types from the first
column, that is, what session type the channel will have after it makes one
action (e.g., send or receive a message).
The last column describes, for each session type $A$, the operational semantics
of a channel $c$ with the session type $A$ from the viewpoint of the the channel
$c$'s provider.

Briefly, the session type $b \supset A$ (and $b \land A$) receives (and sends) a
value of a functional type $b$ and then proceeds to the session type $A$.
The session type $A_1 \multimap A_2$ (and $A_1 \otimes A_2$) receives (and
sends) a channel of session type $A_1$ and then proceeds to the session type
$A_2$.
The session type $\& \{ \overline{\ell_{i}: A_{i}} \}$ (and $\oplus \{
	\overline{\ell_{i}: A_{i}} \}$) receives (and sends) a label $\ell_{j}$ for some
$j$ and proceeds to the session type $A_j$ accordingly.
The session type $\mathbf{1}$ closes the channel and sends the message $\m{end}$
to the channel client in order to signal the closure.
The session type $\triangleleft^{q} A$ (and $\triangleright^{q} A$) receives
(and sends) $q \in \bbQ_{> 0}$ units of potential to the channel client.

Unlike the original SILL~\cite{Toninho2013,Pfenning2015}, resource-annotated
SILL~\cite{Das2018} does not offer the exponential operator $!$ for copying a
channel.
Hence, resource-aware SILL is related, via the Curry-Howard correspondence, to
intuitionistic multiplicative-additive linear logic (IMALL)~\cite{Lincoln1993}.

We assume a global signature $\Sigma$ containing definitions of type variables
of the form $X = A_{X}$, where $X$ is a type variable and $A_{X}$ is a session
type that may mention $X$ (hence recursive).
Recursive session types are interpreted co-inductively, so communication can
last forever.
Also, we require recursive session types to be contractive~\cite{Gay2005}, that
is, we cannot have a recursive session type $\mu X. X$ that remains identical
after unfolding the recursive operator.
Lastly, recursive session types are regarded equi-recursive.
Hence, throughout this article, type variables can be silently replaced by their
definitions.

The type inference algorithm attempts to automatically determine resource
annotations $q \in \bbQ_{> 0}$ in $\triangleright^{q}$ and
$\triangleleft^{q}$~\cite{Das2019} by collecting linear constraints from
according to the type system in \cref{sec:type system of resource-aware SILL}
and solving them by a linear-program (LP) solver.
So a user does not need to manually provide the values of $q$.


\begin{table}[t]
  \centering
  \caption{Resource-annotated session types and process terms.
    In the last column, their operational semantics are described from the
    viewpoint of channel providers.
    For the session type $\mathbf{1}$, the first row is for the provider of the
    channel, and the second row is for the client of the channel.}
  \label{table:session types and process terms}
  \begin{adjustbox}{max width=\textwidth}
    \begin{tabular}{l l l l l}
      \toprule
      Session type                              & Cont.   & Process term                                                      & Cont.   & Operational semantics                                        \\
      \midrule
      $b \supset A$                             & $A$     & $x \leftarrow \m{recv} \; c; P_{x}$                               & $P_{x}$ & receive a value of base type $b$ on channel $c$              \\
      ~                                         & ~       & ~                                                                 & ~       & and bind it to variable $x$                                  \\
      $b \land A$                               & $A$     & $\m{send} \; c \; v; P$                                           & $P$     & send a value $v: b$ on channel $c$                           \\
      $A_1 \multimap A_2$                       & $A_2$   & $x \leftarrow \m{recv} \; c; P_{x}$                               & $P_{x}$ & receive a channel of type $A_1$ on channel $c$               \\
      ~                                         & ~       & ~                                                                 & ~       & and bind it to variable $x$                                  \\
      $A_1 \otimes A_2$                         & $A_2$   & $\m{send} \; c \; d; P$                                           & $P$     & send a channel $d: A_1$ on channel $c$                       \\
      $\& \{ \overline{\ell_{i}: A_{i}} \}$     & $A_{j}$ & $\m{case} \; c \; \{\overline{\ell_{i} \hookrightarrow P_{i}} \}$ & $P_{j}$ & receive a label on $c$ and conduct pattern matching          \\
      $\oplus \{ \overline{\ell_{i}: A_{i}} \}$ & $A_{j}$ & $c.\ell_{j}; P$                                                   & $P$     & send a label $\ell_{j}$ on channel $c$                       \\
      $\mathbf{1}$                              & N/A     & $\m{close} \; c$                                                  & N/A     & close channel $c$ by sending the $\m{end}$ message           \\
      ~                                         & ~       & $\m{wait} \; c; P$                                                & $P$     & wait for the $\m{end}$ message of channel $c$'s closure      \\
      $\triangleleft^{q} A$                     & $A$     & $\m{get} \; c \; \{q\}; P$                                        & $P$     & receive $q \in \bbQ_{> 0}$ units of potential on channel $c$ \\
      $\triangleright^{q} A$                    & $A$     & $\m{pay} \; c \; \{q\}; P$                                        & $P$     & send $q \in \bbQ_{> 0}$ units of potential on channel $c$    \\
      \bottomrule
    \end{tabular}
  \end{adjustbox}
\end{table}

\subsection{Syntax}
\label{sec:syntax of resource-aware SILL}

Fix a set $\calF$ of function identifiers.
A program in resource-aware SILL is a pair $(P, \Sigma)$, where $P$ is the main
process to run and the signature $\Sigma$ stores type definitions and function
definitions.
The syntax of functional terms $e$ and processes $P$ is given below.
\begin{align*}
	e \Coloneq {} & x \mid \tuple{\,} \mid \true \mid \false \mid n \in \bbZ \mid f \in \calF \mid \cdots &  & \text{standard functional terms}                                      \\
	              & \mid c \leftarrow P_{c, \overline{c_i}} \leftarrow \overline{c_i}                     &  & \text{process constructor}                                            \\
	P \Coloneq {} & c \leftarrow e \leftarrow \overline{c_i}; P_{c}                                       &  & \text{spawn a process}                                                \\
	              & \mid c_1 \leftarrow c_2                                                               &  & \text{forward messages}                                               \\
	              & \mid \m{tick} \; q; P                                                                 &  & \text{consume resources}                                              \\
	              & \mid x \leftarrow \m{recv} \; c; P_{x} \mid \cdots \mid \m{pay} \; c \; \{q\}; P      &  & \text{process terms (\cref{table:session types and process terms})}.
\end{align*}

\paragraph{Functional terms}

The term $c \leftarrow P_{c, \overline{c_i}} \leftarrow \overline{c_i}$
encapsulates a process $P_{c, \overline{c_i}}$ that provides a channel $c$ and
uses channels $\overline{c_i}$.
The rest of the functional layer's syntax is standard.

\paragraph{Processes}

In the process layer, the process $c \leftarrow e \leftarrow \overline{c_i};
	P_{c}$ first spawns a new process denoted by $e$.
This child process provides a channel $c$ and uses channels $\overline{c_i}$ as
a client.
After spawning $e$, the parent process continues as a process $P_{c}$ and uses
the channel $c$ as a client.
The process $c_1 \leftarrow c_2$ forwards messages between channels $c_1$ and
$c_2$ in both directions.
The process $\m{tick} \; q; P$ consumes $q \in \bbQ$ units of resources.
The construct $\m{tick} \; q$ is inserted either manually by a user or
automatically according to a resource metric of the user's interest.
For instance, if a user is interested in memory, allocation of a 64-bit integer
is modeled as $\m{tick} \; 64$.
The rest of the syntax is given in the third and fourth columns of
\cref{table:session types and process terms}.
The value of $q$ in processes $\m{get} \; c \; \{q\}; P$ and $\m{pay} \; c \;
	\{q\}; P$ is automatically inferred~\cite{Das2018}.

For illustration, consider a process $P$ that provides a channel $c$ and uses
no channels.
Suppose that the process $P$ receives an integer $x \in \bbZ$ on the channel $c$
and spawns a new process that
\begin{enumerate*}[label=(\roman*)]
	\item provides a new channel $f$ and
	\item sends an integer $x+1$.
\end{enumerate*}
Lastly, the process $P$ closes the channel $c$.
An implementation of the process $P$ is
\begin{align}
	P     & \coloneq x \leftarrow \m{recv} \; c; d \leftarrow f \, x; \m{close} \; c \label{eq:example process definition of P} \\
	f (x) & = \m{send} \; c \; (1 + x); \m{close} \; c. \label{eq:definition of function f that takes an integer x as input and returns a process}
\end{align}
The definition \labelcref{eq:example process definition of P} of the process $P$
calls a function $f: \integer \to \{c \leftarrow \cdot \at 0\}$, whose type
signature means that the function $f$ takes an integer as input and returns a
process of type $\{c \leftarrow \cdot \at 0\}$, i.e., it provides a channel $c$,
uses no channels, and requires zero initial potential.
When the process $P$ runs $d \leftarrow f \, x$, the channel $d$ is substituted
for the channel $c$ used inside the function body \labelcref{eq:definition of
function f that takes an integer x as input and returns a process} of $f$.
%

A resource metric is said to be monotone if every $\m{tick} \; q$ has $q \geq
	0$.
Conversely, if $q < 0$ (i.e., resources are freed up as well as consumed) is
allowed, the resource metric is said to be non-monotone.
Examples of non-monotone resource metrics are memory (e.g., heap space) and
money (e.g., cryptocurrencies transferred by smart contracts).
Non-monotone resource metrics subsume monotone ones.

\subsection{Cost Semantics}
\label{sec:cost semantics}

The cost semantics of SILL is defined using substructural operational
semantics~\cite{Pfenning2009}, which is essentially a multiset rewriting
system~\cite{Cervesato2006}.
A program state is represented by a multiset, called a configuration, of
predicates.
Rewriting rules specify how one configuration transitions to another.
Suppose we are given a rewriting rule
\begin{equation}
	\begin{small}
		\inferrule
		{I_1 \\ I_2 \\ \cdots \\ I_n }
		{J_1 \\ J_2 \\ \cdots \\ J_m}
		\qquad (n, m \in \bbN).
	\end{small}
\end{equation}
If the predicates $I_1, \ldots, I_{n}$ \emph{all} exist in a configuration, the
next configuration is obtained by replacing $I_1, \ldots, I_{n}$ with $J_1,
	\ldots, J_{m}$.
Rewriting rules can be applied in any order.

The cost semantics of SILL uses two predicates: $\proc{c, w, P}$ and $\msg{c, w,
		M}$.
The predicate $\proc{c, w, P}$ represents a process $P$ providing a channel $c$.
The predicate $\msg{c, w, M}$ means a message $M$ is being transferred across a
channel $c$, but has not been received yet.
In the predicate $\proc{c, w, P}$, the net-cost counter $w \in \bbQ$ tracks the
cumulative net cost of the process.
In the predicate $\msg{c, w, M}$, the number $w \in \bbQ$ is used when a process
terminates and transfers the net cost at that point to another process.

\Cref{fig:some rules of the cost semantics of SILL} displays some key rules of
the substructural cost semantics.
The grammar of a message $M$ and the remaining rules of the cost semantics can
be found in \cref{fig:remaining rules of the cost semantics of SILL}
(\cref{appendix:cost semantics and the type system}).

\begin{figure}[t]
	\begin{small}
		\judgment{$\proc{c, w, P}$ and $\msg{c, w, M}$}
		\begin{mathpar}
			\inferrule
			{\proc{d, w, c \leftarrow e \leftarrow \overline{c_i}; Q_{c}} \\ e
				\Downarrow (x \leftarrow P_{x, \overline{x_i}} \leftarrow \overline{x_i})
				\\ c' \text{ is fresh}} {\proc{c', 0, P_{c',\overline{ c_{i}}}} \\
				\proc{d, w, Q_{c'}}}
			\m{spawn}
			\and
			\inferrule
			{\proc{d, w, \m{send} \; c \; e; P} \\ e \Downarrow v \\ c' \text{ is
					fresh}} {\proc{d, w, P [c' / c]} \\ \msg{c', 0, \m{send} \; c \; v; c'
					\leftarrow c}} {\supset} S
			\and
			\inferrule
			{\msg{c', w_1, \m{send} \; c \; v; c' \leftarrow c} \\ \proc{c, w_2, x
					\leftarrow \m{recv} \; c; P_{x}}} {\proc{c, w_1 + w_2, P_{v} [c' /
							c]}} {\supset} R
			\and
			\inferrule
			{\proc{c, w, \m{tick} \; q; P}} {\proc{c, w + q, P}}
			\m{tick}
			\and
			\inferrule
			{\proc{c, w, \m{close} \; c}} {\msg{c, w, \m{close} \; c}} \mathbf{1} S
			\and
			\inferrule
			{\msg{c, w_1, \m{close} \; c} \\ \proc{d, w_2, \m{wait} \; c; P}}
			{\proc{d, w_1 + w_2, P}} \mathbf{1} R
		\end{mathpar}
	\end{small}
	\caption{Key rules of the substructural cost semantics of resource-aware SILL.
		The functional-layer evaluation judgment $e \Downarrow v$ in the rules
		$\m{spawn}$ and ${\supset} S$ means the functional term $e$ evaluates to a
		value $v$.
		The definition of the functional-layer evaluation judgment $e \Downarrow v$
		is standard.
		The remaining rules are available in \cref{fig:remaining rules of the cost
			semantics of SILL} (\cref{appendix:cost semantics and the type system}).}
	\label{fig:some rules of the cost semantics of SILL}
\end{figure}

The rule $\m{spawn}$ considers a configuration where
\begin{enumerate*}[label=(\roman*)]
	\item we have a process $c \leftarrow e \leftarrow \overline{c_i}; Q_{c}$ (in
	      the first premise) and
	\item the functional term $e$ evaluates to a value $x \leftarrow P_{x,
				      \overline{x_i}} \leftarrow \overline{x_i}$ (in the second
	      premise), where $P_{x, \overline{x_i}}$ is a process.
\end{enumerate*}
The two conclusions in the rule $\m{spawn}$ indicate that the next configuration
replaces the premises with the following predicates:
\begin{enumerate*}[label=(\roman*)]
	\item $\proc{c', 0, P_{c',\overline{ c_{i}}}}$ for the newly spawned process
	      where the net-cost counter $w$ is initialized to zero and
	\item $\proc{d, w, Q_{c'}}$ for the parent process.
\end{enumerate*}

The rule ${\supset} S$ concerns a configuration containing a process $\m{send}
	\; c \; e; P$ (in the first premise), which sends a message $e$ on channel
$c$.
To send a message in the operational semantics, the premises in the rule are
replaced with these two predicates:
\begin{enumerate*}[label=(\roman*)]
	\item $\proc{d, w, P [c' / c]}$ for the continuation of the process after
	      sending a message and
	\item $\msg{c', 0, \m{send} \; c \; v; c' \leftarrow c}$ for the message
	      $\m{send} \; c \; v; c' \leftarrow c$ containing the content $v$ that is
	      being transferred across a freshly renamed channel $c'$.
\end{enumerate*}
The message predicate $\msg{c', 0, \m{send} \; c \; v; c' \leftarrow c}$ is then
handled by the rule ${\supset} R$, which concerns a process ready to receive a
message (in the first premise).
Since rewriting rules can be applied in any order, ${\supset} R$ is not
necessarily applied \emph{immediately} after ${\supset} S$.
Hence, communication between processes is asynchronous.

The rule $\m{tick}$ states that, whenever $\m{tick} \; q$ is executed, the
net-cost counter $w$ in the premise $\proc{c, w, \m{tick} \; q; P}$ is
incremented by $q$.
The rule $\mathbf{1} S$ concerns a process $\m{close} \; c$, which seeks to
close the channel $c$ that the process provides (and hence also terminates the
process itself).
To close the channel in the operational semantics, we remove replace this
premise with a predicate $\msg{c, w, \m{close} \; c}$ that carries
\begin{enumerate*}[label=(\roman*)]
	\item a signal for the channel closure to the channel client and
	\item the final net cost $w$ of the process.
\end{enumerate*}
The client then receives the message, together with the final net cost of the
sender, according to the rule $\mathbf{1} R$.

The net cost of a configuration $C$ is the sum of the net-cost counters $w$ in
predicates $\proc{c, w, P}$ and $\msg{c, w, M}$ in the configuration $C$ when
the SILL program terminates (i.e., no more rewrite rules can be applied to the
configuration $C$).
The high-water mark of the configuration $C$ is its maximum net cost that has
ever been reached (i.e., the maximum sum of the net-cost counters $w$ in all
predicates in the configuration $C$) during the execution of the SILL program.

\subsection{Type System}
\label{sec:type system of resource-aware SILL}

Fix a signature $\Sigma$ containing type definitions and function definitions.
A typing judgment of the process layer has the form
\begin{equation}
	\label{eq:typing judgment of resource-annotated session type}
	\Phi; \Delta; q \vdash P :: (c: A).
\end{equation}
The judgment \labelcref{eq:typing judgment of resource-annotated session type}
means the channel $c$ is provided by the process $P$ and has a
resource-annotated session type $A$.
Here, $\Phi$ is a functional-layer typing context, and $\Delta$ is a
process-layer typing context that maps channels to resource-annotated session
types.
The channels in $\Delta$'s domain are used by the process $P$ as a client.
The number $q \in \bbQ_{\geq 0}$ denotes how much potential is initially stored
in the process $P$.

\Cref{fig:some rules of the type system of resource-aware SILL} displays key
rules of resource-aware SILL's type system.
The remaining rules are in \cref{fig:remaining rules of the type system of
	resource-aware SILL} (\cref{appendix:cost semantics and the type system}).

\begin{figure}[t]
	\begin{small}
		\judgment{$\Phi; \Delta; q \vdash P :: (c: A)$}
		\begin{mathpar}
			\inferrule
			{\Phi \vdash e :\{x : A \leftarrow \overline{x_{i}: A_{i}} \at p\}
				\\\Delta_{1} = \{\overline{c_{i}: A_{i}}\} \\ \Phi; \Delta_{2}, c: A; q
				\vdash Q_{c} :: (d: D)} {\Phi; \Delta_1, \Delta_2; p + q \vdash (c
				\leftarrow e \leftarrow \overline{c_{i}}; Q_{c}) :: (d: D)}
			\m{spawn}
			\and
			\inferrule
			{\Phi; \Delta, c: A; p \vdash P :: (d: D) \\ \Phi \vdash e: b} {\Phi;
				\Delta, c: b \supset A; p \vdash \m{send} \; c \; e; P :: (d: D)}
			{\supset} L
			\and
			\inferrule
			{\Phi, x: b; \Delta, p \vdash P_{x} :: (c: A)} {\Phi; \Delta; p \vdash (x
				\leftarrow \m{recv} \; c; P_{x}) :: (c: b \supset A)} {\supset} R
			\and
			\inferrule
			{\Phi; \Delta; q \vdash P :: (c: A) \\ p > q} {\Phi; \Delta; p \vdash P ::
				(c: A)}
			\m{relax}
			\and
			\inferrule
			{\Phi; \Delta; p \vdash P :: (d: D)} {\Phi; \Delta; p+q \vdash \m{tick} \;
				q; P :: (d: D)}
			\m{tick}
		\end{mathpar}
	\end{small}
	\caption{Key rules of the type system of resource-aware SILL.
		The judgment $\Phi \vdash e : \tau$ in the rules $\m{spawn}$ and ${\supset}
			L$ is a functional-layer typing judgment stating that the functional term
		$e$ has a functional type $\tau$.
		The remaining rules are in \cref{fig:remaining rules of the type system of
			resource-aware SILL} (\cref{appendix:cost semantics and the type system}).}
	\label{fig:some rules of the type system of resource-aware SILL}
\end{figure}

The rule $\m{spawn}$ in \cref{fig:some rules of the type system of
	resource-aware SILL} states that a process $c \leftarrow e \leftarrow
	\overline{c_{i}}; Q_{c}$, which spawns a new process-term $e$, is well-typed if
$e$ has a correct process type (in the first premise) and the continuation
process $Q_c$ is well-typed (in the third premise).
Additionally, the initial potential necessary for the process $c \leftarrow e
	\leftarrow \overline{c_{i}}; Q_{c}$ is $p + q$, where $p$ is the initial
potential for the spawned process $e$ and $q$ is the initial potential for the
continuation process $Q_{c}$.

The rule ${\supset} L$ concerns a process $\m{send} \; c \; e; P$, which sends a
value $e$ of the functional type $b$ on the channel $c$ used by the process as a
client.
The dual rule, ${\supset} R$, concerns a process $x \leftarrow \m{recv} \; c;
	P_{x}$, which is ready to receive a message of the functional type $b$ on the
channel $c$ provided by the process.
The rule $\m{relax}$ weakens the resource annotation $q$ in the judgment: if the
initial potential $q$ is sufficient for the process $P$ to run, then any larger
potential $q > p$ works as well.
The rule $\m{tick}$ states that, to run the construct $\m{tick} \; q; P$, we
require the initial potential of $p + q$, where $p$ is the potential necessary
for the continuation process $P$.

\Cref{theorem:soundness of resource-aware SILL} states the soundness of the type
system of resource-aware SILL: given an initial configuration $C$ where the net
cost is zero, the total potential of the configuration $C$ is an upper bound of
the high-water mark cost for any configuration reachable from $C$.

\begin{thmC}[Soundness of resource-aware SILL \cite{Das2018,Das2019}]
	\label{theorem:soundness of resource-aware SILL}
	Given a well-typed configuration initial $C$ whose net cost is zero, let
	$p \in \bbQ$ be the total potential stored in the configuration $C$.
	The total potential in the configuration $C$ is defined as the sum of all $q$
	in judgments $\Phi; \Delta; q \vdash P :: (c: A)$ for all processes $P$ in the
	predicates $\proc{c, \uscore, P} \in C$.
	If $C \to^{\ast} C'$ (i.e., a configuration $C'$ is reachable from the
	configuration $C$) and $h \in \bbQ$ is the high-water-mark cost of the
	configuration $C'$, then $h \leq p$ holds.
	That is, the initial potential $p$ correctly upper-bounds the high-water mark
	cost $h$.
\end{thmC}

To prove \cref{theorem:soundness of resource-aware SILL}, Das et
al.~\cite{Das2018} first define a typing judgment $\Phi; \Delta_1
	\stackrel{E}{\models} C :: \Delta_2$ for a configuration $C$, where $\Phi$ is a
functional-layer typing context and $\Delta_{i}$ ($i = 1,2$) is a process-layer
typing context.
The judgment defines $E \in \bbQ$, called energy, as the sum of the total
potential and the net cost in the configuration $C$.
Das et al. then prove that, as the configuration $C$ evolves, the energy $E$ is
conserved: it remains the same or decreases (due to the weakening of potential).
This is the resource-aware SILL's equivalent of the classic type preservation
theorem, and it implies \cref{theorem:soundness of resource-aware SILL}.
Prior works~\cite{Das2018,Das2019} only prove the soundness under monotone
resource metrics.
Nonetheless, to extend the soundness result to non-monotone resource metrics, it
suffices to re-examine the inductive case for $\m{tick}$.


\newcommand{\countinput}[2]{#1 \; \m{countInput} (#2)}
\newcommand{\countoutput}[2]{#1 \; \m{countOutput} (#2)}
\newcommand{\tpath}[1]{\m{path} (#1)}

\newcommand{\costboundleft}[1]{\lceil \triangleleft \rceil (#1)}
\newcommand{\costboundright}[1]{\lceil \triangleright \rceil (#1)}

\section{Session Skeletons}
\label{sec:session skeletons}

Skeletons are symbolic inputs specifying the shape of worst-case inputs to be
generated.
In the worst-case input generation for functional programming~\cite{Wang2019},
given a function that takes in lists, if we want a worst-case input of length
three, an appropriate skeleton is $[x_1, x_2, x_3]$, where $x_i$ are variables
whose values are to be determined.

Message-passing concurrent programming poses three challenges in the design of
skeletons.
First, in the presence of multiple channels, their skeletons are interdependent
on each other due to the interaction between channels.
Second, inputs to concurrent programs may be infinite.
Third, the input and output are intertwined in such a way that the output
influences the acceptable set of subsequent inputs.
Our design of skeletons works around the first two challenges.
The last challenge, described in \cref{sec:input generation with loose cost
	bounds}, is beyond the scope of this article because, to fully address this
challenge, it is necessary to enrich session types such that they capture more
information, particularly the interdependence between input and output.
The challenge does not affect the relative-completeness theorem of our
worst-case input generation algorithm (\cref{theorem:relative completeness of
	worst-case input generation}), since the theorem assumes the cost bounds, which
we calculate by simply summing all resource annotations in session skeletons,
are tight.

\subsection{Syntax}

Fix a set $\calX_{\text{skeleton}}$ of skeleton variables.
They are placeholders for concrete values in a worst-case input.
Skeletons are formed by the following grammar:
\begin{align*}
	H \Coloneq {} & x \in \calX_{\text{skeleton}} \mid \tuple{\,} \mid \bool \mid \false \mid n \in \bbZ            &  & \text{skeleton variable and constants}                \\
	              & \mid \tuple{H_1, H_2} \mid \ell \cdot H \mid r \cdot H                                          &  & \text{skeleton constructors}                          \\
	K \Coloneq {} & b \supset K \mid H \supset K \mid b \land K \mid H \land K                                      &  & \text{value input/output}                             \\
	              & \mid K_1 \multimap K_2 \mid K_1 \otimes K_2                                                     &  & \text{channel input/output}                           \\
	              & \mid \& \{ \overline{\ell_{i} : K_{i}} \} \mid \&_{x} \{ \overline{\ell_{i} : K_{i}} \}         &  & \text{external choice}; x \in \calX_{\text{skeleton}} \\
	              & \mid \oplus \{ \overline{\ell_{i} : K_{i}} \} \mid \oplus_{x} \{ \overline{\ell_{i} : K_{i}} \} &  & \text{internal choice}; x \in \calX_{\text{skeleton}} \\
	              & \mid X \mid \mathbf{1} \mid \triangleleft^{q} K \mid \triangleright^{q} K.
\end{align*}

The meta-variables $H$ and $K$ stand for, respectively, a skeleton for the
functional layer and a skeleton in the process layer.
The grammar of session skeletons $K$ is similar to that of resource-annotated
session types (\cref{sec:session types}).
One difference is that, in addition to the skeleton $b \supset K$, we have the
skeleton $H \supset K$, where $b$ is a base type and $H$ is a functional-layer
skeleton.
The skeleton $H \supset K$ is used when the input skeleton $H$ is generated by
the external world, whereas the skeleton $b \supset K$ is used when the input
type $b$ is sent by a process.
Likewise, the skeletons $\&_{x} \{ \overline{\ell_{i} : K_{i}} \}$ and
$\oplus_{x} \{ \overline{\ell_{i} : K_{i}} \}$, where the subscripts are
skeleton variables $x \in \calX_{\text{skeleton}}$, are used when the choices
are resolved by the external world.
The subscripts $x \in \calX_{\text{skeleton}}$ records which branch is chosen in
a worst-case input.

\subsection{Compatibility of Skeletons with Session Types}
\label{sec:compatibility of skeletons with session types}

Given an external channel $c: A$ provided by a process, suppose a user provides
a skeleton $K$.
To check the compatibility of the skeleton $K$ with the resource-annotated
session type $A$, we introduce the judgment
\begin{equation}
	\label{eq:judgment of compatibility when the external channel is provided by a process}
	\Phi \vdash K \leqslant A,
\end{equation}
where $\Phi$ is a typing context for functional-layer skeletons.
The judgment \labelcref{eq:judgment of compatibility when the external channel
	is provided by a process} states that the skeleton $K$ is a valid skeleton of
the session type $A$, given that the external channel $c$ is provided by a
process.
\Cref{fig:key rules in the compatibility relation K leqslant A} defines
\cref{eq:judgment of compatibility when the external channel is provided by a
	process}.
Dually, if the external channel is provided by the external world, we use the
dual judgment $\Phi \vdash A \leqslant K$.
Its definition is symmetric to \cref{fig:key rules in the compatibility relation
	K leqslant A}.

\begin{figure}[t]
	\begin{small}
		\judgment{$\Phi \vdash K \leqslant A$}
		\begin{mathpar}
			\inferrule
			{ } {\Phi \vdash \mathbf{1} \leqslant \mathbf{1}}
			\textsc{K:Ter}
			\and
			\inferrule
			{\Phi \vdash H : b \\ \Phi \vdash K \leqslant A} {\Phi \vdash H \supset K
				\leqslant b \supset A}
			\textsc{K:ValIn}
			\and
			\inferrule
			{\Phi; \Delta \vdash K \leqslant A} {\Phi; \Delta \vdash b \land K
				\leqslant b \land A}
			\textsc{K:ValOut}
			\and
			\inferrule [K:ChannelIn]
			{\Phi \vdash A_1 \leqslant K_1 \\ \Phi \vdash K_2 \leqslant A_2} {\Phi
				\vdash K_1 \multimap K_2 \leqslant A_1 \multimap A_2}
			\and
			\inferrule [K:ChannelOut]
			{\Phi; \Delta \vdash K_1 \leqslant A_1 \\ \Phi; \Delta \vdash K_2
				\leqslant A_2} {\Phi; \Delta \vdash K_1 \otimes K_2 \leqslant A_1
				\otimes A_2}
			\and
			\inferrule
			{\forall j \in N'. \Phi \vdash K_{j} \leqslant A_{j} \\ \emptyset \subset
				N' \subseteq N} {\Phi \vdash \&_{x} \{\ell_{i}: K_{i} \mid i \in N'\}
				\leqslant \& \{\ell_{j}: A_{j} \mid j \in N\}}
			\textsc{K:ExtChoice}
			\and
			\inferrule
			{\Phi; \Delta \vdash K \leqslant A} {\Phi; \Delta \vdash \triangleleft^{q}
				K \leqslant \triangleleft^{q} A }
			\textsc{K:Get}
			\and
			\inferrule
			{\forall i \in N. \Phi; \Delta \vdash K_{i} \leqslant A_{i}} {\Phi; \Delta
				\vdash \oplus \{\ell_{i}: K_{i} \mid N\} \leqslant \oplus \{\ell_{i}:
				A_{i} \mid N\}}
			\textsc{K:InChoice}
			\and
			\inferrule
			{\Phi; \Delta \vdash K \leqslant A} {\Phi; \Delta \vdash
				\triangleright^{q} K \leqslant \triangleright^{q} A }
			\textsc{K:Pay}
		\end{mathpar}
	\end{small}
	\caption{Inference rules of the compatibility relation between a session
		skeleton and a resource-annotated session type.
		The judgment $\Phi \vdash H : b$ in \textsc{K:ValIn} means the
		functional-layer skeleton $H$ has a functional type $b$.}
	\label{fig:key rules in the compatibility relation K leqslant A}
\end{figure}

Interestingly, the relations $K \leqslant A$ and $A \leqslant K$ coincide with
the subtyping relation $\leqslant$ of session types~\cite{Gay2005}.
Upon reflection, this makes sense: because the skeleton $K$ admits some of the
semantic objects of the session type $A$, the skeleton $K$ can be considered as
a subtype of the session type $A$.

Due to the rule \textsc{K:Ter}, given a session skeleton $K = \mathbf{1}$, the session type
$\mathbf{1}$ is the only compatible session type.
Hence, a session skeleton $K$ is disallowed from stopping halfway when the
corresponding session type $A$ has not terminated yet, e.g., $\Phi \not\vdash
	\mathbf{1} \leqslant \integer \land \mathbf{1}$.

This restriction eliminates the interdependence between skeletons.
For instance, consider a process $P$ with two channels $c_1$ and $c_2$.
In each iteration, the process $P$ either closes both the channels $c_1$ and
$c_2$ or keeps them open.
If the channels $c_1$ and $c_2$ are open, the process $P$ receives one incoming
message on the channel $c_1$ and two incoming messages on the channel $c_2$.
So if a (worst-case) input on the channel $c_1$ has size $n \in \bbN$, a
(worst-case) input of the channel $c_2$ must have size $2 n$.
That is, there is interdependence between the skeletons of the channels $c_1$
and $c_2$.
Thanks to the rule \textsc{K:Ter}, when a skeleton $K_i$ on a channel $c_i$ ($i
	= 1,2$) terminates, the corresponding session type $A_i$ for the channel $c_i$
must also terminate.
This only happens exactly when (worst-case) inputs on the channels $c_1$ and
$c_2$ are $n$ and $2 n$, respectively, for some $n \in \bbN$.

In the rule \textsc{K:ChannelIn}, the first premise uses the dual judgment.
This is because, in the session type $A_1 \multimap A_2$, the input session type
$A_1$ reverses the roles of the channel provider and client.
In the rule \textsc{K:ExtChoice}, a skeleton $K$ includes all labels from a
non-empty subset $N' \subseteq N$.
As we assume that the channel is provided by a process in the network, the
choice of $i$ in a session type $\& \{\ell_i: A_i \mid i \in N \}$ is made by
the external world.
Therefore, the skeleton is allowed to limit the set of $i$ to choose from.

\subsection{Finite Input Portion of Skeletons}
\label{sec:finite input portion of skeletons}

Worst-case inputs must be finite.
Otherwise, two technical challenges would arise:
\begin{enumerate}
	\item It is non-trivial to define a worst-case input when inputs may be
	      infinite.
	\item Existing SMT solvers cannot solve constraints over infinite worst-case
	      inputs.
\end{enumerate}

\paragraph{Worst-case infinite inputs}

Consider a non-terminating process $\cdot; 0 \vdash P :: (c: A)$,
where\footnote{Although the
	notation $\mu X. A_{X}$ is not officially in the syntax of session types
	(\cref{sec:session types}), we use $\mu X. A_{X}$ to denote an equi-recursive
	session type where $X = \mu X. A_{X}$.}
\begin{equation}
	\label{eq:type in the example illustrating the difficulty of defining worst-case inputs for infinite inputs}
	A \coloneq \& \{ \m{first}: \mu X. \triangleleft^{2} \integer \multimap \integer \otimes X, \m{second}: \mu X. \triangleleft^{1} \integer \multimap \integer \otimes X \}.
\end{equation}
The process $P$ is willing to accept two labels.
If $\m{first}$ is chosen, every iteration on channel $c$ requires 2 units of
potential.
Otherwise, if $\m{second}$ is chosen, every iteration only needs 1 unit of
potential.

It is unclear which scenario should be deemed the worst-case input.
One possible answer is that they both have an equal cost of infinity.
If we spot a recursive session type where every iteration incurs non-zero cost,
then it automatically qualifies as a worst-case input, provided that the path
constraint is solvable.
However, this idea sounds too simplistic.
Another possible answer is that the $\m{first}$ branch results in a higher
cost than $\m{second}$ because the former entails 2 units of cost per
iteration, while the latter incurs only 1 unit of cost per iteration.
Therefore, at any moment in time, the first branch has a higher cumulative cost
than the second branch.
However, this reasoning implicitly treats each iteration in
\cref{eq:type in the example illustrating the difficulty of defining worst-case
	inputs for infinite inputs} equally.
But it is arguable whether the iterations inside the two branches in
\cref{eq:type in the example illustrating the difficulty of defining worst-case
	inputs for infinite inputs} can be treated equally.
For instance, each iteration in the $\m{first}$ branch may take twice as much
time as a single iteration in the $\m{second}$ branch.
However, because SILL provides no information about timing, it is impossible to
tell the cost per unit of time.

\paragraph{Generating infinite data structures}

It is tricky to solve constraint satisfaction problems for infinite data
structures.
By way of example, consider a process $\cdot; 0 \vdash P :: (c: A)$, where $A
	\coloneq \mu X. \bool \multimap X$.
Suppose the process $P$ is implemented such that the only worst-case input is an
alternating sequence of $\true$ and $\false$.
To encode an infinite stream of Booleans, a sensible idea is to use a function
$f: \bbN \to \bool$, where the input is an index in the sequence and the output
is the Boolean value at that index.
The path constraint for the worst-case input, where $\true$ and $\false$
alternate, is
\begin{equation}
	\label{eq:constraint on function f that encodes an infinite stream of Booleans}
	\forall x \in \bbN. f (2 x) = \true \land f (2 x + 1) = \false.
\end{equation}
Here, $f: \bbN \to \bool$ is an uninterpreted function, and we seek a concrete
$f$ that satisfies \cref{eq:constraint on function f
	that encodes an infinite stream of Booleans}.
Unfortunately, the current SMT technologies are incapable of finding suitable
$f$ in this example.
We tested SMT solvers Z3~\cite{deMoura2008} and CVC4~\cite{Barrett2011} on
the SMT-LIB2 encoding of \cref{eq:constraint on function f that encodes an
	infinite stream of Booleans} (\cref{appendix:Checking Finiteness of
	Input Portions}), and neither of them could verify the satisfiability.

To go around these two challenges of infinite worst-case inputs, we require the
input portion of a skeleton to be finite.
\Cref{appendix:Checking Finiteness of Input Portions} provides further details.

\subsection{Input Generation with Loose Cost Bounds}
\label{sec:input generation with loose cost bounds}

In the presence of multiple channels, it is non-trivial to calculate a precise
cost bound due to the intertwining of the input and output across different
channels.
For illustration, consider a process $P$ with two external channels
\begin{equation}
	\label{eq:typing judgment of P in the example illustrating the difficulty of extracting a cost bound}
	c_1: A_1; 0 \vdash P :: (c_2: A_2),
\end{equation}
where
\begin{equation}
	A_1 \coloneq \oplus \{\m{expensive}: \textcolor{red}{\triangleright^{2}}
	\mathbf{1}, \m{cheap}: \mathbf{1} \}
	\qquad
	A_2 \coloneq \oplus
	\{\m{expensive}: \textcolor{red}{\triangleleft^{3}} \mathbf{1}, \m{cheap}:
	\mathbf{1} \}.
\end{equation}
The external world first chooses between $\m{expensive}$, which requires 2
units of potential, and $\m{cheap}$, which requires no potential.
The process $P$ next chooses between $\m{expensive}$ and $\m{cheap}$.
If the process $P$ chooses $\m{expensive}$, 3 units of potential are sent as
input to the process $P$; otherwise, no extra potential is required.

The input and output are intertwined in this example.
The process $P$ first inputs a label (possibly with potential) from the external
world, then outputs a label, and lastly inputs potential again.
Additionally, the input and output happen on different channels: the first input
happens on the channel $c_1$, while the output and second input (i.e., 3 units
of potential) happen on the channel $c_2$.

According to the judgment \labelcref{eq:typing judgment of P in the example
	illustrating the difficulty of extracting a cost bound}, the total cost bound
of the process $P$ seems $2 + 3 = 5$.
It is achieved when $\m{expensive}$ is selected on both the channels $c_1$ and
$c_2$.
Hence, to achieve the worst-case cost, the external world should choose
$\m{expensive}$ on the channel $c_1$.
Hopefully, the process $P$ will choose $\m{expensive}$ as well so that the cost
bound of 5 is fulfilled.

However, the process $P$ may fail to choose $\m{expensive}$ on the channel
$c_2$.
The choice of the process $P$ on the channel $c_2$ may depend on the external
world's choice on the channel $c_1$ in such a way that we cannot have
$\m{expensive}$ on both channels.
For instance, suppose the process $P$ is implemented such that it sends the
opposite label to whatever is received on the channel $c_1$:
\begin{equation}
	\label{eq:code of P in the example illustrating the difficulty of extracting a cost bound}
	P \coloneq \m{case} \; c_1 \; \{ \m{expensive} \hookrightarrow \m{tick} \; 2; c_2.\m{cheap}, \m{cheap} \hookrightarrow \m{tick} \; 3; c_2.\m{expensive} \}.
\end{equation}
The tight cost bound is 3, which is lower than the bound deduced from
\cref{eq:typing judgment of P in the example illustrating the difficulty of
	extracting a cost bound}.
As a result, the worst-case input generation algorithm fails because it cannot
find an execution path where the cost bound is tight.

In fact, SILL is already expressive enough to derive the tight cost bound of 3
for the example \labelcref{eq:code of P in the example illustrating the
	difficulty of extracting a cost bound}.
This is evidenced by another valid typing judgment of the process $P$:
\begin{equation}
	\label{eq:second typing judgment of P in the example illustrating the difficulty of extracting a cost bound}
	c_1: A; 3 \vdash P :: (c_2: A),
\end{equation}
where $A \coloneq \oplus \{\m{expensive}: \mathbf{1}, \m{cheap}: \mathbf{1} \}$.
In \eqref{eq:second typing judgment of P in the example illustrating the
	difficulty of extracting a cost bound}, all necessary potential comes from the
initial constant potential 3 stored in the process $P$, which is tighter than
the cost bound of 5.
Thus, typing judgments can misrepresent cost bounds, as exemplified by
\cref{eq:typing judgment of P in the example illustrating the difficulty of
	extracting a cost bound}, even when resource-aware SILL is capable of deriving
tight cost bounds.

The root cause is that session types are not rich enough to capture information
about the interdependence between input and output.
Suppose resource annotations are scattered over a session type.
When inputs, including the supply of incoming potential, are interspersed with
outputs, an early input affects an output, which in turn affects a later input's
resource-annotated session type.
Consequently, some combinations of inputs may be infeasible.
Furthermore, if the input and output reside on different channels,
resource-annotated session types do not tell us how the paths on different
channels are linked with each other.
In the above example, if the type system returns the typing judgment
\labelcref{eq:second typing judgment of P in the example illustrating the
	difficulty of extracting a cost bound}, we obtain a precise cost bound of 3.
However, if the type system returns the typing judgment \labelcref{eq:typing
	judgment of P in the example illustrating the difficulty of extracting a cost
	bound}, which is an equally valid typing judgment, we obtain a loose cost bound
of 5.
Thus, even if SILL's type system can figure out the interdependence between
input and output on different channels, this information is not captured by
session types.
Therefore, to calculate a precise cost bound, it is not sufficient to just
examine the resource annotations in session types and session skeletons.

Addressing this issue is beyond the scope of this article.
For simplicity, when facing a choice between branches, we sum resource
annotations in session skeletons to obtain the branches' respective cost bounds
and pick the branch with a higher cost bound.
The example \labelcref{eq:typing judgment of P in the example illustrating the
	difficulty of extracting a cost bound} still respects relative completeness of
worst-case input generation (\cref{theorem:relative completeness of worst-case
	input generation}) because the relative-completeness theorem require tight
cost bounds.


\newcommand{\costboundconfig}[1]{\lceil \bowtie \rceil  (#1)}
\newcommand{\colors}[0]{\calC}
\newcommand{\ids}[1]{\text{IDs}_{#1}} \newcommand{\cternary}[3]{(#1) \mathrel{?}
	#2 : #3}

\section{Worst-Case Input Generation Algorithm}
\label{sec:worst-case input generation}

Suppose we are given a SILL program $(P, \Sigma)$ and a collection $K$ of
skeletons for external channels.
The worst-case input generation algorithm is displayed in
\cref{algorithm:worst-case input generation for SILL}.
\begin{algorithm}
	\caption{Worst-case input generation algorithm for a SILL program}
	\label{algorithm:worst-case input generation for SILL}
	\begin{algorithmic}[1]
		\Procedure{WC Input Generation}{$(P, \Sigma), K$}
		\State Run AARA to infer resource-annotated session types of internal and
		external channels\label{aloline:AARA to infer resource annotations}
		\State Check that the skeletons $K$ satisfy all requirements
		\label{aloline:checking the suitability of skeletons}
		\State Run symbolic execution while tracking potential to identify an
		execution path where the cost bound is tight.
		Also, construct a path constraint $\phi$ of this execution path
		\label{aloline:symbolic exeuction}
		\State Solve the path constraint $\phi$ from the previous step
		\label{aloline:solving a path constraint}.
		If the path constraint is infeasible, go back to \cref{aloline:symbolic
			exeuction} to search for another execution path where the cost bound is
		tight.
		\EndProcedure
	\end{algorithmic}
\end{algorithm}

In \cref{aloline:AARA to infer resource annotations}, we run AARA to derive
resource annotations of both internal and external channels.
The resource annotations are used to keep track of potential during symbolic
execution (\cref{aloline:symbolic exeuction}).
In \cref{aloline:checking the suitability of skeletons}, we check the following
requirements for session skeletons of external channels:
\begin{itemize}
	\item The session skeletons $K$ are compatible with their original session
	      types (\cref{sec:compatibility of skeletons with session types}).
	\item The input portions of the session skeletons $K$ are finite
	      (\cref{sec:finite input portion of skeletons}).
\end{itemize}

\Cref{sec:checking tightness of cost bounds} describes how we track potential
during the symbolic execution (\cref{aloline:symbolic exeuction}).
\Cref{sec:symbolic exeuction algorithm formal presentation} formalizes the
symbolic execution.
If the symbolic execution finds an execution path where the cost bound is tight,
the corresponding path constraint $\phi$ is fed to an SMT solver in
\cref{aloline:solving a path constraint}.
If the path constraint $\phi$ is solvable, we obtain a concrete worst-case
input.
Otherwise, if the path constraint $\phi$ is infeasible (i.e., it has no
solutions), we go back to \cref{aloline:symbolic exeuction}, searching for
another execution path where the cost bound is tight.

\subsection{Checking the Tightness of Cost Bounds}
\label{sec:checking tightness of cost bounds}

The symbolic execution searches for an execution path where the cost bound is
tight.
Potential in SILL has a local nature: the potential is distributed across
processes and flows between them.
Hence, instead of tracking the total potential, we locally track individual
units of red potential (i.e., potential supplied by the external world) and blue
potential (i.e., potential freed up by $\m{tick} \; q$ for $q < 0$).

\paragraph{Red potential}

Red potential must eventually be consumed completely.
To check it, we equip each process with a Boolean flag $r \in \{\true,
	\false\}$.
The flag $r = \true$ means the process contains no red potential.
The flag is updated according to the following rules:
\begin{itemize}
	\item When (red) potential is supplied to the process by the external world,
	      we set $r = \false$.
	\item Suppose $q > 0$ units of potential are transferred from one process
	      (which initially has potential $p + q$ and a Boolean flag $r_1$) to
	      another process (which initially has a flag $r_2$).
	      The flag of the sender becomes $r_1 \lor (p = 0)$, and the flag of the
	      recipient becomes $r_1 \land r_2$.
\end{itemize}

%
We forbid processes from throwing away potential when $r = \false$.
Potential is thrown away by the rule $\m{relax}$ in the type system
(\cref{fig:some rules of the type system of resource-aware SILL}).
If it happens, it means the cost bound is not tight.
In such an event, the symbolic execution backtracks and explores another
execution path.
Likewise, red potential is not allowed to flow back to the external world, since
cost bounds only factor in incoming potential from the external world.

\paragraph{Blue potential}

Blue potential must be consumed if its generation precedes the consumption of
red potential.
To formally define what it means for an event (e.g., sending and receiving of
messages, and generation and consumption of potential) to precede another, we
introduce the happened-before relation $\to$ between events~\cite{Lamport1978},
which is a well-established notion in distributed and concurrent computing.

Pictorially, the happened-before relation is illustrated in
\cref{fig:high-water-mark cost in the presence of concurrency}.
The figure contains arrows within each of the processes $P_1$ and $P_2$ and
another arrow between them (for sending and receiving a message).
These arrows indicate the chronological ordering of events (regardless of how
concurrent processes are scheduled).
Taking the transitive closure of these arrows yields the happened-before
relation $\to$.

\begin{defi}[Happened-before relation~\cite{Lamport1978}]
	\label{def:happened-before relation}
	The happened-before relation $\to$ is the smallest binary relation between
	events (e.g., sending and receiving messages) that is closed under the
	following three conditions.
	First, if an event $A$ happens before an event $B$ on the same process, $A \to
		B$ holds, i.e., the event $A$ precedes the event $B$.
	Second, if an event $A$ sends a message and an event $B$ receives the message,
	$A \to B$ holds.
	Third, if $A \to B$ and $B \to C$ are true, so is $A \to C$.
\end{defi}

We now explain how to check whether blue potential, if generated before red
potential is consumed, is also consumed entirely.
Suppose that, during symbolic generation, blue potential is generated by an
expression $\m{tick} \; q$, where $q < 0$, in a process $P$.
We assign a fresh ID, say $ \textcolor{blue}{\text{blue}_{i}} \in \colors$, to
this newly created blue potential.
Here, $\colors = \{\textcolor{red}{\text{red}}\} \cup \{
	\textcolor{blue}{\text{blue}_{i}} \mid i \in \bbN\}$ is the set of all IDs for
red and blue potential.

\Cref{def:synchronization of two processes} defines what it means for one
process to be synchronized with another process since some event.

\begin{defi}[Synchronization of two processes]
	\label{def:synchronization of two processes}
	A process $Q$ is synchronized with another process $P$ since some event of the
	process $P$ if and only if the process $Q$'s current state happens after the
	event.
\end{defi}

For concreteness, suppose that a process $P$ has generated blue potential of the
ID $\textcolor{blue}{\text{blue}_1}$.
During the symbolic execution, we track the following items:
\begin{itemize}
	\item The set of processes that have been synchronized
	      (\cref{def:synchronization of two processes}) with the process $P$ since
	      the generation of the ID $\textcolor{blue}{\text{blue}_1}$.
	      We track such processes by passing around the ID
	      $\textcolor{blue}{\text{blue}_{1}}$ whenever a message is sent by the
	      process $P$ or other processes that carry the ID
	      $\textcolor{blue}{\text{blue}_{1}}$.
	      At any moment, the set of synchronized processes is given by the set of
	      processes carrying the ID $\textcolor{blue}{\text{blue}_{1}}$.
	\item What other potential is consumed by a process synchronized with the
	      process $P$.
	\item Whether the potential $\textcolor{blue}{\text{blue}_1}$ is completely
	      consumed.
\end{itemize}

To understand how tracking these items helps us detect loose cost bounds,
consider a scenario where
\begin{enumerate*}[label=(\roman*)]
	\item a process $Q$ has been synchronized with the process $P$;
	\item the process $Q$ consumes \textcolor{red}{red} potential; and
	\item the potential $\textcolor{blue}{\text{blue}_1}$ is not entirely consumed
	      by the end of program execution.
\end{enumerate*}
Because the potential $\textcolor{blue}{\text{blue}_1}$ is generated before the
potential \textcolor{red}{red} is consumed, we can substitute blue potential for
red potential, thereby lowering the amount of necessary red potential without
plunging into negative potential.
Hence, the cost bound is not tight in this case.

More generally, we can substitute blue potential for red potential along a chain
of distinct blue potential, each with a different ID.
For instance, we want to substitute certain blue potential (with ID, say,
$\textcolor{blue}{\text{blue}_1}$) for \textcolor{red}{red} potential, where the
potential $\textcolor{blue}{\text{blue}_1}$ was generated before the consumption
of \textcolor{red}{red} potential.
However, the potential $\textcolor{blue}{\text{blue}_1}$ is entirely consumed.
So we must find another blue potential (say $\textcolor{blue}{\text{blue}_2}$)
that can substitute for the potential $\textcolor{blue}{\text{blue}_1}$.
This is possible when
\begin{enumerate*}[label=(\roman*)]
	\item the potential $\textcolor{blue}{\text{blue}_2}$ is generated before the
	      potential $\textcolor{blue}{\text{blue}_1}$ is consumed and
	\item the potential $\textcolor{blue}{\text{blue}_2}$ is not entirely
	      consumed.
\end{enumerate*}
\Cref{fig:chain of the happened-before relation} depicts this situation where we
have two distinct potential, namely $\textcolor{blue}{\text{blue}_1}$ and
$\textcolor{blue}{\text{blue}_2}$, whose generation and consumption are related
by the happened-before relation.

\begin{figure}[t]
	\centering
	\includegraphics[scale=1]{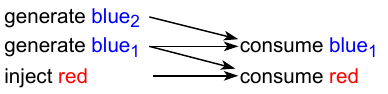}
	\caption{Alternating chain of the happened-before relation between events.
		The label ``inject \textcolor{red}{red}'' refers to an event of injecting
		red potential to some process from the external world.
		The label ``generate \textcolor{blue}{blue}'' refers to the generation of
		blue potential.}
	\label{fig:chain of the happened-before relation}
\end{figure}

If the potential $\textcolor{blue}{\text{blue}_2}$ is not completely consumed,
we can push the leftover potential of $\textcolor{blue}{\text{blue}_2}$, from
``generate $\textcolor{blue}{\text{blue}_2}$'' to ``inject
\textcolor{red}{red},'' along the alternating path of $\to$ in \cref{fig:chain
	of the happened-before relation}.
As a result, we can reduce the total \textcolor{red}{red} potential injected to
the program without plunging into negative potential, thereby lowering the cost
bound.

To correctly detect such alternating paths of the happened-before relation
$\to$, symbolic execution maintains a graph whose set of nodes is $\colors$
(i.e., set of IDs for red and blue potential).
The graph has an edge $(v_1, v_2)$ if and only if potential $v_2$ is consumed
after potential $v_1$ is generated.
In this graph, if there is a path from a node $\text{blue}_{i}$ to a node
$\text{red}$ such that the node $\text{blue}_{i}$ is not entirely consumed, then
the cost bound is not tight.
If the cost bound is detected to be loose on the current execution path, the
symbolic execution backtracks and explores another execution path.

\Cref{theorem:checking tightness of cost bounds} states the correctness of
tracking potential.
A proof is given in \cref{appendix:checking tightness of cost bounds}.

\begin{thm}[Checking tightness of cost bounds]
	\label{theorem:checking tightness of cost bounds}
	If red and blue potential is tracked without encountering issues, then the
	cost bound is tight.
\end{thm}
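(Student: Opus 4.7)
The plan is to prove the contrapositive: if the cost bound is not tight, then the potential-tracking procedure detects an issue during symbolic execution. A cost bound is loose precisely when the total red potential supplied by the external world can be strictly lowered without driving the global potential negative at any reachable configuration. Such a reduction is achievable in exactly two ways: (i) some red potential is wasted---either discarded by an application of the \m{relax} rule or left unconsumed when the program terminates; or (ii) there exists an alternating happened-before chain, of the kind depicted in \cref{fig:chain of the happened-before relation}, along which leftover blue potential can be pushed backwards to substitute for red potential.

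First, I would dispatch scenario (i) using the per-process Boolean flag $r$. The update rules for $r$ given just before the theorem statement straightforwardly maintain the invariant that $r = \true$ on a process iff that process currently carries no red potential; I would establish this by induction on the cost semantics steps of \cref{fig:some rules of the cost semantics of SILL}. Since the algorithm forbids any \m{relax}-style discarding of potential when $r = \false$, and since termination with leftover red potential forces some process to end in a state with $r = \false$, the algorithm flags case (i) immediately.

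For scenario (ii), I would argue that the graph on $\colors$ maintained during symbolic execution faithfully over-approximates the relevant fragment of the happened-before relation (\cref{def:happened-before relation}). The core invariant is that an edge $(v_1, v_2)$ is recorded whenever, in the actual execution, the potential $v_2$ is consumed in a process state that is synchronized (\cref{def:synchronization of two processes}) with the generation event of $v_1$. The mechanism for this is that every message predicate $\msg{c, w, M}$ carries the set of blue IDs whose generation happened-before the corresponding send, and the receiving process inherits these IDs; a case analysis over the rewrite rules ($\m{spawn}$, ${\supset} S / R$, forwarding, $\mathbf{1} S / R$, and the $\m{pay}/\m{get}$ rules) establishes propagation closure. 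Given this, I would induct on the length of an alternating chain $\text{blue}_{i_1} \to \text{red}, \text{blue}_{i_2} \to \text{blue}_{i_1}, \ldots$ witnessing looseness: at least one $\text{blue}_{i_k}$ at the tail of the chain must have unconsumed potential (else no substitution is possible), and by the graph invariant a path from this $\text{blue}_{i_k}$ to $\text{red}$ is present, which is exactly the pattern the tracker rejects.

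The main obstacle will be making the graph-to-happened-before correspondence precise enough to support the induction in scenario (ii), especially for edges created across process boundaries. I expect the trickiest cases to be forwarding ($c_1 \leftarrow c_2$), which essentially identifies two channels and thus must merge ID sets, and the termination rules $\mathbf{1} S / \mathbf{1} R$, where residual potential and accumulated IDs from a closing process must be transferred to its waiting client without losing the happened-before information. A careful statement of the invariant relating configurations to the tracker's graph, preserved by each rewriting rule, is the technical heart of the proof; once this invariant is in place, the two scenarios combine to yield the contrapositive and hence the theorem.
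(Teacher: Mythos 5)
Your proposal inverts the direction of the paper's argument, and in doing so it buries the hard part of the theorem inside an unjustified claim. You set up a contrapositive proof and assert that a loose bound "is achievable in exactly two ways": wasted red potential, or an alternating happened-before chain with leftover blue potential. That exhaustiveness claim \emph{is} the theorem. Everything you then prove---that the Boolean flag $r$ detects scenario (i), and that the graph on $\colors$ detects scenario (ii)---only establishes that the tracker catches those two particular failure modes; it does not rule out that the bound could be loose for some third reason that the tracker never sees. Nothing in your outline explains why, when neither (i) nor (ii) occurs, the supplied red potential cannot still be lowered (e.g., by some other temporal or spatial redistribution of potential across processes). So the contrapositive, as you have decomposed it, is circular at its first step.

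The paper closes exactly this gap by arguing the implication directly and constructively: assuming red potential is fully consumed (its \textsc{Prop} on red-potential depletion, which matches your scenario-(i) analysis) and no unconsumed-blue-to-red path exists in the graph, it \emph{exhibits a schedule} witnessing tightness. Each process $P_i$ is run up to the moment $t_i$ at which it has consumed all of its potential; the no-bad-chain condition is what guarantees these stopping points are mutually consistent with the happened-before constraints (any potential generated before some $t_j$ is already gone, and no potential is stranded in in-flight messages), and at the combined instant the net cost equals the total red potential, so the high-water mark meets the bound. If you want to keep your contrapositive framing, you would still need to prove that the absence of (i) and (ii) implies the existence of such a schedule---at which point you have reproduced the paper's direct construction anyway. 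Your observations about propagating ID sets through forwarding and the $\mathbf{1}\,S/\mathbf{1}\,R$ rules are relevant to the correctness of the tracking mechanism itself, but they belong to the supporting invariants, not to the tightness argument that the theorem actually demands.
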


\subsection{Symbolic Execution}
\label{sec:symbolic exeuction algorithm formal presentation}

The symbolic execution runs a SILL program on a skeleton while keeping track of
potential.
Because the operational semantics of SILL is given by a multiset rewriting
system, we also use it to define the symbolic execution.
The symbolic execution involves two types of predicates:
\begin{equation}
	\label{eq:two predicates used in symbolic execution}
	\proc{\Delta; q \vdash P :: (c: A), \phi, \ids{}} \qquad \msg{c, M, \phi, \ids{}}.
\end{equation}

The first predicate of \cref{eq:two predicates used in symbolic execution}
represents a well-typed process $\Delta; q \vdash P :: (c: A)$, where $A$ is
either a resource-annotated skeleton (if $c$ is an external channel) or a
resource-annotated session type (if $c$ is an internal channel).
A logical formula $\phi$ is a path constraint so far and will later be fed to an
SMT solver.
The component $\ids{} = (\ids{s}, \ids{p})$ is a pair of finite sets of
potential's IDs.
The first set $\ids{s} \subset \{\textcolor{blue}{\text{blue}_{i}} \mid i
	\in\bbN \}$ tracks synchronization: if the set $\ids{s}$ of a process $P$
contains an ID $\textcolor{blue}{\text{blue}_{i}}$, then the process $P$'s
current state happens after the potential $\textcolor{blue}{\text{blue}_{i}}$
was generated.
The second set $\ids{p} \subset \colors$ tracks potential transfer: if the set
$\ids{p}$ of a process $P$ contains an ID $i$, it means the process $P$ contains
the (red or blue) potential identified by the ID $i$.

The second predicate $\msg{c, M, \phi, \ids{}}$ in \cref{eq:two predicates used
	in symbolic execution} represents a message $M$ (encoded as as process) that
provides a channel $c$.
A logical formula $\phi$ is a path constraint carried by the message.

\Cref{fig:key rules in the symbolic execution for the process layer} displays
key rewriting rules.
The remaining rules are given in \cref{appendix:symbolic execution}.

\begin{figure}[t]
	\begin{small}
		\begin{mathpar}
			\inferrule
			{\proc{\Delta, c: b \supset A; p \vdash \m{send} \; c \; e; P :: (d: D),
					\phi_1, (\ids{s}, \ids{p})} \\ e \Downarrow \tuple{\phi_2, v} \\
				c' \text{ is fresh}} {\proc{\Delta, c': A; p \vdash P [c' / c] :: (d:
					D), \phi_1, (\ids{s}, \ids{p})} \\ \msg{c', \m{send} \; c \; v; c'
					\leftarrow c, \phi_2, (\ids{s}, \emptyset)}} {\supset} S
			\and
			\inferrule
			{\proc{\Delta; p \vdash \m{case} \; c \; \{\ell_{i} \hookrightarrow P_{i}
					\mid i \in N\} :: (c: \&_{x} \{\ell_i: A_i \mid i \in N \}), \phi,
					(\ids{s}, \ids{p})} \\ A_k \text{ has the highest cost bound}}
			{\proc{\Delta; p \vdash P_{k} [c' / c] :: (c': A_k), \phi \land (x =
					k), (\ids{s}, \ids{p})}} {\&} R_{\text{external}}
			\and
			\inferrule
			{\proc{\cdot; p \vdash \m{close} \; c :: (c: \mathbf{1}), \phi, (\ids{s},
					\ids{p})} \\ \textcolor{red}{\text{red}} \notin \ids{p}} {\msg{c,
					\m{close} \; c, \phi, (\ids{s}, \emptyset)}} \mathbf{1} S
			\and
			\inferrule
			{\proc{\Delta, c: \triangleleft^{q} A; p+q \vdash \m{pay} \; c \; \{q\}; P
					:: (d: B), \phi, (\ids{s}, \ids{p})} \\\\ c' \text{ is fresh} \\
				\textcolor{red}{\text{red}} \notin \ids{p}} {\proc{\Delta, c': A; p
					\vdash P [c' / c] :: (d: B), \phi, (\ids{s}, \cternary{p =
						0}{\emptyset}{\ids{p}} ) } } {\triangleleft} L_{\text{external}}
			\and
			\inferrule
			{\proc{\Delta; p \vdash \m{get} \; c \; \{q\}; P :: (c: \triangleleft^{q}
					A), \phi, (\ids{s}, \ids{p})}} {\proc{\Delta; p+q \vdash P :: (c: A),
					\phi, (\ids{s}, \ids{p} \cup \{ \textcolor{red}{\text{red}} \}) }}
			{\triangleleft} R_{\text{external}}
			\and
			\inferrule
			{\proc{\Delta; p + q \vdash \m{tick} \; q; P :: (c: A), \phi, (\ids{s},
					\ids{p})}} {\proc{\Delta; p \vdash P :: (c: A), \phi, (\ids{s},
					\cternary{p = 0}{\emptyset}{\ids{p}} ) }} \m{tick}_{> 0}
			\and
			\inferrule
			{\proc{\Delta; p \vdash \m{tick} \; (- q); P :: (c: A), \phi, (\ids{s},
					\ids{p})} \\ \textcolor{blue}{\text{blue}_{i}} \in \colors \text{ is
					fresh}} {\proc{\Delta; p + q \vdash P :: (c: A), \phi, (\ids{s} \cup
					\{\textcolor{blue}{\text{blue}_{i}} \}, \ids{p} \cup \{
					\textcolor{blue}{\text{blue}_{i}} \} ) }} \m{tick}_{< 0}
		\end{mathpar}
	\end{small}
	\caption{Key rules in the process-layer symbolic execution.
		Throughout the rules, we have $q > 0$.
		A judgment $e \Downarrow \tuple{\phi, v}$ means a functional term $e$
		evaluates to a (symbolic) value $v$ with a path constraint $\phi$.
		A ternary operator $\cternary{b}{e_1}{e_2}$ returns $e_1$ if the Boolean
		value $b$ evaluates to true and $e_2$ otherwise.}
	\label{fig:key rules in the symbolic execution for the process layer}
\end{figure}

In the rule ${\supset} S$, when a message is sent, it also carries the set
$\ids{s}$ of the sender.
It is then added to the set $\ids{s}$ of the recipient.

The rule ${\&} R_{\text{external}}$ resolves an external choice on an external
channel.
A label $k \in N$ is chosen such that the skeleton $A_k$ has the highest cost
bound among $\{A_i \mid i \in N\}$.
The path constraint $\phi$ is then augmented with a constraint $x = k$,
indicating that the external world should choose $k \in N$ to trigger the
worst-case behavior.
If we find out later that the current execution path's cost bound is not tight,
we backtrack and try a different $k' \neq k$ such that $A_{k'}$ has the highest
cost bound.
If the algorithm fails to find any $A_k$ ($k \in N$) with the highest cost bound
that is tight, then algorithm returns no worst-case inputs.

The rule $\mathbf{1}_{S}$ forbids red potential from being wasted: red potential
must not remain when a process terminates.
Likewise, the rule $\triangleleft L_{\text{external}}$, which is for an external
channel, forbids red potential from flowing back to the external world.
Furthermore, if we waste blue potential (i.e., $\ids{p} \setminus \{
	\textcolor{red}{\text{red}} \} \neq \emptyset$) in the rules $\mathbf{1}_{S}$
and $\triangleleft L_{\text{external}}$, we must record its IDs because we do
not want to waste blue potential that could have been substituted for red
potential.

In the rule $\triangleleft R_{\text{external}}$, the process receives red
potential from the external world.
So the ID $\textcolor{red}{\text{red}}$ is added to the set $\ids{p}$ of the
recipient.

Finally, we have two rules for the construct $\m{tick}$.
In the rule $\m{tick}_{> 0}$, the potential stored in the process is consumed.
Whenever this rule is applied, we must record the pair $(\ids{s}, \ids{p})$.
This pair indicates which blue-potential ID was generated before red potential
is consumed.
In the rule $\m{tick}_{< 0}$, blue potential is generated.
Hence, we generate a fresh ID and add it to the sets $\ids{s}$ and $\ids{p}$.

\subsection{Soundness and Relative Completeness}

\Cref{theorem:checking tightness of cost bounds} assures us that symbolic
execution's high-level idea is sound.
However, it assumes that the symbolic execution algorithm correctly tracks costs
and potential.
To prove this assumption, we show a simulation between the symbolic execution
and cost semantics.
The simulation then leads to the soundness and relative completeness of our
worst-case input generation algorithm.

\paragraph{Cost bounds encoded by resource annotations}

\Cref{def:cost bounds of skeletons} defines the cost bound of a
resource-annotated skeleton.

\begin{defi}[Cost bounds of skeletons]
	\label{def:cost bounds of skeletons}
	Given a skeleton $K$ for an external channel provided by a process, its cost
	bound, denoted by $\costboundleft{K}$, is defined as follows.
	\begin{align*}
		\costboundleft{H \supset K}       & \coloneq \costboundleft{K}                           & \costboundleft{\&_{x} \{\ell_{i}: K_{i} \mid i \in N \}} & \coloneq \max_{i \in N} \costboundleft{K_{i}} \\
		\costboundleft{b \land K}         & \coloneq \costboundleft{K}                           & \costboundleft{\oplus \{\ell_{i}: K_{i} \mid i \in N \}} & \coloneq \max_{i \in N} \costboundleft{K_{i}} \\
		\costboundleft{K_1 \multimap K_2} & \coloneq  \costboundright{K_1} + \costboundleft{K_2} & \costboundleft{\triangleleft^{q} K}                      & \coloneq q + \costboundleft{K}                \\
		\costboundleft{K_1 \otimes K_2}   & \coloneq  \costboundleft{K_1} + \costboundleft{K_2}  & \costboundleft{\triangleright^{q} K}                     & \coloneq \costboundleft{K}                    \\
		\costboundleft{\mathbf{1}}        & \coloneq 0.
	\end{align*}
	The dual cost bound $\costboundright{\cdot}$, which is used in the definition
	of $\costboundleft{K_1 \multimap K_2}$, is defined analogously.
\end{defi}


In order for \cref{def:cost bounds of skeletons} to make sense, the input
portion of the skeleton $K$ must be finite.
$\triangleleft^{q}$ contributes to cost bounds in \cref{def:cost bounds of
	skeletons}, but $\triangleright^{q}$ does not.
If the cost bounds factored in outgoing potential (i.e., $\triangleright^{q}$)
as well as incoming potential (i.e., $\triangleleft^{q}$), the cost bounds might
depend on the output size.
In skeletons, while the input size is fixed, the output size is not known
statically.
Consequently, before looking for an execution path with a tight cost bound, the
worst-case input generation algorithm would first need to maximize the cost
bound (by minimizing the output size).
As this will complicate the algorithm, we do not factor outgoing potential into
cost bounds.

\begin{defi}[Cost bounds of configurations]
	\label{def:cost bounds of configurations}
	Suppose $C$ is a configuration with external channels $c_1, \ldots, c_m,
		c_{m+1}, \ldots, c_{n}$.
	Here, the channels $c_1, \ldots, c_{m}$ are provided by processes in the
	configuration $C$, whereas the channels $c_{m+1}, \ldots, c_{n}$ are provided
	by the external world.
	Let $K_1, \ldots, K_{n}$ be the skeletons of external channels.
	Also, let $p$ be the total potential locally stored in processes in the
	configuration $C$.
	The cost bound of the configuration $C$ is defined as
	\begin{equation}
		\costboundconfig{C} \coloneq p + \sum_{i = 1}^{m} \costboundright{K_{i}} + \sum_{i = m+1}^{n} \costboundleft{K_{i}}.
	\end{equation}
	Here, $\costboundright{K_{i}}$ denotes the cost bound of a skeleton $K_{i}$
	when the external channel is provided by some process in the network
	(\cref{def:cost bounds of skeletons}).
	The dual cost bound is $\costboundleft{K_{i}}$.
\end{defi}

\paragraph{Similarity relation and simulation}

\Cref{def:similarity between predicates} defines a similarity relation between
predicates.
This relation can be lifted from predicates to configurations
(\cref{def:similarity between configurations}).

\begin{defi}[Similarity between predicates]
	\label{def:similarity between predicates}
	Fix $S$ to be a solution (i.e., a mapping from skeleton variables to concrete
	values) to a path constraint generated by the symbolic execution.
	The similarity relation $\sim$ between a predicate in symbolic execution and a
	predicate in the cost semantics is defined by
	\begin{small}
		\begin{mathpar}
			\inferrule
			{S \vdash P_{\text{sym}} = P_{\text{cost}}} {\proc{\uscore; \uscore \vdash
					P_{\text{sym}} :: (c: \uscore), \uscore, \uscore} \sim \proc{c,
					P_{\text{cost}}}}
			\and
			\inferrule
			{S \vdash M_{\text{sym}} = M_{\text{cost}}} {\msg{c, M_{\text{sym}},
					\uscore} \sim \msg{c, M_{\text{cost}}}}.
		\end{mathpar}
	\end{small}
	Here, the premise $S \vdash P_{\text{sym}} = P_{\text{cost}}$ means a
	predicate $P_{\text{sym}}$ in the symbolic execution and a predicate
	$P_{\text{cost}}$ in the cost semantics are identical under the mapping $S$.
\end{defi}

\Cref{prop:simulation for soundness} establishes a simulation between the
symbolic execution and cost semantics.
A proof is given in \cref{appendix:soundness and relative completeness}.

\begin{prop}[Simulation for soundness]
	\label{prop:simulation for soundness}
	Suppose we are given three configurations: $C_{1, \text{sym}}$, $C_{2,
				\text{sym}}$, and $C_{1, \text{cost}}$.
	The first two configurations are used in the symbolic execution, and the last
	one is used in the cost semantics.
	These configurations satisfy two conditions:
	\begin{enumerate*}[label=(\roman*)]
		\item $C_{1, \text{sym}}$ transitions to $C_{2, \text{sym}}$ in one step of
		      the symbolic execution and
		\item $C_{1, \text{sym}} \sim C_{1, \text{cost}}$ holds.
	\end{enumerate*}
	Then there exists a configuration $C_{2, \text{cost}}$ of the cost semantics
	such that the following diagram commutes:
	\begin{equation}
		\label{eq:commutative diagram for the simulation for soundness}
		\xymatrix{
		C_{1, \text{sym}} \ar[r]_{w} & C_{2, \text{sym}} \\
		C_{1, \text{cost}} \ar@{~}[u] \ar[r]_{w}^>{\leq 1} & C_{2, \text{cost}} \ar@{~}[u]
		}
	\end{equation}
	In the commutative diagram \labelcref{eq:commutative diagram for the
		simulation for soundness}, $C_{1, \text{sym}} \xrightarrow[w]{} C_{2,
				\text{sym}}$ means $\costboundconfig{C_{1, \text{sym}}} -
		\costboundconfig{C_{2, \text{sym}}} = w$, where $\costboundconfig{\cdot}$
	denotes a cost bound of a configuration (\cref{def:cost bounds of
		configurations}).
	Likewise, $C_{1, \text{cost}} \xrightarrow[w]{} C_{2, \text{cost}}$ means the
	configuration $C_{1, \text{cost}}$ transitions to $C_{2, \text{cost}}$ such
	that the net cost increases by $w$.
	The arrow $\rightarrow^{\leq 1}$ means the number of steps is either zero or
	one.
\end{prop}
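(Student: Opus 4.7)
The plan is to proceed by case analysis on the symbolic-execution rule that witnesses the transition $C_{1,\text{sym}} \to C_{2,\text{sym}}$, exhibiting in each case either a matching one-step cost-semantics transition or a zero-step stutter. Concretely, I would partition the rules of \cref{fig:key rules in the symbolic execution for the process layer} (and the remaining rules deferred to the appendix) into three groups: (a) communication/structural rules such as ${\supset}S$, ${\supset}R$, $\mathbf{1}S$, $\mathbf{1}R$, $\m{spawn}$, forwarding, and the potential-transfer rules $\triangleleft L/R_{\text{external}}$ (together with the dual $\triangleright$ rules); (b) the symbolic choice-resolution rules such as ${\&}R_{\text{external}}$ and $\oplus L_{\text{external}}$, which commit a skeleton variable to a concrete label by extending the path constraint; and (c) the cost-annotation rules $\m{tick}_{>0}$ and $\m{tick}_{<0}$.

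For group (a), each symbolic rule mirrors an analogous rule in \cref{fig:some rules of the cost semantics of SILL}, so the matching cost-semantics step is immediate; I would verify that the updated symbolic predicate stays in the similarity relation (\cref{def:similarity between predicates}) under the fixed solution $S$, which reduces to checking that process terms and message contents on the two sides remain identical modulo $S$. For these rules, the cost bound $\costboundconfig{\cdot}$ is preserved, matching a net-cost increase of $w = 0$ in the cost semantics; the key bookkeeping is that shrinking a $\triangleright^{q}$ on one side of a channel is compensated by the symmetric shrinking of $\triangleleft^{q}$ on the dual side, so the sum in \cref{def:cost bounds of configurations} is unchanged. For group (b), the cost semantics takes zero steps (the bottom arrow of \labelcref{eq:commutative diagram for the simulation for soundness} is the reflexive step), which is exactly what $\rightarrow^{\leq 1}$ allows; I would show that conjoining $x = k$ to the path constraint is consistent with $S$ because $S$ is assumed to solve the overall path constraint, and that $\costboundconfig{\cdot}$ is unchanged since $K_k$ already achieves the maximum cost bound by choice of $k$. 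For group (c), applying $\m{tick}_{>0}$ to $\m{tick}\,q;P$ with $q > 0$ mirrors the cost-semantics rule $\m{tick}$ which bumps the net cost by $q$; the potential $p + q$ on the symbolic side drops to $p$, so $\costboundconfig{C_{1,\text{sym}}} - \costboundconfig{C_{2,\text{sym}}} = q$, matching the net-cost increase. The $\m{tick}_{<0}$ case is dual, with $q < 0$, using the non-monotone extension of the $\m{tick}$ rule.

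The main obstacle will be the careful accounting in group (a) around potential transfer and around rules that spawn new processes or close channels. For $\m{spawn}$, I must verify that redistributing the potential $p + q$ between the child and parent process preserves the total potential term in \cref{def:cost bounds of configurations}, and that the fresh channel $c'$ is handled consistently on both sides. For $\mathbf{1}S$ followed by $\mathbf{1}R$ I must account for the transport of the net-cost counter $w$ through the message predicate, which in the symbolic world is invisible (the symbolic side tracks potential, not accumulated net cost). A subtler point is the side-condition $\textcolor{red}{\text{red}} \notin \ids{p}$ on $\mathbf{1}S$ and $\triangleleft L_{\text{external}}$: these conditions block certain symbolic transitions but never affect the matching cost-semantics step, which is exactly why the lemma holds in the direction of \emph{soundness} (we produce a corresponding cost-semantics run) rather than completeness.

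Finally, I would state the proof as a structured induction on the derivation of $C_{1,\text{sym}} \to C_{2,\text{sym}}$, using \cref{def:similarity between predicates} and its lifting to configurations, and invoking the type-preservation property inherited from resource-aware SILL (the machinery behind \cref{theorem:soundness of resource-aware SILL}) to ensure that well-typed predicates remain well-typed after each step, so that $\costboundconfig{\cdot}$ is well-defined throughout. The theorem then follows by discharging each of the finitely many rule cases against the commuting square \labelcref{eq:commutative diagram for the simulation for soundness}.
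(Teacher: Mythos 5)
Your overall strategy---case analysis on the symbolic-execution rewriting rules, matching each communication/tick rule with its cost-semantics counterpart and matching the purely symbolic rules (choice resolution, potential bookkeeping) with a zero-step stutter---is exactly the paper's approach, and your treatment of the $\m{tick}$ and choice-resolution cases is correct and more explicit than the paper's own sketch.

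However, there is one concrete point where your accounting goes wrong, and it is precisely the one subtlety the paper's proof singles out. You claim that for all group~(a) rules the cost bound $\costboundconfig{\cdot}$ is preserved. This fails for the termination and forwarding rules: in the symbolic rule $\mathbf{1}S$ the premise is a process $\cdot; p \vdash \m{close}\;c :: (c:\mathbf{1})$ that may hold $p > 0$ units of potential (via $\m{relax}$; the side condition only excludes \emph{red} potential), while the conclusion is a message predicate carrying no potential, so $\costboundconfig{C_{1,\text{sym}}} - \costboundconfig{C_{2,\text{sym}}} = p$ even though the matching cost-semantics step $\mathbf{1}S$ leaves the net cost unchanged. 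The same happens for $\m{fwd}_{s}$. As literally stated, the commuting square \labelcref{eq:commutative diagram for the simulation for soundness} therefore does not hold for these rules; the paper repairs this by stipulating that the potential that would be discarded at termination and forwarding is instead saved (so that $\costboundconfig{\cdot}$ does not silently drop). Your proposal needs either this amendment or an explicit argument that the discarded potential is zero on the execution paths under consideration; without one of these, the $\mathbf{1}S$ and forwarding cases of your induction do not close. The rest of your case analysis, including the observation that the side conditions $\textcolor{red}{\text{red}} \notin \ids{p}$ only restrict which symbolic transitions exist and never obstruct the construction of the matching cost-semantics step, is sound.
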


In \cref{prop:simulation for soundness}, one transition step in symbolic
execution may correspond to zero steps in the cost semantics.
It happens when we transfer potential by rewriting rules such as
${\triangleleft} L_{\text{external}}$ in the symbolic execution, which do not
have corresponding rules in the cost semantics.

\Cref{theorem:soundness of worst-case input generation} states the soundness of
the worst-case input generation algorithm.
In the statement of the theorem, $\interp{K_1, \ldots, K_n}$ denotes the set of
all possible inputs conforming to session skeletons $K_1, \ldots, K_n$ on
channels $c_1, \ldots, c_n$.
Each input is a multiset of predicates $\msg{c, M}$ for channels $c$ and
messages $M$.
A formal definition of the set of possible inputs is given in \cref{def:set of
	possible inputs}.

\begin{thm}[Soundness of worst-case input generation]
	\label{theorem:soundness of worst-case input generation}
	Given a collection $K_1, \ldots, K_n$ of skeletons, suppose the symbolic
	execution algorithm successfully terminates.
	Let $\phi$ be a path constraint generated by the symbolic execution and $t \in
		\interp{K_1, \ldots, K_n}$ be an input satisfying $\phi$.
	Then $t$ has the highest high-water-mark cost of all inputs from $\interp{K_1,
			\ldots, K_n}$.
\end{thm}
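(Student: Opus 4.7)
The plan is to combine three ingredients already established in the paper: the soundness of resource-aware SILL (\cref{theorem:soundness of resource-aware SILL}), which yields an upper bound on the high-water mark of every input; the tightness-checking correctness (\cref{theorem:checking tightness of cost bounds}), which guarantees the cost bound is actually achievable along the identified execution path; and the simulation (\cref{prop:simulation for soundness}), which lifts a symbolic run to a concrete cost-semantics run. Together these give an upper bound that is matched by the concrete execution induced by $t$.

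First, I would fix an arbitrary $t' \in \dbra{K_1, \ldots, K_n}$ and argue the upper bound. Forming the initial configuration $C_0^{t'}$ in the cost semantics by feeding $t'$ to the program on the external channels $c_1, \ldots, c_n$, the typing of the skeletons together with \cref{theorem:soundness of resource-aware SILL} shows that the high-water mark of any configuration reachable from $C_0^{t'}$ is at most the total initial potential. By \cref{def:cost bounds of configurations}, this total potential equals $\costboundconfig{C_0}$, where $C_0$ is the initial symbolic configuration corresponding to the skeletons. Hence every $t' \in \dbra{K_1, \ldots, K_n}$ has high-water-mark cost at most $\costboundconfig{C_0}$.

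Second, I would show that $t$ achieves this bound. Since the symbolic execution terminated successfully, we obtained a symbolic trace $C_0 \to C_1 \to \cdots \to C_N$ along which potential was tracked without encountering the failure conditions described in \cref{sec:checking tightness of cost bounds} (no wasted red potential, no blue-to-red alternating path with unconsumed blue potential, no backward flow of red potential), and the generated path constraint $\phi$ is satisfied by $t$. Iteratively applying \cref{prop:simulation for soundness} along this symbolic trace, starting from $C_0 \sim C_0^{t}$, produces a cost-semantics execution $C_0^{t} \to^{\ast} C_N^{t}$ with $C_i \sim C_i^{t}$ for every $i$. The commutative diagram in \cref{prop:simulation for soundness} tells us that at every step the decrease in $\costboundconfig{\cdot}$ of the symbolic configuration is matched by the increase in net cost of the cost-semantics configuration. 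Summing these local equalities over the whole trace gives a net-cost increase equal to $\costboundconfig{C_0} - \costboundconfig{C_N}$. Finally, \cref{theorem:checking tightness of cost bounds} certifies that the bound is tight along the identified path, which (combined with the terminality criteria ensuring $\costboundconfig{C_N} = 0$) forces the total net cost of the lifted execution to equal $\costboundconfig{C_0}$, and hence the high-water mark of the execution on $t$ also reaches $\costboundconfig{C_0}$.

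Combining the two directions, $t$ has high-water-mark cost equal to $\costboundconfig{C_0}$, which upper-bounds the cost of any $t' \in \dbra{K_1, \ldots, K_n}$; thus $t$ is a worst-case input. The main obstacle will be the second step: carefully articulating how the commutative diagram's accounting of $\costboundconfig{\cdot}$ (which is a static, skeleton-derived quantity) translates into the \emph{high-water mark} (a dynamic supremum over reachable sub-configurations), rather than merely the net cost. Bridging this gap requires using the fact that at each synchronization point along the simulated trace, the potential transfer tracked by $\ids{s}$ and $\ids{p}$ is exactly what permits the peak net costs of different processes to be summed; the non-failure of the tightness check ensures that no potential is stranded, so the configuration-level cost bound is realized as the actual high-water mark rather than a strict over-approximation.
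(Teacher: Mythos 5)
Your proposal is correct and follows essentially the same route as the paper's own proof, which likewise combines \cref{prop:simulation for soundness} (correct tracking of potential and net cost along the symbolic trace) with \cref{theorem:checking tightness of cost bounds} (tightness of the bound for the solution $t$). Your write-up is in fact more explicit than the paper's terse three-sentence argument: you spell out the upper-bound direction via \cref{theorem:soundness of resource-aware SILL} and flag the net-cost-versus-high-water-mark gap, both of which the paper leaves implicit (the latter is handled inside the appendix proof of \cref{theorem:checking tightness of cost bounds} by explicitly constructing a schedule realizing the bound).
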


\begin{proof}
	\Cref{prop:simulation for soundness} shows that the symbolic execution
	correctly tracks of potential and net cost: both of them change by the same
	amount (but in the opposite direction).
	Furthermore, by \cref{theorem:checking tightness of cost bounds}, the
	high-water-mark cost bound is tight for the solution $t$ to the path
	constraint $\phi$.
	Therefore, the high-water-mark cost of the input $t$ to the SILL program is
	indeed the highest among all possible inputs $\interp{K_1, \ldots, K_n}$.
\end{proof}

Finally, \cref{theorem:relative completeness of worst-case input generation}
states the relative completeness of the algorithm.

\begin{thm}[Relative completeness of worst-case input generation]
	\label{theorem:relative completeness of worst-case input generation}
	Given a SILL program and a collection $K$ of resource-annotated session
	skeletons, assume the following:
	\begin{enumerate}[label=(A\arabic*),ref=A\arabic*]
		\item The processes are typable in resource-aware SILL.
		      \label{assumption:the proceses are typable}
		\item The cost bound of $K$ is tight on some finitely long execution path of
		      the SILL program. \label{assumption:the cost bound is tight on a finite executon path}
		\item The background theory for path constraints is decidable.
	\end{enumerate}
	Then the worst-case input generation algorithm \cref{algorithm:worst-case
		input generation for SILL} returns a valid worst-case input.
\end{thm}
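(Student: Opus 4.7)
The plan is to reduce relative completeness to three sub-claims corresponding to the three lines of \cref{algorithm:worst-case input generation for SILL} that may fail: AARA inference (line~\ref{aloline:AARA to infer resource annotations}), the exhaustive symbolic search for a tight path (line~\ref{aloline:symbolic exeuction}), and the SMT call (line~\ref{aloline:solving a path constraint}). First, by assumption (\ref{assumption:the proceses are typable}) the processes are typable in resource-aware SILL, and AARA's constraint system is linear (\cite{Das2019}), so line~\ref{aloline:AARA to infer resource annotations} succeeds and returns resource-annotated session types whose bounds are sound by \cref{theorem:soundness of resource-aware SILL}. The skeleton checks in line~\ref{aloline:checking the suitability of skeletons} are hypotheses on the input and need not be re-established.

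Second, the central step is to show that some path explored by the symbolic execution is certified tight. The main tool will be a \emph{dual simulation} to \cref{prop:simulation for soundness}: for every concrete reduction $C_{1,\text{cost}} \to C_{2,\text{cost}}$ consistent with skeleton $K$, there is a symbolic reduction $C_{1,\text{sym}} \to C_{2,\text{sym}}$ with $C_{i,\text{sym}} \sim C_{i,\text{cost}}$ and matching change in $\costboundconfig{\cdot}$. By assumption (\ref{assumption:the cost bound is tight on a finite executon path}) there is a finite concrete execution path $\pi$ on which the bound is tight; applying the dual simulation inductively along $\pi$ yields a finite symbolic trace $\pi_{\text{sym}}$ whose accumulated path constraint $\phi$ is satisfied by the concrete input realizing $\pi$. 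The algorithm's symbolic execution enumerates paths by branching on external choices (rule $\&R_{\text{external}}$), and when a candidate path is later rejected as loose, it backtracks to try the next branch in decreasing cost-bound order. Because the input portions of $K$ are finite (\cref{sec:finite input portion of skeletons}), the tree of remaining choices at each point is finite, and the set of symbolic paths consistent with $K$ is therefore finitely branching and of bounded depth along any traversal corresponding to a concrete execution. With fair, systematic backtracking, the algorithm eventually visits $\pi_{\text{sym}}$. By \cref{theorem:checking tightness of cost bounds}, the tightness check does not spuriously reject $\pi_{\text{sym}}$, since on $\pi$ all red potential is consumed and any blue potential generated before a later red consumption is itself consumed (otherwise the concrete bound would not be tight, contradicting (\ref{assumption:the cost bound is tight on a finite executon path})). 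Hence the symbolic execution terminates with some path constraint $\phi$.

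Third, by (\ref{assumption:the cost bound is tight on a finite executon path}) $\phi$ admits at least one solution (the concrete input realizing $\pi$), and by decidability of the background theory the SMT call in line~\ref{aloline:solving a path constraint} returns such a solution. Soundness of what is returned is then immediate from \cref{theorem:soundness of worst-case input generation}, so the returned input is a valid worst-case input.

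The main obstacle I anticipate is making the ``fair, systematic backtracking'' claim fully rigorous. The greedy rule $\&R_{\text{external}}$ picks the branch of highest static cost bound first, and the tight concrete path may correspond to a non-greedy branch, so the argument must carefully track that (a) each rejected candidate is only rejected after a bounded amount of symbolic work, and (b) the branching order is a well-founded search over a finite tree induced by the finite input portion of $K$. Formalizing this likely requires defining a measure on symbolic configurations (e.g., number of remaining skeleton constructors plus outstanding external choices) that strictly decreases along any symbolic reduction not involving backtracking, and showing the backtracking loop exhausts a finite frontier. The rest of the argument is then a routine induction on the length of $\pi$ using the dual simulation, combined with \cref{theorem:checking tightness of cost bounds} to certify tightness and the soundness theorem to conclude.
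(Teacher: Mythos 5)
Your proposal is correct and follows essentially the same route as the paper: assumption (\ref{assumption:the proceses are typable}) discharges typability and AARA inference, assumption (\ref{assumption:the cost bound is tight on a finite executon path}) supplies a finite tight execution path that the exhaustive symbolic search must eventually reach (justified by the simulation between symbolic execution and cost semantics; the dual simulation you invoke is exactly \cref{prop:simulation for completeness} in the appendix), and decidability of the background theory discharges the SMT step. Your additional observation that \cref{theorem:checking tightness of cost bounds} cannot spuriously reject the target path is a point the paper leaves implicit, and is a welcome addition.

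The one place where you diverge---and where your anticipated fix would not go through---is the termination of the search. You propose a well-founded measure counting remaining skeleton constructors and outstanding external choices, decreasing along non-backtracking symbolic steps, to show each candidate path is rejected after bounded work. This cannot work as stated: the finiteness of the input portion of $K$ bounds only the number of input actions at the type level, not the amount of internal computation a process performs between communications, so a single symbolic path may diverge without ever consuming another skeleton constructor (the paper explicitly acknowledges this: ``the symbolic execution has a risk of non-termination on some execution paths''). The paper's resolution is not a termination measure but a fairness device: explore multiple execution paths in parallel (dovetailing), so that any \emph{finite} path---in particular the tight path $\pi$ guaranteed by (\ref{assumption:the cost bound is tight on a finite executon path})---is eventually visited even if sibling paths diverge. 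If you replace your measure-based argument with this dovetailing strategy, the rest of your proof stands as written.
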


\begin{proof}
	The assumption \labelcref{assumption:the proceses are typable} is necessary
	because session skeletons, which define the shape of a worst-case input to be
	synthesized and guide the symbolic execution, are based on resource-annotated
	session types.
	Hence, all channels in the SILL program must be typable with
	resource-annotated session types.

	Under the assumption \labelcref{assumption:the cost bound is tight on a finite
		executon path}, there exists some finitely long execution path $\pi$ in the
	SILL program that has the same high-water mark as the cost bound encoded in
	the resource-annotated skeletons $K$.
	The symbolic execution exhaustively searches for an execution path where the
	cost bound is tight, until a suitable execution path is found or all execution
	paths fail (\cref{algorithm:worst-case input generation for SILL}).
	Hence, we will eventually find the target execution path $\pi$, thanks to
	\cref{prop:simulation for soundness} stating the correctness of tracking the
	cost during the symbolic execution.
	The symbolic execution has a risk of non-termination on some execution paths.
	To avoid being stuck in the exploration of one execution path, the symbolic
	execution can explore multiple execution paths in parallel such that any
	finite execution path will eventually be explored.
\end{proof}


\section{Case Study: Web Server and Browsers}
\label{sec:case study}

We model the interaction between a web server and multiple web browsers.
For each browser, a new channel is spawned, and the browser first engages in a
three-way handshake protocol with the server (as in TCP).
Once the handshake protocol is successfully completed, the browser and server
proceed to the main communication phase for data transfer.

This case study considers the non-monotone resource metric of memory.
Suppose the server requires
\begin{enumerate*}[label=(\roman*)]
	\item one memory cell for the handshake protocol and
	\item another memory cell for the subsequent communication after the
	      handshake.
\end{enumerate*}
The first memory cell stores so-called sequence numbers that are established
during the handshake.
The second memory cell for the main communication phase stores data about a
browser.

Let $c$ be a channel provided by the server and used by browsers.
Without loss of generality, suppose we have two browsers that want to
communicate with the server.
We examine two implementations of the server.
In the first implementation (\cref{sec:independent sessions in the case study}),
the two browsers' sessions run independently of each other.
So the server cannot control how the sessions are scheduled.
By contrast, in the second implementation (\cref{sec:coordinating sessions with
	a scheduler in the case study}), the server coordinates sessions with the help
of a scheduler.

\subsection{Independent Sessions}
\label{sec:independent sessions in the case study}

\begin{figure}[t]
	\centering
	\includegraphics[width=0.65\columnwidth]{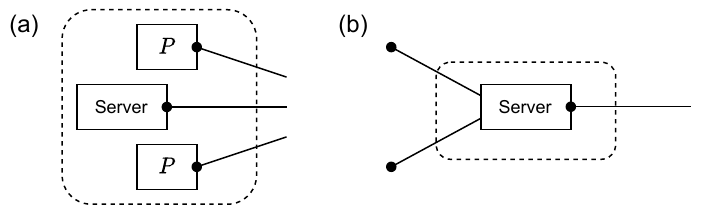}
	\caption{(a) First model of a web server.
		Here, an independent process $P$ is assigned to each browser's session that
		wants to communicate with the server.
		(b) Second model of a web server where a scheduler is modeled.}
	\label{fig:case study}
\end{figure}

In the first implementation, the session type of the provided channel is
\begin{equation}
	c: A \otimes A \otimes \mathbf{1}.
\end{equation}
That is, the server sequentially spawns a new channel of session type $A$ for
each of the two browsers.
The server after spawning two channels is depicted in \cref{fig:case study} (a).
The channel providers $P$ of these channels run independently of each
other---the server cannot control the order of events on these channels.

Resource-annotated session type $A$ is defined as
\begin{equation}
	A \coloneq \integer \supset \textcolor{red}{\triangleleft^{1}} (\integer \times \integer) \land  \& \{ \m{ack}: \integer \supset \oplus \{\m{success}: \textcolor{red}{\triangleleft^{1}} \mathbf{1}, \m{failure}: \mathbf{1} \}, \m{timeout}: \mathbf{1} \}.
\end{equation}
In the session type $A$, the browser first sends an integer $x$, and the server
sends back two integers: $1 + x$ and $y$.
In response, the browser sends either a label $\m{ack}$ followed by an integer
(ideally $1 + y$) or a label $\m{timeout}$ to indicate that the browser is
inactive.

After successful completion of the handshake, the server sends a label
$\m{success}$ and terminates.
If the browser sends back a wrong integer (i.e., an integer different from $1 +
	y$) to the server, the server sends a label $\m{failure}$.
After $\m{success}$ and $\m{failure}$, there is no further communication between
the server and browser.
This is for simplicity in our modeling.
In practice, the $\m{success}$ branch has further communication.
The detailed implementation of such a server is given in \cref{eq:implementation
	of the server where channels are created by the server} (\cref{appendix:case
	study}).

Let a session skeleton for the external channel $c$ be $K_1 \otimes K_2 \otimes
	\mathbf{1}$ , where the session skeleton $K_{i}$ ($i = 1, 2$) is
\begin{equation}
	K_{i} \coloneq z_{i,1} \supset \textcolor{red}{\triangleleft^{1}} (\integer \times \integer) \land \& \{ \m{ack}: z_{i,2} \supset \oplus \{\m{success}: \textcolor{red}{\triangleleft^{1}} \mathbf{1}, \m{failure}: \mathbf{1} \}, \m{timeout}: \mathbf{1} \}.
\end{equation}
Here, $z_{i,1}$ and $z_{i,2}$ are skeleton variables of the integer type.
Thus, the skeleton $K_1 \otimes K_2 \otimes \mathbf{1}$ is simply obtained from
the original session type $A \otimes A \otimes \mathbf{1}$ by substituting fresh
skeleton variables for all occurrences of $\integer$ in the session type.

The worst-case cost bound of the skeleton $K_1 \otimes K_2 \otimes \mathbf{1}$
is 4.
This is because each $K_i$ ($i = 1, 2$) has a worst-case cost bound of 2, which
happens if the browser sends the label $\m{ack}$ to the server.
The worst-case input generation algorithm generates a worst-case
input
\begin{equation}
	\label{eq:worst-case input for the first implementation in the case study}
	z_{1,1} = x_1 \quad z_{1,2} = y_1 + 1 \quad z_{2,1} = x_2 \quad z_{2,2} = y_2 + 1,
\end{equation}
where $x_1, x_2 \in \bbZ$ are unconstrained, and $y_1, y_2 \in \bbZ$ are the
integers sent back from the server to a browser.
The constants $y_1, y_2$ are assumed to be hard-coded in the server's code.

More generally, if we have $n$ many browsers, the cost bound becomes $2 n$.
It is tight: because different sessions run independently, in the worst case, we
will need $2 n$ memory cells at the peak memory usage.
As before, the worst-case input generation algorithm can compute a correct
worst-case input of high-water-mark cost $2 n$.
It suggests that adversaries can overwhelm the server's memory by deploying a
large number of browsers.

This vulnerability is reminiscent of a denial-of-service (DoS) attack.
However, our SILL implementation does not faithfully model the true DoS attack.
A DoS attack creates a large number of so-called half-open sessions, where the
completion of the handshake is delayed by withholding the label $\m{ack}$.
Meanwhile, our SILL implementation does not model the withholding of the label
$\m{ack}$.
Instead, all browsers' sessions will eventually terminate in our SILl
implementation, and the worst-case high-water-mark cost of the entire concurrent
program happens when the high-water marks of all individual processes coincide.
To faithfully model a DoS attack, it would be necessary to model the withholding
of sending the label $\m{ack}$ in a large number browsers' sessions at the same.
However, it cannot be faithfully modeled in SILL due to the following
limitations in the expressivity:
\begin{enumerate*}[label=(\roman*)]
	\item SILL cannot withhold events (e.g., sending and receiving
	      messages)---they must proceed; and
	\item SILL cannot specify the exact timing of events, such as that certain
	      events on different channels and processes all happen at the same time.
\end{enumerate*}

\subsection{Coordinating Sessions with Schedulers}
\label{sec:coordinating sessions with a scheduler in the case study}

Now consider an alternative implementation where it is the external world that
spawns channels.
The typing judgment of the channel $c$ is
\begin{equation}
	c: A \multimap A \multimap \mathbf{1},
\end{equation}
where the session type $A$ (without resource annotations) is
\begin{equation}
	\label{eq:annotation-free A in the second implementation of the case study}
	A \coloneq \integer \land (\integer \times \integer) \supset \oplus \{ \m{ack}: \integer \land \& \{\m{success}: \mathbf{1}, \m{failure}: \mathbf{1} \}, \m{timeout}: \mathbf{1} \}.
\end{equation}
The resource annotation in the session type $A$ depends on how the browsers'
sessions are scheduled.

Unlike in \cref{sec:independent sessions in the case study}, in this section,
the two channels are directly connected to the server (\cref{fig:case study}
(b)).
Hence, the server can/must coordinate the communication sessions on the
channels.
For example, the server can use a round-robin scheduler that alternates between
the two browsers.
Another possibility is a sequential scheduler: the server serves the first
browser and then moves on to the second one after the first browser is finished.

With a round-robin scheduler, we obtain the same high-water mark as
\cref{sec:independent sessions in the case study}.
With a sequential scheduler, the typing judgment of channel $c$ is
\begin{equation}
	c: A_\text{anno} \multimap A \multimap \mathbf{1},
\end{equation}
where the resource-annotated session type for the first browser is
\begin{equation}
	A_{\text{anno}} \coloneq \integer \land \textcolor{red}{\triangleright^{1}} (\integer \times \integer) \supset  \oplus \{ \m{ack}: \integer \land \& \{\m{success}: \textcolor{red}{\triangleright^{1}} \mathbf{1}, \m{failure}: \mathbf{1} \}, \m{timeout}: \mathbf{1} \}.
\end{equation}
and the session type $A$ for the second browser, which happens to require zero
potential, is given in \cref{eq:annotation-free A in the second implementation
	of the case study}.
It is a valid typing judgment because once the server finishes talking with the
first browser, two memory cells are freed up and are reused for the second
browser.
Therefore, the sequential scheduler's cost bound is lower than the round-robin
scheduler's.

More generally, if we have $n$ browsers, the sequential scheduler's cost bound
remains 2.
Further, thanks to the soundness of resource-annotated session types, the bound
2 is a valid cost bound.
Therefore, adversaries cannot overwhelm the server by sending a large number of
communication requests.


\section{Related Work}
\label{sec:related work}

\paragraph{Resource analysis}

Resource analysis of programs aims to derive symbolic cost bounds.
Numerous approaches exist: type
systems~\cite{Crary2000,Vasconcelos2008,Danielsson2008,Lago2011,Avanzini2017,Cicek2017,Handley2019},
recurrence
relations~\cite{Wegbreit1975,Grobauer2001,Albert2007,Kincaid2017popl,Kavvos2019,Cutler2020},
term rewriting~\cite{Avanzini2013,Brockschmidt2014,Hofmann2014,Moser2020}, and
static analysis~\cite{Gulwani2009speedpopl,Avanzini2015,Chatterjee2019}.
Among type-based approaches is AARA.
Linear AARA was first developed by Hofmann and Jost~\cite{Hofmann2003} and later
extended to univariate polynomial bounds~\cite{Hoffmann2010}, multivariate
polynomial bounds~\cite{Hoffmann2012}, and exponential bounds~\cite{Kahn2020}.
AARA has also been incorporated into imperative
programming~\cite{Carbonneaux2015}, parallel programming~\cite{Hoffmann2015},
and probabilistic programming~\cite{Ngo2018,Wang2020,Avanzini2020}.

\paragraph{Session types}

Session types, which describe communication protocols on channels, were
originally proposed by Honda~\cite{Honda1993}.
Caires and Pfenning~\cite{Caires2010} build a session type system whose logical
counterpart is intuitionistic linear logic.
Their session type system  has been integrated into a functional programming
language using contextual monads, resulting in the language
SILL~\cite{Toninho2013,Pfenning2015}.
Resource-aware SILL~\cite{Das2018} incorporates linear AARA into SILL (excluding
shared channels).
Nomos~\cite{Das2019} is a session-typed programming language that uses
resource-aware SILL to infer gas bounds of smart contracts.
Wadler has developed a session type system based on classical linear
logic~\cite{Wadler2012}.
Binary session types have also been extended to multiparty
ones~\cite{Honda2008}.

\paragraph{Worst-case input generation}

The present work was inspired by the type-guided worst-case input generation for
functional programming by Wang and Hoffmann~\cite{Wang2019}.
While \cite{Wang2019} focuses on monotone resource metrics in sequential
programming, the present work considers more general non-monotone resource
metrics in message-passing concurrent programming.
Non-monotone resource metrics, when combined with concurrency of processes, pose
challenges to worst-case input generation.

WISE~\cite{Burnim2009} is the first work to use symbolic execution for
worst-case input generation.
It first explores an entire search space for a small input to identify a
worst-case execution path.
This path is then generalized to a branch policy to handle larger inputs.
The use of branch policies reduces the search space of large inputs, thereby
making worst-case input generation more scalable.
SPF-WCA~\cite{Luckow2017} extends WISE with path policies that take into account
histories when we determine which branch to take during symbolic execution.

Instead of branch and path policies, Wang and Hoffmann~\cite{Wang2019} and we
use resource-annotated types to guide symbolic execution.
One advantage of type-guided symbolic execution is that worst-case input
generation becomes sound.
Another advantage is that we learn how many non-monotone resources can be
transferred between concurrent processes.

The fuzzing research community has investigated worst-case input generation.
SlowFuzz~\cite{Petsios2017} is the first fuzzer that automatically finds
worst-case inputs with gray-box access to programs.
PerfFuzz~\cite{Lemieux2018} extends SlowFuzz with multi-dimensional objectives.
MemLock~\cite{Wen2020} focuses on memory consumption bugs.
Although fuzzing generally offers neither soundness nor relative completeness,
it is more scalable than static-analysis-based worst-case input generation
because fuzzers do not analyze programs' internal workings.


\section{Conclusion}
\label{sec:conclusion}

It is non-trivial to generate worst-case inputs to concurrent programs under
non-monotone resource metrics.
The high-water-mark cost of a concurrent program depends on how processes are
scheduled at runtime.
As a result, haphazardly executing a concurrent program may not reveal its
correct high-water-mark cost.

In this work, we have developed the first sound worst-case input generation
algorithm for message-passing concurrent programming under non-monotone resource
metrics.
The key insight is to have resource-annotated session types guide symbolic
execution.
We have also identified several technical challenges posed by session types in
the design of skeletons.
We have proved the soundness and relative completeness of our algorithm.
Finally, we have presented a simple case study of a web server's memory usage,
illustrating the utility of the worst-case input generation algorithm.

\bibliographystyle{alphaurl}
\bibliography{
	./bib/aara,
	./bib/resource_analysis,
	./bib/relational_analysis,
	./bib/mechanism_design,
	./bib/miscellaneous,
	./bib/session_types,
	./bib/smart_contracts,
	./bib/input_generation}

\appendix


\section{Resource-Aware SILL}

\subsection{Cost Semantics and the Type System}
\label{appendix:cost semantics and the type system}

\Cref{fig:some rules of the cost semantics of SILL} already gives some rewriting
rules of the substructural cost semantics of SILL.
Remaining rules are displayed in \cref{fig:remaining rules of the cost semantics
of SILL}.

\begin{figure}[t]
	\begin{small}
		\begin{mathpar}
			\inferrule
			{\msg{c, w_1, \m{send} \; c \; v; c \leftarrow c'} \\ \proc{d, w_2, x
					\leftarrow \m{recv} \; c; P_{x}}} {\proc{d, w_1 + w_2, P_{v} [c' / c]}}
			{\land} R
			\and
			\inferrule
			{\proc{c, w, \m{send} \; c \; e; P} \\ e \Downarrow v \\ c' \text{ is
					fresh}} {\proc{c', w, P [c' / c]} \\ \msg{c, 0, \m{send} \; c \; v; c
					\leftarrow c'}} {\land} S
			\and
			\inferrule
			{\proc{d, w, \m{send} \; c_1 \; c_2; P} \\ c_1' \text{ is fresh}}
			{\proc{d, w, P [c_1' / c_1]} \\ \msg{c_1', 0, \m{send} \; c_1 \; c_2; c_1'
					\leftarrow c_1}} {\multimap} S
			\and
			\inferrule
			{\msg{c_1', w_1, \m{send} \; c_1 \; c_2; c_1' \leftarrow c_1} \\\\
				\proc{c_1, w_2, x \leftarrow \m{recv} \; c_1; P_{x}}} {\proc{c_1, w_1 +
					w_2, P_{c_2} [c_1' / c_1]}} {\multimap} R
			\and
			\inferrule
			{\msg{c_1, w_1, \m{send} \; c_1 \; c_2; c_1 \leftarrow c_1'} \\\\ \proc{d,
					w_2, x \leftarrow \m{recv} \; c_1; P_{x}}} {\proc{d, w_1 + w_2, P_{c_2}
						[c_1' / c_1]}} {\otimes} R
			\and
			\inferrule
			{\proc{c_1, w, \m{send} \; c_1 \; c_2; P} \\ c_1' \text{ is fresh}}
			{\proc{c_1', w, P [c_1' / c_1]} \\ \msg{c_1, 0, \m{send} \; c_1 \; c_2;
					c_1 \leftarrow c_1'}} {\otimes} S
			\and
			\inferrule
			{\proc{d, w, c.\ell_{k}; P} \\ c' \text{ is fresh}} {\proc{d, w, P [c' /
							c]} \\ \msg{c', 0, c.\ell_{k}; c' \leftarrow c}} {\&} S
			\and
			\inferrule
			{\msg{c', w_1, c. \ell_{k}; c' \leftarrow c} \\\\ \proc{c, w_2, \m{case}
					\; c \; \{ \overline{\ell_{i} \hookrightarrow P_{i}} \}}} {\proc{c, w_1 +
					w_2, P_{k} [c' / c]}} {\&} R
			\and
			\inferrule
			{\msg{c, w_1, c. \ell_{k}; c \leftarrow c'} \\\\ \proc{d, w_2, \m{case} \;
					c \; \{ \overline{\ell_{i} \hookrightarrow P_{i}} \}}} {\proc{d, w_1 +
					w_2, P_{k} [c' / c]}} {\oplus} R
			\and
			\inferrule
			{\proc{c, w, c.\ell_{k}; P} \\ c' \text{ is fresh}} {\proc{c', w, P [c' /
							c]} \\ \msg{c, 0, c.\ell_{k}; c \leftarrow c'}} {\oplus} S
			\and
			\inferrule
			{\proc{c_1, w, c_1 \leftarrow c_2}} {\msg{c_1, w, c_1 \leftarrow c_2}}
			\m{fwd}_{s}
			\and
			\inferrule
			{\proc{c_2, w_1, P} \\ \msg{c_1, w_2, c_1 \leftarrow c_2}} {\proc{c_1, w_1
					+ w_2, P [c_1 / c_2]}} \m{fwd}_{r}^{+}
			\and
			\inferrule
			{\msg{c_1, w_1, c_1 \leftarrow c_2} \\\\ \proc{d, w_2, P}} {\proc{d, w_1 +
					w_2, P [c_2 / c_1]}} \m{fwd}_{r}^{-}
		\end{mathpar}
	\end{small}
	\caption{Remaining rules in the cost semantics of SILL. The judgment $e
			\Downarrow v$ means functional term $e$ evaluates to value $v$.}
	\label{fig:remaining rules of the cost semantics of SILL}
\end{figure}

To treat $\proc{\cdot}$ and $\msg{\cdot}$ uniformly, message $M$ in predicate
$\msg{c, M}$ is encoded as a process~\cite{Das2018}, just like $P$ in predicate
$\proc{c, P}$.
The grammar of message $M$, which is subsumed by the grammar of processes, is
presented below.
\begin{align*}
	M \Coloneq {} & c \leftarrow c' \mid c.\ell_{i}; c \leftarrow c' \mid c.\ell_{i}; c' \leftarrow c                                        \\
	              & \mid \m{send} \; c \; v; c \leftarrow c' \mid \m{send} \; c \; v; c' \leftarrow c                                        \\
	              & \mid \m{send} \; c_1 \; c_2; c_1 \leftarrow c_1' \mid \m{send} \; c_1 \; c_2; c_1' \leftarrow c_1 \mid \m{close} \; c.
\end{align*}
Here, $c$ and $c'$ are channels, $v$ is a functional-layer value, and $q \in
	\bbQ_{> 0}$ is a quantity of potential.

Some rules of the type system of resource-aware SILL are already given in
\cref{fig:some rules of the cost semantics of SILL}.
The remaining rules are presented in \cref{fig:remaining rules of the type
	system of resource-aware SILL}.

\begin{figure}[t]
	\begin{small}
		\judgment{$\Phi; \Delta; q \vdash P :: (c: A)$}
		\begin{mathpar}
			\inferrule 
			{\Phi, x: b; \Delta, c: A; p \vdash P_{x} :: (d: D)} {\Phi; \Delta, c: b
				\land A; p \vdash x \leftarrow \m{recv} \; c; P_{x} :: (d: D)} {\land} L
			\and
			\inferrule 
			{\Phi; \Delta; p \vdash P :: (x: A) \\ \Phi \vdash e: b} {\Phi; \Delta; p
				\vdash \m{send} \; c \; e; P :: (c: b \land A)} {\land} R
			\and
			\inferrule
			{\Phi; \Delta, c_1: A_2; p \vdash P :: (d: D)} {\Phi; \Delta, c_2: A_1,
				c_1: A_1 \multimap A_2; p \vdash \\\\ \m{send} \; c_1 \; c_2; P :: (d: D)}
			{\multimap} L
			\and
			\inferrule
			{\Phi; \Delta, x: A_1; p \vdash P_{x} :: (c: A_2)} {\Phi; \Delta; p \vdash
				(x \leftarrow \m{recv} \; c; P_{x}) :: (c: A_1 \multimap A_2)} {\multimap}
			R
			\and
			\inferrule
			{\Phi; \Delta, x: A_1, c: A_2; p \vdash P_{x} :: (d: D)} {\Phi; \Delta, c:
				A_1 \otimes A_2; p \vdash \\\\ x \leftarrow \m{recv} \; c; P_{x} :: (d:
				D)} {\otimes} L
			\and
			\inferrule
			{\Phi; \Delta; p \vdash P :: (c_1: A_2)} {\Phi; \Delta, c_2: A_1; p \vdash
				\\\\ \m{send} \; c_1 \; c_2; P :: (c_1: A_1 \otimes A_2)} {\otimes} R
			\and
			\inferrule
			{\Phi; \Delta, c: A_{k}; p \vdash P :: (d: D)} {\Phi; \Delta, c: \&
				\{\ell_{i}: A_{i}\}; p \vdash \\\\ c. \ell_{k}; P : (d: D)} {\&} L
			\and
			\inferrule
			{\forall i. \Phi; \Delta; p \vdash P_{i} :: (c: A_{i})} {\Phi; \Delta; p
				\vdash \m{case} \; c \; \{ \overline{\ell_{i} \hookrightarrow P_{i}} \} ::
				(c: \& \{ \overline{\ell_{i} : A_{i}} \})} {\&} R
			\and
			\inferrule
			{\forall i. \Phi; \Delta, c: A_{i}; p \vdash P_{i} :: (d: D)} {\Phi;
				\Delta, c: \oplus \{ \overline{\ell_{i}: A_{i}} \}; p \vdash \\\\ \m{case}
				\; c \; \{ \overline{\ell_{i} \hookrightarrow P_{i}} \} :: (d: D)}
			{\oplus} L
			\and
			\inferrule
			{\Phi; \Delta; p \vdash P :: (c: A_{k})} {\Phi; \Delta; q \vdash
				(c.\ell_{k}; P) :: (c: \oplus \{ \overline{\ell_{i} : A_{i}} \})} {\oplus}
			R
			\and
			\inferrule
			{ } {\Phi; c_2: A; 0 \vdash c_1 \leftarrow c_2 :: (c_1: A)}
			\m{fwd}
			\and
			\inferrule
			{\Phi; \Delta; p \vdash P :: (d: D)} {\Phi; \Delta, c: \mathbf{1}; p
				\vdash \m{wait} \; c; P :: (d: D)} \mathbf{1} L
			\and
			\inferrule
			{ } {\Phi; \cdot; 0 \vdash \m{close} \; c :: (c: \mathbf{1})} \mathbf{1} R
			\and
			\inferrule
			{\Phi; \Delta, c: A; p \vdash P :: (d: D)} {\Phi; \Delta, c:
				\triangleleft^{q} A; p + q \vdash \m{pay} \; c \; \{q\}; P :: (d: D)}
			{\triangleleft} L
			\and
			\inferrule
			{\Phi; \Delta; p + q \vdash P :: (c: A)} {\Phi; \Delta; p \vdash \m{get}
				\; c \; \{q\}; P :: (c: \triangleleft^{q} A)} {\triangleleft} R
			\and
			\inferrule
			{\Phi; \Phi; \Delta, c: A; p + q \vdash P :: (d: D)} {\Phi; \Phi; \Delta,
				c: \triangleright^{q} A; p \vdash \m{get} \; c \; \{q\}; P :: (d: D)}
			{\triangleright} L
			\and
			\inferrule
			{\Phi; \Delta; p \vdash P :: (c: A)} {\Phi; \Delta; p + q \vdash \m{pay}
				\; c \; \{q\}; P :: (c: \triangleright^{q} A)} {\triangleright} R
		\end{mathpar}
	\end{small}
	\caption{Remaining rules of the type system of resource-aware SILL. $\Phi
			\vdash e : \tau$ in the rules $\m{spawn}$ and ${\supset} L$ is a typing
		judgment for the functional layer.}
	\label{fig:remaining rules of the type system of resource-aware SILL}
\end{figure}


\section{Session Skeletons}

\subsection{Checking Finiteness of Input Portions}
\label{appendix:Checking Finiteness of Input Portions}

The SMT-LIB2 encoding of \cref{eq:constraint on function f that encodes an
	infinite stream of Booleans} is displayed below.
\begin{small}
	\VerbatimInput{path_constraint_Z3.txt}
\end{small}

How do we check the finiteness of an input portion?
Our approach is to count the number of input actions at the type level and check
if the number is infinite.
Given a skeleton $K$, the judgment
\begin{equation}
	\label{eq:judgment for the number of input actions in a skeleton}
	\Gamma \vdash \countinput{K}{n}
\end{equation}
states that $K$ contains at most $n$ many input actions at the type level, where
$n \in \bbN \cup \{\infty\}$.
A context $\Gamma$ maps type variables (i.e., $X$ in $\mu X. K_{X}$) to their
associated numbers of input actions.
Here, input actions are defined from the viewpoint of the channel's provider.
We also have the dual judgment $\Gamma \vdash \countoutput{K}{n}$ stating that
$K$ contains $n$ output actions at the type level.

The judgment \labelcref{eq:judgment for the number of input actions in a
	skeleton} is defined in \cref{fig:number of input actions in a skeleton}.
The dual judgment is defined similarly, so we omit its definition.
To compute the number of input and output actions, we construct a template
derivation tree according to this inference system.
In the tree, we use variables $n$'s to record the number of input/output
actions, and collect constraints on them.
Finally, we solve them by an LP solver.

\begin{figure}[t]
	\begin{small}
		\judgment{$\Gamma \vdash \countinput{K}{n}$}
		\begin{mathpar}
			\inferrule
			{ } {\Gamma \vdash \countinput{\mathbf{1}}{0}}
			\and
			\inferrule
			{\Gamma \vdash \countinput{K_2}{n}} {\Gamma \vdash \countinput{H \supset
					K}{n + 1}}
			\and
			\inferrule
			{\Gamma \vdash \countinput{K}{n}} {\Gamma \vdash \countinput{b \land
					K}{n}}
			\and
			\inferrule
			{\Gamma \vdash \countoutput{K_1}{n_1} \\ \Gamma \vdash
				\countinput{K_2}{n_2}} {\Gamma \vdash \countinput{K_1 \multimap K_2}{n_1
					+ n_2 + 1}}
			\and
			\inferrule
			{\Gamma \vdash \countinput{K_1}{n_1} \\ \Gamma \vdash
				\countinput{K_2}{n_2}} {\Gamma \vdash \countinput{K_1 \otimes K_2}{n_1 +
					n_2}}
			\and
			\inferrule
			{\forall i \in N. \Gamma \vdash \countinput{K_i}{n_i}} {\Gamma \vdash
				\countinput{\&_{x} \{\ell_{i}: K_{i} \mid i \in N \}}{1 + \max_{i \in N}
					\{n_i\}}}
			\and
			\inferrule
			{\Gamma \vdash \countinput{K}{n}} {\Gamma \vdash
				\countinput{\triangleleft^{q} K}{n + 1}}
			\and
			\inferrule
			{\forall i \in N. \Gamma \vdash \countinput{K_i}{n_i}} {\Gamma \vdash
				\countinput{\oplus \{\ell_{i}: K_{i} \mid i \in N \}}{\max_{i \in N}
					\{n_i\}}}
			\and
			\inferrule
			{\Gamma \vdash \countinput{K}{n}} {\Gamma \vdash
				\countinput{\triangleright^{q} K}{n}}
		\end{mathpar}
	\end{small}
	\caption{Number of input actions in a skeleton.}
	\label{fig:number of input actions in a skeleton}
\end{figure}

\subsection{Extraction of Cost Bounds}
\label{appendix:Extraction of Cost Bounds}

A type-level path of a session type (or a skeleton) is a path on the session
type where all choices of labels (including external and internal choices) are
resolved.

\begin{defi}[Type-level paths]
	\label{def:definition of type-level paths}
	Given a session type/skeleton $A$, its set of type-level paths is
	\begin{small}
		\begin{align*}
			\tpath{A_1 \multimap A_2}                   & \coloneq \{p_1 \multimap p_2 \mid p_i \in \tpath{A_i} \}               \\
			\tpath{A_1 \otimes A_2}                     & \coloneq \{p_1 \otimes p_2 \mid p_i \in \tpath{A_i} \}                 \\
			\tpath{\& \{\ell_i: A_i \mid i \in N\}}     & \coloneq \{\& \{\ell_k: p_k\} \mid k \in N, p_k \in \tpath{A_k} \}     \\
			\tpath{\oplus \{\ell_i: A_i \mid i \in N\}} & \coloneq \{\oplus \{\ell_k: p_k\} \mid k \in N, p_k \in \tpath{A_k} \} \\
			\tpath{\mathbf{1}}                          & \coloneq \{\mathbf{1}\}.
		\end{align*}
	\end{small}
	The sets of type-level paths for other type constructors can be defined
	straightforwardly.
\end{defi}

The issue illustrated by \cref{eq:code of P in the example illustrating the
	difficulty of extracting a cost bound} arises from the interactive nature of
resource-aware SILL.
A process in concurrent programming receives incoming messages (i.e., input) and
sends outgoing messages (i.e., output) that may depend on the previous input.
Afterwards, the process repeats the process of receiving input and sending
output.
In this way, input and output are intertwined in SILL.
As a consequence of the interdependence between input and output across
different channels, not all combinations of type-level paths on different
channels are feasible.
Furthermore, if some type-level paths have higher cost bounds than others but
are infeasible, it becomes challenging to figure out which type-level paths to
explore during worst-case input generation.


\section{Worst-Case Input Generation}

\subsection{Problem Statement}
\label{appendix:problem statement}

An input to a SILL program is a collection of predicates $\msg{\cdot, \cdot}$
from the external world.
To formalize inputs in SILL, we fix a naming scheme for channels in the
rewriting system.
As of now, whenever a fresh channel name is needed, the rewriting system does
not specify the fresh name.
Consequently, when we formally define an input as a collection of predicates
$\msg{c, M}$, it is unclear what precisely $c$ should be.
Thus, to formally define inputs, we must determine fresh channel names
deterministically.
Although it is possible to devise a desirable naming scheme for channels, due to
its complexity, we omit its formal definition.
Throughout this section, we adopt such a naming scheme.

Each input is a multiset of predicates $\msg{c, M}$.
Given a channel $c: K$ (where $K$ is a skeleton), let $\inputs{K, c}$ denote the
set of possible inputs from the perspective of $c$'s provider.
It is defined in \cref{fig:set of all possible inputs of a skeleton}.
The dual $\outputs{K, c}$ is the set of all outputs for $c$'s provider.
Since $\outputs{\cdot, \cdot}$ is defined similarly, we omit its definition.

\begin{defi}[Set of possible inputs]
	\label{def:set of possible inputs}
	Consider a network of processes with well-typed external channels $c_1: K_1,
		\ldots, c_{m}: K_{m}, c_{m+1}: K_{m+1}, \ldots, c_n: K_n$.
	Here, $c_1, \ldots, c_{m}$ are provided by processes inside the network, and
	$c_{m+1}, \ldots, c_{n}$ are provided by the external world.
	$K_1, \ldots, K_m$ are skeletons compatible with their original session types.
	The set of all possible inputs to this network is given by
	\begin{equation}
		\interp{K_1, \ldots, K_n} \coloneq \prod_{1 \leq i \leq m} \inputs{K_i, c_i} \times \prod_{m < i \leq n}\outputs{K_{i}, c_{i}}.
	\end{equation}
\end{defi}

\begin{figure}[t]
	\begin{small}
		\begin{align*}
			\inputs{H \supset K, c}                               & \coloneq \{t \cup \{ \msg{c, \m{send} \; c \; v} \} \mid v \in \interp{H}, t \in \inputs{K, c'} \}                           \\
			\inputs{b \land K, c}                                 & \coloneq \inputs{K, c}                                                                                                     \\
			\inputs{K_1 \multimap K_2, c}                         & \coloneq \{t_1 \cup t_2 \cup \{\msg{c', \m{send} \; c \; d; c' \leftarrow c}\} \mid                                        \\
			                                                      & \qquad c', d \text{ are fresh}, t_1 \in \inputs{K_2, c'}, t_2 \in \outputs{K_1, d} \}                                      \\
			\inputs{K_1 \otimes K_2, c}                           & \coloneq \{t_1 \cup t_2 \mid c', d \text{ are fresh}, t_1 \in \inputs{K_2, c'}, t_2 \in \inputs{K_1, d} \}                 \\
			\inputs{\& \{\ell_{i} : K_{i} \mid i \in N \}, c}     & \coloneq \{t \cup \{\msg{c', c. \ell_{k}; c' \leftarrow c}\} \mid j \in N, c'\text{ is fresh}, t \in \inputs{K_{j}, c'} \} \\
			\inputs{\oplus \{\ell_{i} : K_{i} \mid i \in N \}, c} & \coloneq \{t \mid j \in N, c'\text{ is fresh}, t \in \inputs{K_{j}, c'} \}                                                 \\
			\inputs{\mathbf{1}, c}                                & \coloneq \{\emptyset\}
		\end{align*}
	\end{small}
	\caption{Set of possible inputs of a skeleton. The dual $\outputs{K, c}$ is
		the set of all outputs for $c$'s provider.}
	\label{fig:set of all possible inputs of a skeleton}
\end{figure}

\begin{defi}[Worst-case inputs]
	\label{def:worst-case inputs}
	A worst-case input $(t_1, \ldots, t_n) \in \interp{K_1, \ldots, K_n}$ is such
	that, if we add $\bigcup_{1 \leq i \leq n} t_{i}$ to the initial configuration
	of the network and run it, we obtain the highest high-water-mark cost of all
	possible inputs from $\interp{K_1, \ldots, K_n}$.
\end{defi}

\subsection{Checking Tightness of Cost Bounds}
\label{appendix:checking tightness of cost bounds}

\begin{prop}[Checking depletion of \textcolor{red}{red} potential]
	\label{prop:checking red potential}
	Suppose red potential is tracked as described above without encountering the
	following issues:
	\begin{itemize}
		\item Red potential remains in some process at the end of symbolic
		      execution;
		\item Red potential is thrown away or flows back to the external world.
	\end{itemize}
	Then the red potential supplied to the program is completely consumed.
\end{prop}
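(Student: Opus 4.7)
The plan is to establish a conservation invariant for red potential during the symbolic execution, and then to conclude by case analysis on the rules that could cause red potential to disappear. The invariant I would maintain is this: at every intermediate configuration, the total amount of red potential ever injected from the external world equals the red potential currently stored in processes (as tracked by the appearance of $\textcolor{red}{\text{red}}$ in $\ids{p}$) plus the red potential already consumed by applications of $\m{tick}_{> 0}$. The marker $\textcolor{red}{\text{red}} \in \ids{p}$ should be interpreted as an indicator that at least some fraction of the process's stored potential originated, directly or through transfer, from a red injection.

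First, I would prove this invariant by induction on the length of the symbolic execution. The only rule that introduces new red potential is $\triangleleft R_{\text{external}}$, which both adds $\textcolor{red}{\text{red}}$ to $\ids{p}$ and increases the stored potential by $q$. Transfer rules (payment, value send, message receive, and the dual external counterparts) move the $\textcolor{red}{\text{red}}$ marker along with the transferred potential; checking each rule in \cref{fig:key rules in the symbolic execution for the process layer} and its dual counterparts from the appendix, the invariant is preserved. Crucially, the rule $\m{tick}_{> 0}$ only clears $\ids{p}$ when the remaining stored potential drops to zero, so the marker $\textcolor{red}{\text{red}}$ can only be removed from $\ids{p}$ when the associated red potential has genuinely been consumed and accounted for on the consumed side of the invariant.

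Next, I would enumerate every remaining way in which red potential could leave the system without being recorded as consumed: the rule $\mathbf{1} S$ (a process closing its providing channel), the rule $\triangleleft L_{\text{external}}$ (a process returning potential to the external world), and implicit uses of $\m{relax}$ that discard potential. Each of these is explicitly guarded in the symbolic execution by the side condition $\textcolor{red}{\text{red}} \notin \ids{p}$. Under the hypothesis of the proposition, none of the three forbidden events occurs during the execution, so no rule other than $\m{tick}_{> 0}$ ever decreases the amount of red potential tracked by the system.

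Combining the invariant with the hypothesis that no red potential remains at the end of the symbolic execution, the sum ``currently stored red potential + consumed red potential'' equals ``consumed red potential,'' which is in turn equal to the total red potential injected by the external world; this is exactly the conclusion. The main obstacle will be the bookkeeping for the transfer rules: since $\ids{p}$ records IDs rather than quantities, I must state the invariant in terms of the quantities stored in the typing contexts of the predicates and then show, rule by rule, that every clearing of $\textcolor{red}{\text{red}}$ from an $\ids{p}$ coincides with an equal-or-greater decrease in the red contribution to the stored potential, so that no red units can silently vanish.
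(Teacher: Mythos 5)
Your proposal is correct and follows essentially the same route as the paper's proof: both establish an invariant that the $\textcolor{red}{\text{red}}$ marker in $\ids{p}$ soundly over-approximates the presence of red potential in a process (injection via ${\triangleleft} R_{\text{external}}$ sets it, transfers propagate it, and $\m{tick}_{>0}$ clears it only when the stored potential hits zero), and then use the guards $\textcolor{red}{\text{red}} \notin \ids{p}$ on termination, discarding, and pay-back to conclude that all injected red potential must have been consumed. The only cosmetic difference is that you phrase the invariant as a quantitative conservation law (injected $=$ stored $+$ consumed), whereas the paper keeps the invariant qualitative and separately addresses the quantity bookkeeping by fixing an even-split convention for red potential on transfer and arguing the conclusion is robust to that choice---exactly the ``IDs versus quantities'' issue you flag as the remaining obstacle.
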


\begin{proof}
	Whenever potential flows from one process to another, we assume that red
	potential (if there is any) in the sender is split evenly, and half of it is
	transferred.
	Throughout the symbolic execution, the flag $r$ is $\true$ if and only if the
	current process contains red potential.
	This invariant is proved by case analysis.
	Firstly, when potential is supplied by the external world, the Boolean flag of
	the recipient is set to $\false$, and it is consistent with the invariant.
	Secondly, when potential is transferred, the Boolean flag is correctly updated
	in a way that preserves the invariant.
	Lastly, red potential is never discarded.
	Thanks to the Boolean flag's invariant, when the symbolic execution
	successfully finishes, red potential should be completely gone because
	processes terminate only when $r = \true$ (i.e., red potential is absent).
	Therefore, if the symbolic execution successfully terminates, red potential
	will have been consumed completely.

	We split red potential evenly when potential is transferred to another
	process.
	Even if red potential is split differently, it does not affect our conclusion.
	For example, suppose we split red potential such that it stays in the sender
	or it all goes to the recipient.
	Then the set of processes with red potential is a subset of what we will have
	when red potential is split evenly.
	When the symbolic execution terminates, if red potential is divided evenly,
	the set of processes with red potential is empty.
	Hence, even if we change the way red potential is split, by the end of the
	symbolic execution, red potential must be completely gone.

	In conclusion, regardless of how we split red potential, when symbolic
	execution successfully terminates, all red potential is gone.
\end{proof}

\begin{citedthm}[\ref{theorem:checking tightness of cost bounds}]
	If red and blue potential is tracked without encountering issues, then the
	cost bound is tight.
\end{citedthm}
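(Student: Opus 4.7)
The plan is to split the tightness condition into two parts, mirroring the informal discussion preceding the theorem. A cost bound is tight along the current execution path provided (i) red potential is entirely consumed and (ii) no unconsumed blue potential could have been substituted for red potential. I would handle these two parts separately and then assemble the final result.

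For part (i), I would simply invoke Proposition~\ref{prop:checking red potential}, which already guarantees that if the symbolic execution tracks red potential without flagging any issue, then every unit of red potential injected by the external world is eventually consumed by some $\m{tick}\;q$ with $q>0$. If even a single unit of red potential remained at program termination (or was thrown away or returned to the external world), we could drop that unit from the initial supply and still cover every cost, showing that the bound is not tight; its absence is exactly what we need.

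For part (ii), I would argue by contradiction. Suppose the bound is not tight. Then there exists a reassignment of potential that strictly lowers the total red injection while still paying every $\m{tick}\;q$ along the path; the displaced red consumptions must instead be paid by blue potential available at those consumption points in the happened-before order. Unpacking this reassignment, as in \Cref{fig:chain of the happened-before relation}, produces an alternating chain of happened-before edges that starts at an unconsumed $\textcolor{blue}{\text{blue}_i}$ and ends at a consumption of $\textcolor{red}{\text{red}}$. Each such edge, by construction, corresponds to an edge $(v_1, v_2)$ in the graph on $\colors$ maintained by the symbolic execution, so the chain is visible to the algorithm. By the graph-based check described before the theorem, the algorithm would then have flagged a directed path from an unconsumed $\textcolor{blue}{\text{blue}_i}$ to $\textcolor{red}{\text{red}}$, contradicting the hypothesis that no issue was encountered.

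The hard part will be the bookkeeping for (ii): to convert an abstract witness of non-tightness into a concrete alternating chain, I must show that a blue unit whose generation precedes (in $\to$) a red consumption can genuinely be routed to that consumption through the process network, respecting the splitting of potential across message boundaries. I would handle this by induction on the length of the alternating chain, using the transitivity clause of \Cref{def:happened-before relation} and the observation, already justified for red potential in Proposition~\ref{prop:checking red potential}, that the verdict of the tracking procedure is invariant under the particular strategy used to split potential among concurrent message transfers. The base case substitutes a single blue unit for a red unit and is valid because the happened-before ordering places the blue unit's generation strictly earlier than the red consumption; the inductive step chains such substitutions through the alternating path and uses transitivity to preserve the invariant that every $\m{tick}$ is still paid for.
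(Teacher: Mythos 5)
Your part (i) matches the paper: it likewise just invokes \cref{prop:checking red potential}. But your part (ii) inverts the paper's argument, and the inversion introduces a genuine gap. The paper proves the theorem \emph{directly}: assuming no issues are flagged, it explicitly constructs a schedule that witnesses tightness --- for each process $P_i$ it picks the moment $t_i$ at which $P_i$ has consumed all of its potential (including red), runs every process exactly up to its $t_i$, and argues that at that instant no potential remains in any process and none is captured by a message in transit, so the sum of the local net costs equals the total red potential injected, i.e.\ the bound is attained. Your proposal instead argues the contrapositive: ``not tight $\Rightarrow$ some issue is flagged.'' The very first step of that contradiction --- ``if the bound is not tight, then there exists a reassignment of potential that strictly lowers the total red injection while still paying every $\m{tick}\;q$'' --- is exactly the claim that semantic non-tightness can always be witnessed syntactically by a cheaper potential annotation. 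That is a completeness property of the potential method which nothing in the paper (or in your sketch) establishes; the soundness theorem only goes the other way. Your subsequent induction on the length of the alternating chain presupposes that this chain exists, but producing the chain from an abstract ``no schedule reaches the bound'' hypothesis is precisely the missing content, not bookkeeping.

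Relatedly, your argument never engages with scheduling, which is the heart of the concurrent case. Tightness here means some \emph{schedule} realizes a combined high-water mark equal to the bound; the difficulty is that the peaks of different processes must be made to coincide and that no potential may be stranded in messages in transit at that instant. The paper's choice of the stopping points $t_i$ and its observation that a message sent before $t_i$ and received before $t_j$ cannot carry leftover potential are what make the construction go through. To repair your proof you would either have to supply the missing ``non-tightness $\Rightarrow$ potential reassignment'' lemma, or --- more in line with the paper --- abandon the contradiction and directly exhibit the schedule at which the net cost reaches the total red potential.
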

\begin{proof}
	A cost bound is equal to the amount of (red) potential supplied by the
	external world to a SILL program.
	It follows from \cref{prop:checking red potential} that red potential is
	entirely consumed if the symbolic execution successfully terminates.
	Also, because the symbolic execution properly tracks blue potential, there
	should be no path from unconsumed blue potential to red potential in
	\cref{fig:chain of the happened-before relation}.

	To show the tightness of a cost bound, it suffices to explicitly construct a
	schedule of processes whose high-water mark is equal to the cost bound.
	Firstly, suppose we only have one process $P$.
	If red potential is consumed completely, then when the last red potential is
	consumed, the high-water mark of $P$ is equal to the total red potential
	supplied by the external world.
	This can be seen from \cref{fig:tight cost bounds} (b).

	Next, consider a non-trivial case where we have two processes: $P_1$ and
	$P_2$.
	Let $t_1$ be the moment in $P_1$'s timeline when it completely consumes all
	potential in \cref{fig:chain of the happened-before relation}, including red
	potential.
	Define $t_2$ similarly for $P_2$.
	We can then run $P_1$ and $P_2$ until they stop exactly at $t_1$ and $t_2$,
	respectively.
	For example, if $P_1$ sends a message and $P_2$ receives it before $t_2$, then
	$t_1$ must happen after the event of $P_1$ sending the message.
	Because potential is completely consumed at $t_2$, any potential generated
	before $t_2$, including potential on $P_1$ right before sending the message,
	must be gone before $P_1$ reaches $t_1$.
	Furthermore, the high-water mark when $t_1$ and $t_2$ are reached is equal to
	the total red potential supplied by the external world.
	No potential is captured by messages in transit; otherwise, it would
	contradict the assumption that all potential in \cref{fig:chain of the
		happened-before relation} is entirely consumed by the time $t_1$ and $t_2$.
	Thus, all potential in \cref{fig:chain of the happened-before relation},
	including red potential, should have been consumed by $P_1$ and $P_2$ when
	they reach $t_1$ and $t_2$.
	Therefore, the net cost at this point is equal to the total red potential
	supplied by the external world.

	This reasoning can be generalized to more than two processes.
\end{proof}

\subsection{Soundness and Relative Completeness}
\label{appendix:soundness and relative completeness}

\begin{defi}[Similarity between configurations]
	\label{def:similarity between configurations}
	Fix $S$ to be a solution to a final path constraint generated once the
	symbolic execution terminates.
	Let $t$ be an input (i.e., a multiset of predicates $\msg{c, M}$) induced by
	the solution $S$ to the final path constraint.
	Consider some configuration $C_{\text{sym}}$ during the symbolic execution and
	a configuration $C_{\text{cost}}$ for the cost semantics.
	The similarity relation $C_{\text{sym}} \sim C_{\text{cost}}$ holds if and
	only if there is an injection from the multiset $t \cup C_{\text{sym}}$ of
	predicates to the multiset $C_{\text{cost}}$ of predicates such that each pair
	in the injection satisfies the similarity relation (\cref{def:similarity
		between predicates}).
\end{defi}

All proofs related to the soundness and relative completeness of worst-case
input generation are presented in this section.

\begin{citedthm}[\ref{prop:simulation for soundness}]
	Suppose we are given three configurations: $C_{1, \text{sym}}$, $C_{2,
				\text{sym}}$, and $C_{1, \text{cost}}$.
	The first two configurations are used in the symbolic execution, and the last
	one is used in the cost semantics.
	These configurations satisfy two conditions: (i) $C_{1, \text{sym}}$
	transitions to $C_{2, \text{sym}}$ in one step of the symbolic execution and
	(ii) $C_{1, \text{sym}} \sim C_{1, \text{cost}}$ holds.
	Then there exists a configuration $C_{2, \text{cost}}$ of the cost semantics
	such that the following diagram commutes:
	\begin{equation}
		\xymatrix{
		C_{1, \text{sym}} \ar[r]_{w} & C_{2, \text{sym}} \\
		C_{1, \text{cost}} \ar@{~}[u] \ar[r]_{w}^>{\leq 1} & C_{2, \text{cost}} \ar@{~}[u]
		}
	\end{equation}
	In this diagram, $C_{1, \text{sym}} \xrightarrow[w]{} C_{2, \text{sym}}$ means
	$\costboundconfig{C_{1, \text{sym}}} - \costboundconfig{C_{2, \text{sym}}} =
		w$, where $\costboundconfig{\cdot}$ denotes a cost bound of a configuration
	(see \cref{def:cost bounds of configurations}).
	Likewise, $C_{1, \text{cost}} \xrightarrow[w]{} C_{2, \text{cost}}$ means
	$C_{1, \text{cost}}$ transitions to $C_{2, \text{cost}}$ such that the net
	cost increases by $w$.
	The arrow $\rightarrow^{\leq 1}$ means the number of steps is either zero or
	one.
\end{citedthm}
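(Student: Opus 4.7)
The plan is to proceed by case analysis on the symbolic-execution rule that derives $C_{1,\text{sym}} \to C_{2,\text{sym}}$. For each rule I would use the similarity $C_{1,\text{sym}} \sim C_{1,\text{cost}}$, together with the fixed solution $S$ and its induced input multiset $t$, to locate the predicate(s) of $C_{1,\text{cost}}$ corresponding to the rewritten predicate(s) in the symbolic step; apply the analogous cost-semantics rule at most once to obtain a candidate $C_{2,\text{cost}}$; and then discharge two obligations, namely that $C_{2,\text{sym}} \sim C_{2,\text{cost}}$ still holds and that $\costboundconfig{C_{1,\text{sym}}} - \costboundconfig{C_{2,\text{sym}}}$ equals the increment in net cost produced by the cost-semantics step.

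I would organize the rules into three families. First, the communication and control rules---${\supset}$, ${\land}$, ${\multimap}$, ${\otimes}$, $\&$, ${\oplus}$, $\mathbf{1}$, $\m{spawn}$, $\m{fwd}$---each have a unique cost-semantics counterpart that moves no potential and adds no net cost, giving $w = 0$ on both sides of the diagram. The only subtlety here is the external-choice rule $\& R_{\text{external}}$: the branch $k$ chosen symbolically must match the label carried by the message drawn from $t$ under $S$, which is exactly what the augmented constraint $\phi \land (x = k)$ enforces. Second, the internal potential-transfer rules (${\triangleleft} L$, ${\triangleleft} R$, ${\triangleright} L$, ${\triangleright} R$) have cost-semantics counterparts that shuffle potential between processes; the total local $p$ is preserved and the compensating changes in $\costboundright$ and $\costboundleft$ on sender and receiver leave $\costboundconfig{\cdot}$ invariant, so again $w = 0$. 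Third, the external potential-transfer rules (${\triangleleft} L_{\text{external}}$, ${\triangleleft} R_{\text{external}}$ and their $\triangleright$ duals) have no cost-semantics counterpart, so I would take $C_{2,\text{cost}} = C_{1,\text{cost}}$; the change in the local $p$ of the process is exactly offset by the change in $\costboundleft$ or $\costboundright$ of the external skeleton, again giving $w = 0$.

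The only rules producing a non-zero $w$ are the two $\m{tick}$ rules, and they motivate the shape of the diagram. For $\m{tick}_{>0}$ with argument $q > 0$, the process's stored potential drops from $p+q$ to $p$ with no skeleton changing, so $\costboundconfig{\cdot}$ decreases by $q$; the cost-semantics rule $\m{tick}$ simultaneously increases the net cost by $q$, yielding $w = q$. The case $\m{tick}_{<0}$ is symmetric, giving $w = -q$. The side effects on $\ids{s}$ and $\ids{p}$ are orthogonal to $\costboundconfig{\cdot}$ and do not affect the diagram.

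The main obstacle, and where most of the care will go, is the $\costboundconfig{\cdot}$ bookkeeping under the channel-passing rules ${\multimap}$ and ${\otimes}$ and the forwarding rule $\m{fwd}$. Sending a channel splits a skeleton into an input summand (evaluated with $\costboundright$) and an output summand (evaluated with $\costboundleft$), and it also redistributes the partition of external channels into those ``provided by a process'' versus those ``provided by the external world'' used in \cref{def:cost bounds of configurations}. Verifying that this partition is preserved and that the pre- and post-step sums in $\costboundconfig{\cdot}$ agree is a routine but tedious arithmetic exercise that must be carried out for every such constructor before the untouched predicates of $C_{1,\text{sym}}$ can be matched by the identity to predicates of $C_{2,\text{cost}}$.
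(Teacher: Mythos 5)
Your overall strategy---case analysis on the rewriting rules of the symbolic execution, matching each rule to at most one cost-semantics rule via the similarity relation and the input multiset $t$, and checking that the drop in $\costboundconfig{\cdot}$ equals the increment in net cost---is exactly the approach the paper takes, and your treatment of the $\m{tick}$ rules and of the external potential-transfer rules (zero cost-semantics steps) matches the intended argument. In most respects your write-up is more detailed than the paper's own proof, which consists of little more than the phrase ``by case analysis.''

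However, there is one concrete error in your case analysis, and it happens to be the single subtlety the paper's proof does bother to flag. You place $\mathbf{1}$ (termination) and $\m{fwd}$ (forwarding) in the family of rules with $w=0$ on both sides. This fails: the symbolic rules $\mathbf{1} S$ and $\m{fwd}_{s}$ only require $\textcolor{red}{\text{red}} \notin \ids{p}$, so a process may terminate or forward while still holding residual \emph{blue} potential $p>0$. That potential is silently discarded when the process predicate is replaced by a message predicate carrying no potential, so $\costboundconfig{C_{1,\text{sym}}} - \costboundconfig{C_{2,\text{sym}}} = p > 0$, while the corresponding cost-semantics step leaves the net cost unchanged---the two labels on the arrows of the diagram disagree. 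The paper acknowledges that the commutative diagram is ``not quite correct'' for precisely these rules and repairs it by conceptually saving (rather than discarding) the leftover potential before termination and forwarding; your proof needs either this patch or a weakening of the diagram to an inequality in these cases.
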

\begin{proof}
	By case analysis on the rewriting rules of the symbolic execution.
	Strictly speaking, in the rules for termination and forwarding in symbolic
	execution, potential may be discarded.
	Therefore, the above commutative diagram is not quite correct: after one
	transition step in both the symbolic execution and cost semantics, the
	potential may decrease by $w$, while the net cost stays the same.
	However, this can be fixed by saving all potential, including the one that is
	actually discarded before termination and forwarding in the symbolic
	execution.
\end{proof}

\begin{prop}[Simulation for completeness]
	\label{prop:simulation for completeness}
	Suppose we are given three configurations: $C_{1, \text{cost}}$, $C_{2,
				\text{cost}}$, and $C_{1, \text{sym}}$.
	The first two configurations are used in the cost semantics, and the last one
	is used in the symbolic execution.
	These configurations satisfy two conditions: (i) $C_{1, \text{cost}}$
	transitions to $C_{2, \text{cost}}$ in one step of cost semantics and (ii)
	$C_{1, \text{sym}} \sim C_{1, \text{cost}}$ holds.
	Then there exists a configuration $C_{2, \text{sym}}$ of the symbolic
	execution such that the following diagram commutes:
	\begin{equation}
		\xymatrix{
		C_{1, \text{sym}} \ar[r]_{w}^>{\geq 1} & C_{2, \text{sym}} \\
		C_{1, \text{cost}} \ar@{~}[u] \ar[r]_{w} & C_{2, \text{cost}} \ar@{~}[u]
		}
	\end{equation}
	In this diagram, $C_{1, \text{cost}} \xrightarrow[w]{} C_{2, \text{cost}}$
	means $C_{1, \text{cost}}$ transitions to $C_{2, \text{cost}}$ such that the
	net cost increases by $w$.
	The arrow $\rightarrow^{\geq 1}$ means the number of steps is at least one.
	Likewise, $C_{1, \text{sym}} \xrightarrow[w]{} C_{2, \text{sym}}$ means
	$\costboundconfig{C_{1, \text{sym}}} - \costboundconfig{C_{2, \text{sym}}} =
		w$.
\end{prop}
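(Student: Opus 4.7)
The plan is to proceed by case analysis on the rewriting rule of the cost semantics that drives the step $C_{1,\text{cost}} \to C_{2,\text{cost}}$, mirroring the proof of \cref{prop:simulation for soundness} but in the opposite direction. For each cost-semantics rule (the ones from \cref{fig:some rules of the cost semantics of SILL} and \cref{fig:remaining rules of the cost semantics of SILL}, e.g.\ $\m{spawn}$, ${\supset}S$, ${\supset}R$, $\m{tick}$, $\mathbf{1}S$, $\mathbf{1}R$, $\m{fwd}$), I will exhibit a finite sequence of symbolic-execution steps starting from $C_{1,\text{sym}}$ that lands in a configuration $C_{2,\text{sym}}$ with $C_{2,\text{sym}} \sim C_{2,\text{cost}}$ and with cost-bound decrement equal to the net-cost increment $w$. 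Similarity $C_{1,\text{sym}} \sim C_{1,\text{cost}}$, together with the fixed solution $S$, ensures that the symbolic redex corresponding to the cost-semantics redex is present in $C_{1,\text{sym}}$.

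The reason the symbolic side may take strictly more than one step is that resource-aware SILL has rules whose sole purpose is to move potential---${\triangleleft}L_{\text{external}}$, ${\triangleright}R_{\text{external}}$, and their internal-channel analogues, as well as $\m{pay}/\m{get}$ pairs that appear in the typing derivation but not in the cost semantics. Whenever the next cost-semantics rule is, say, ${\supset}R$, I first fire any pending potential-transfer steps in the symbolic configuration that are needed to re-align the tracking components $(\ids{s},\ids{p})$ across the two sides, and then fire the symbolic counterpart of ${\supset}R$ itself. Because potential transfer preserves $\costboundconfig{\cdot}$ (total potential is moved, not created or destroyed), the cumulative symbolic cost-bound decrement over this block equals the single $w$ observed in the cost semantics. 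For the two $\m{tick}$ cases specifically, $\m{tick}_{>0}$ consumes $q$ units of stored potential, decreasing $\costboundconfig{\cdot}$ by $q$ and matching the $+q$ net-cost change of the cost-semantics $\m{tick}$; while $\m{tick}_{<0}$ generates $|q|$ units of blue potential, decreasing the stored potential budget by $-|q|$ and matching the $-|q|$ change in net cost, so the signs line up in both subcases.

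For the non-deterministic symbolic rules $\&R_{\text{external}}$ and $\oplus L_{\text{external}}$ (which in the algorithm prefer the branch with the highest cost bound), I do not need to match the cost semantics on the first try: the statement is existential in $C_{2,\text{sym}}$, so it suffices to choose, from among the branches the symbolic rule admits, the one that the cost semantics actually takes. Since $C_{1,\text{sym}} \sim C_{1,\text{cost}}$ fixes the interpretation $S$ of all skeleton variables including the subscript $x$ that records the external world's choice, the branch taken by the cost semantics is already consistent with extending $\phi$ by $x = k$ for the appropriate $k$, and I proceed with that extension.

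The main obstacle is showing that the inserted potential-transfer block in the symbolic execution does not disturb similarity. Similarity (\cref{def:similarity between predicates}) compares process terms modulo $S$ and ignores the components $(\ids{s},\ids{p})$ and $\phi$, so I must argue that every potential-transfer rule leaves process and message terms unchanged up to alpha-renaming of the freshly introduced channel $c'$, and that the block consumes no redex needed later by the cost semantics. A routine but careful enumeration of all cost-semantics rules, each paired with the minimal symbolic-execution block that realises it, together with the invariant that $\costboundconfig{\cdot}$ plus current net cost is conserved by potential-transfer steps, will complete the argument.
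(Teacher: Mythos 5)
Your proposal follows essentially the same route as the paper: a case analysis on the rewriting rules of the cost semantics, matching each cost-semantics step with a (possibly longer) block of symbolic-execution steps and checking that the change in $\costboundconfig{\cdot}$ equals the net-cost increment $w$. The extra detail you supply about the $\m{tick}$ sign cases and about resolving the non-deterministic external-choice rules via the fixed solution $S$ is consistent with what the paper intends.

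One point you get wrong, and which the paper's proof explicitly flags: your claim that the inserted blocks preserve $\costboundconfig{\cdot}$ because ``total potential is moved, not created or destroyed'' fails for the termination and forwarding rules. In the symbolic rules $\mathbf{1}S$ and $\m{fwd}_{s}$, a process carrying residual potential $p$ is replaced by a message predicate whose potential component is empty, so $p$ units are silently discarded; the cost bound then drops by $p$ while the corresponding cost-semantics step leaves the net cost unchanged, breaking the commutative diagram as stated. The paper patches this by stipulating that the discarded potential is saved rather than thrown away (the same device used in the proof of the soundness simulation). Your argument needs the same patch, or at least an explicit restriction of the ``conservation'' invariant to exclude the discard at termination and forwarding.
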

\begin{proof}
	By case analysis on the rewriting rules of the cost semantics (\cref{fig:some
		rules of the cost semantics of SILL}).
	In the symbolic execution, when processes terminate or forward, the processes
	are sometimes allowed to throw away potential.
	As a result, this breaks the above commutative diagram because potential may
	decreases while the net cost stays the same.
	To work around this issue, as done in the proof of \Cref{prop:simulation for
		soundness}, we save potential somewhere instead of throwing it away.
\end{proof}

\subsection{Symbolic Execution}
\label{appendix:symbolic execution}

Some of the key rules for the symbolic execution are already presented in
\cref{fig:key rules in the symbolic execution for the process layer}.
The remaining key rules are given in \cref{fig:remaining key rules in the
	symbolic execution for the process layer part 1,fig:remaining key rules in the
	symbolic execution for the process layer part 2}.
\Cref{fig:key rules in the symbolic execution for the process
	layer,fig:remaining key rules in the symbolic execution for the process layer
	part 1,fig:remaining key rules in the symbolic execution for the process layer
	part 2} cover half of all rules.
The other half is just the dual of the three figures in this article; hence, it
is omitted.

In the symbolic execution for the process layer, we need to transfer potential.
Hence, we augment the grammar of message $M$ (\cref{appendix:cost semantics and
	the type system}) as follows:
\begin{align*}
	M \Coloneq {} & \cdots \mid \m{pay} \; c \; \{q\}; c \leftarrow c' \mid \m{pay} \; c \; \{q\}; c' \leftarrow c.
\end{align*}
Here, $q \in \bbQ_{> 0}$ denotes the quantity of potential to be transferred.

Skeleton variables of functional types are added to path constraints during the
functional layer's symbolic execution.
For instance, suppose a process's code contains $\m{if} \; b \; \m{then} \; e_1
	\; \m{else} \; e_2$.
During the symbolic execution, if we choose to explore the first branch, we add
$b$ to a path constraint.
As the symbolic execution for the functional layer is already presented in a
prior work~\cite{Wang2019}, this article omits it.

\begin{figure}[t]
	\begin{small}
		\begin{mathpar}
			\inferrule
			{\proc{\Delta_1, \Delta_2; p + q \vdash (c \leftarrow e \leftarrow
					\overline{c_i}; Q_{c}) :: (d: D), \phi_2, (\ids{s}, \ids{p})} \\ e
				\Downarrow \tuple{\phi_1, x \leftarrow P_{x, \overline{x_i}}
					\leftarrow \overline{x_i}} \\ c' \text{ is fresh}} {\proc{\Delta_1; p
					\vdash P_{c',\overline{ c_{i}}} :: (c' : A), \phi_1, (\ids{s},
					\cternary{p = 0}{\emptyset}{\ids{p}}) } \\ \proc{\Delta_2, c': A; q
					\vdash Q_{c'} :: (d: D), \phi_2, (\ids{s}, \cternary{q =
						0}{\emptyset}{\ids{p}} ) }}
			\m{spawn}
			\and
			\inferrule
			{\msg{c', \m{send} \; c \; v; c' \leftarrow c, \phi_1, (\ids{s, 1},
					\emptyset)} \\ \proc{\Delta; p \vdash x \leftarrow \m{recv} \; c;
					P_{x} :: (c: b \supset A), \phi_2, (\ids{s, 2}, \ids{p, 2})}}
			{\proc{\Delta; p \vdash P_{v} [c' / c] :: (c: A), \phi_1 \land \phi_2,
					(\ids{s, 1} \cup \ids{s, 2}, \ids{p, 2}) }} {\supset}
			R_{\text{internal}}
			\and
			\inferrule
			{\proc{\Delta; p \vdash x \leftarrow \m{recv} \; c; P_{x} :: (c: H \supset
					K), \phi, (\ids{s}, \ids{p})}} {\proc{\Delta; p \vdash P_{H} [c' / c]
					:: (c: K), \phi, (\ids{s}, \ids{p})}} {\supset} R_{\text{external}}
			\and
			\inferrule
			{\proc{\Delta, c_1: A_1 \multimap A_2, c_2: A_1; p \vdash \m{send} \; c_1
					\; c_2; P :: (d: D), \phi, (\ids{s}, \ids{p}) } \\\\ c_1' \text{ is
					fresh}} {\proc{\Delta, c_1': A_2; p \vdash P [c_1' / c_1] :: (d: D),
					\phi, (\ids{s}, \ids{p})} \\\\ \msg{c_1', \m{send} \; c_1 \; c_2; c_1'
					\leftarrow c_1, \top, (\ids{s}, \emptyset)}} {\multimap} S
			\and
			\inferrule
			{\msg{c_1', \m{send} \; c_1 \; c_2; c_1' \leftarrow c_1, \top, (\ids{s,
						1}, \emptyset)} \\ \proc{\Delta; p \vdash x \leftarrow \m{recv} \;
					c_1; P_{x} :: (c_1 :: A_1 \multimap A_2), \phi, (\ids{s, 2}, \ids{p,
						2})}} {\proc{\Delta; p \vdash P_{c_2} [c_1' / c_1] :: (c_1': A_2),
					\phi, (\ids{s, 1} \cup \ids{s, 2}, \ids{p, 2})}} {\multimap}
			R_{\text{internal}}
			\and
			\inferrule
			{\proc{\Delta; p \vdash x \leftarrow \m{recv} \; c_1; P_{x} :: (c_1 :: A_1
					\multimap A_2), \phi, (\ids{s}, \ids{p})} \\ c_1' \text{ is fresh}}
			{\proc{\Delta; p \vdash P_{c_2} [c_1' / c_1] :: (c_1': A_2), \phi,
					(\ids{s}, \ids{p})}} {\multimap} R_{\text{external}}
			\and
			\inferrule
			{\proc{\Delta, c: \& \{\ell_i : A_i \mid i \in N\}; p \vdash c.\ell_{k}; P
					:: (d: D), \phi, (\ids{s}, \ids{p})} \\ c' \text{ is fresh}}
			{\proc{\Delta, c': A_k; p \vdash P [c' / c] :: (d: D), \phi, (\ids{s},
					\ids{p})} \\ \msg{c', c.\ell_{k}; c' \leftarrow c, \top, (\ids{s},
					\emptyset)}} {\&} S
			\and
			\inferrule
			{\msg{c', c. \ell_{k}; c' \leftarrow c, \top, (\ids{s, 1}, \emptyset)} \\
				\proc{\Delta; p \vdash \m{case} \; c \; \{\ell_{i} \hookrightarrow P_{i}
					\mid i \in N\} :: (c: \& \{\ell_i: A_i \mid i \in N \}), \phi,
					(\ids{s, 2}, \ids{p, 2})}} {\proc{\Delta; p \vdash P_{k} [c' / c] ::
					(c': A_k), \phi, (\ids{s, 1} \cup \ids{s, 2}, \ids{p, 2})}} {\&}
			R_{\text{internal}}
		\end{mathpar}
	\end{small}
	\caption{Remaining key rules in the symbolic execution for the process layer
		(part 1). The judgment $e \Downarrow \tuple{\phi, v}$ means $e$ evaluates to
		a (symbolic) value $v$ with a path constraint $\phi$.}
	\label{fig:remaining key rules in the symbolic execution for the process layer part 1}
\end{figure}

\begin{figure}[t]
	\begin{small}
		\begin{mathpar}
			\inferrule
			{\proc{c_2: A; p \vdash c_1 \leftarrow c_2 :: (c_1: A), \phi, (\ids{s},
					\ids{p})} \\ \textcolor{red}{\text{red}} \notin \ids{p}} {\msg{c_1,
					c_1 \leftarrow c_2, \phi}} \m{fwd}_{s}
			\and
			\inferrule
			{\proc{\Delta; p \vdash P :: (c_2: A), \phi_1, (\ids{s, 1}, \ids{p, 1})}
				\\ \msg{c_1, c_1 \leftarrow c_2, \phi_2, (\ids{s, 2}, \emptyset)}}
			{\proc{\Delta; p \vdash P [c_1 / c_2] :: (c_1: A), \phi_1 \land \phi_2,
					(\ids{s, 1} \cup \ids{s, 2}, \ids{p, 1})}} \m{fwd}_{r}^{+}
			\and
			\inferrule
			{\msg{c_1, c_1 \leftarrow c_2, \phi_1, (\ids{s, 1}, \emptyset)} \\
				\proc{\Delta, c_1: A; p \vdash P :: (d: D), \phi_2, (\ids{s, 2}, \ids{p,
						2})}} {\proc{\Delta, c_2: A; p \vdash P :: (d: D), \phi_1 \land
					\phi_2, (\ids{s, 1} \cup \ids{s, 2}, \ids{p, 2})}} \m{fwd}_{r}^{-}
			\and
			\inferrule
			{\msg{c, \m{close} \; c, \phi_1, (\ids{s, 1}, \emptyset)} \\ \proc{\Delta,
					c: \mathbf{1}; p \vdash \m{wait} \; c; P :: (d: D), \phi_2, (\ids{s,
						2}, \ids{p, 2})}} {\proc{\Delta; p \vdash P :: (d: D), \phi_1 \land
					\phi_2, (\ids{s, 1} \cup \ids{s, 2}, \ids{p})}} \mathbf{1}
			R_{\text{internal}}
			\and
			\inferrule
			{\proc{\Delta, c: \mathbf{1}; p \vdash \m{wait} \; c; P :: (d: D), \phi_2,
					(\ids{s}, \ids{p})}} {\proc{\Delta; p \vdash P :: (d: D), \phi_1 \land
					\phi_2, (\ids{s}, \ids{p})}} \mathbf{1} R_{\text{external}}
			\and
			\inferrule
			{\proc{\Delta, c: \triangleleft^{q} A; p+q \vdash \m{pay} \; c \; \{q\}; P
					:: (d: B), \phi, (\ids{s}, \ids{p})} \\ c' \text{ is fresh}}
			{\proc{\Delta, c': A; p \vdash P [c' / c] :: (d: B), \phi, (\ids{s},
					\cternary{p = 0}{\emptyset}{\ids{p}} ) } \\ \msg{c', \m{pay} \; c \;
					\{p\}; c' \leftarrow c, (\ids{s}, \ids{p})}} {\triangleleft}
			L_{\text{internal}}
			\and
			\inferrule
			{\proc{\Delta; q \vdash \m{get} \; c \; \{p\}; P :: (c: \triangleleft^{p}
					A), \phi, (\ids{s, 1}, \ids{p, 1})} \\ \msg{c', \m{pay} \; c \; \{p\};
					c \leftarrow c', (\ids{s, 2}, \ids{p, 2})}} {\proc{\Delta; p+q \vdash
					P [c' / c] :: (c': A), \phi, (\ids{s, 1} \cup \ids{s, 2}, \ids{p, 1}
					\cup \ids{p, 2}) }} {\triangleleft} R_{\text{internal}}
		\end{mathpar}
	\end{small}
	\caption{Remaining key rules in the symbolic execution for the process layer (part 2).}
	\label{fig:remaining key rules in the symbolic execution for the process layer part 2}
\end{figure}


\section{Case Study: Server and Browsers}
\label{appendix:case study}

The implementation of a server with independent sessions (\cref{sec:independent
	sessions in the case study}) is
\begin{equation}
	d_1 \leftarrow P \leftarrow \cdot; \m{send} \; c \; d_1; d_2 \leftarrow P \leftarrow \cdot; \m{send} \; c \; d_2; \m{close} \; c,
\end{equation}
where the process $d \leftarrow P \leftarrow \cdot$ is implemented as
\begin{equation}
	\label{eq:implementation of the server where channels are created by the server}
	\begin{split}
		P \coloneq {} & x \leftarrow \m{recv} \; d; \m{tick} \; 1; \m{send} \; d \; \tuple{x+1, y};             \\
		              & \m{case} \; d \; \{ \m{ack} \hookrightarrow z \leftarrow \m{recv} \; d;                 \\
		              & \qquad \qquad \qquad \;\m{if} \; (z = 1 + y) \; \m{then}                                \\
		              & \qquad \qquad \qquad \quad d.\m{success}; \m{tick} \; 1; \m{tick} \; -2; \m{close} \; d \\
		              & \qquad \qquad \qquad \; \m{else}                                                        \\
		              & \qquad \qquad \qquad \quad d.\m{failure}; \m{close} \; d,                               \\
		              & \qquad \quad \; \m{timeout} \hookrightarrow \m{tick} \; -1; \m{close} \; d \}.
	\end{split}
\end{equation}

Integers $x$ and $y$ are called sequence numbers and are stored in the server.
This is why we have $\m{tick} \; 1$.
Additionally, once the handshake is completed successfully, we run $\m{tick} \;
	1$ because we assume that one memory cell is required for the subsequent
communication phase.
Lastly, before a channel is closed, we free up all memory.

With a sequential scheduler, $P$ is implemented as
\begin{equation}
	\label{eq:implementation of the server where channels are created by the external world}
	\begin{split}
		P \coloneq {} & x_1 \leftarrow \m{recv} \; d_1; \m{tick} \; 1; \m{send} \; d_1 \; \tuple{x_1 + 1, y_1};                                                      \\
		              & \m{case} \; d_1 \; \{\m{ack} \hookrightarrow \ldots; x_2 \leftarrow \m{recv} \; d_2; \m{tick} \; 1; \m{send} \; d_2 \; \tuple{x_2 + 1, y_2}; \\
		              & \qquad \qquad \qquad \quad \m{case} \; d_2 \; \{ \m{ack} \hookrightarrow \ldots, \m{timeout} \hookrightarrow \ldots \},                      \\
		              & \qquad \qquad \m{timeout} \hookrightarrow \ldots \}.
	\end{split}
\end{equation}

\Cref{sec:coordinating sessions with a scheduler in the case study} studies a
web server capable of scheduling sessions.
With a round-robin scheduler, the server is
\begin{equation}
	d_1 \leftarrow \m{recv} \; c; d_2 \leftarrow \m{recv} \; c; c \leftarrow P \leftarrow d_1, d_2,
\end{equation}
where the process $c \leftarrow P \leftarrow d_1, d_2$ is implemented as
\begin{equation}
	\begin{split}
		P \coloneq {} & x_1 \leftarrow \m{recv} \; d_1; x_2 \leftarrow \m{recv} \; d_2;                                                                                \\
		              & \m{tick} \; 1; \m{tick} \; 1; \m{send} \; d_1 \; \tuple{x_1 + 1, y_1}; \m{send} \; d_2 \; \tuple{x_2 + 1, y_2}                                 \\
		              & \m{case} \; d_1 \; \{\m{ack} \hookrightarrow \m{case} \; d_2 \; \{ \ldots \}, \m{timeout} \hookrightarrow \m{case} \; d_2 \; \{ \ldots \} \}.
	\end{split}
\end{equation}

With the round-robin scheduler, resource-annotated $A$ is
\begin{equation}
	A_{\text{anno}} \coloneq \integer \land \textcolor{red}{\triangleright^{1}} (\integer \times \integer) \supset \oplus \{ \m{ack}: \integer \land \& \{\m{success}: \textcolor{red}{\triangleright^{1}} \mathbf{1}, \m{failure}: \mathbf{1} \}, \m{timeout}: \mathbf{1} \}.
\end{equation}
Hence, the overall cost bound according to this resource-annotated session type
is 4 (i.e., 2 for each instance of $A$).
This is identical to the cost bound from \cref{sec:independent sessions in the
	case study}, and the worst-case input generation algorithm generates the same
worst-case input.

\end{document}